\DeclarePairedDelimiterX\braket[2]{\langle}{\rangle}{#1\,\delimsize\vert\,\mathopen{}#2}
\newtheorem{theorem}{Theorem}[section]
\newtheorem{remark}{Remark}[section]
\newtheorem{fact}[theorem]{Fact}
\newtheorem{lemma}[theorem]{Lemma}
\newtheorem{corollary}[theorem]{Corollary}
\theoremstyle{definition}
\newtheorem{definition}[theorem]{Definition}
\newtheorem{claim}[theorem]{Claim}
\newtheorem{observation}[theorem]{Observation}
\theoremstyle{remark}
\numberwithin{equation}{section}
\title{Exponential Lower Bounds on the Size of ResLin Proofs of Nearly Quadratic Depth}
\author{
Sreejata Kishor Bhattacharya
\thanks{TIFR, Mumbai. Email: {\tt sreejata.bhattacharya@tifr.res.in}. Supported by the Department of Atomic Energy, Govmt. of India, under project \#RTI4001 and by a Google PhD Fellowship.}
\and
Arkadev Chattopadhyay
\thanks{TIFR, Mumbai. Email: {\tt arkadev.c@tifr.res.ints}. Supported by the Department of Atomic Energy, Govmt. of India, under project \#RTI4001 and by a Google India Faculty Award.}}
\begin{document}
\maketitle
\setcounter{page}{1}

\noindent {\small }\hfill     {\small  }\\
{\small }\hfill  {\small }






\newcommand{\cube}{ \{ \pm 1\} }
\newcommand{ \infl}{ \mathsf{Inf}}
\newcommand{ \prob} {\mathsf{Pr}}
\newcommand{ \var} {\mathsf{Var}}
\newcommand{\E}{\mathsf{E}}
\newcommand {\poly}{\mathsf{poly}}
\newcommand{\Clh}{\tilde{\text{Cl}}}
\newcommand{\Cl}{\text{Cl}}
\newcommand{\F}{\mathbb{F}_2}
\newcommand{\IP}{\text{IP}}
\newcommand{\reslin}{\text{Res}(\oplus)}

\newcommand{\DTFooler}{DTFooling}

\newcommand{\ac}[1]{{\color{red} [{\bf AC:} #1]}}
\newcommand{\skb}[1]{{\color{blue} [{\bf SB:} #1]}}

\begin{abstract}

Itsykson and Sokolov \cite{IS14} identified resolution over parities, denoted by $\text{Res}(\oplus)$, as a natural and simple fragment of $\text{AC}^0[2]$-Frege for which no super-polynomial lower bounds on size of proofs are known. Building on a recent line of work (\cite{EGI24}, \cite{BCD24}, \cite{AI25}), Efremenko and Itsykson \cite{EI25} proved lower bounds of the form $\text{exp}(N^{\Omega(1)})$, on the size of $\text{Res}(\oplus)$ proofs whose depth is upper bounded by $O(N\log N)$, where $N$ is the number of variables of the unsatisfiable CNF formula. The hard formula they used was Tseitin on an appropriately expanding graph, lifted by a $2$-stifling gadget. They posed the natural problem of proving super-polynomial lower bounds on the size of proofs that are $\Omega(N^{1+\epsilon})$ deep, for any constant $\epsilon > 0$.

We prove the first such lower bounds. In fact, we show that $\text{Res}(\oplus)$ refutations of Tseitin formulas on constant-degree expanders on $m$ vertices, lifted with Inner-Product gadget of size $O(\log m)$, must have size $\text{exp}(\tilde{\Omega}(N^{\epsilon}))$, as long as the depth of the $\text{Res}(\oplus)$ proofs are $O(N^{2-\epsilon})$, for every $\epsilon > 0$. Here $N=\Theta(m\log m)$ is the number of variables of the lifted formula. 

More generally, we prove the following lifting theorem that requires a gadget $g$ to have sufficiently small correlation with all parities: we introduce a notion of hardness against ordinary decision trees that we call $(p,q)$-DT hardness. We show that if a formula $\Phi$ is $(p,q)$-DT hard, then every $\text{Res}(\oplus)$ refutation of size $s$ of the lifted formula $\Phi \circ g$ must be $\Omega\big(pq/\log(s)\big)$-deep. Our concrete lower bound is obtained by showing that Tseitin formulas on constant degree-expanders are $\big(\Omega(m),\Omega(m))$-hard.

An important ingredient in our work is to show that arbitrary distributions \emph{lifted} with such gadgets fool \emph{safe} affine spaces, an idea which originates in the earlier work of Bhattacharya, Chattopadhyay and Dvorak \cite{BCD24}.

\end{abstract}

\thispagestyle{empty}
\clearpage
\addtocounter{page}{-1}
\newpage

\tableofcontents
\thispagestyle{empty}
\clearpage
\addtocounter{page}{-1}

\section{Introduction}

One of the simplest proof systems in propositional proof complexity is Resolution. Haken \cite{Haken85} obtained the first super-polynomial lower bounds on the size of proofs in this system for a CNF encoding of the pigeon-hole-principle forty years ago. Since then it has been very well studied with many beautiful results (see for example \cite{U87}, \cite{BSW01}, \cite{ABRW04}). Yet, seemingly slight strengthenings of resolution seem to frustrate current techniques in obtaining non-trivial lower bounds. We will consider one such strengthening, that was introduced by Itsykson and Sokolov \cite{IS14}, about ten years ago. This system is called resolution over parities, denoted by $\text{Res}(\oplus)$. It augments resolution by allowing the prover to make $\mathbb{F}_2$-linear inferences, while working with $\mathbb{F}_2$-linear clauses. Proving superpolynomial lower bounds for $\text{Res}(\oplus)$ remains a challenge. It is easy to see that $\text{Res}(\oplus)$ is a subsystem of $AC^0[2]$-Frege. While we know strong lower bounds for $AC^0$-Frege (see for example \cite{BIKPPW92}), obtaining super-polynomial lower bounds for $AC^0[2]$-Frege for any unsatisfiable formula in CNF would be a major breakthrough (see for example \cite{MP97}). Thus, $\text{Res}(\oplus)$ is in some sense the weakest natural subfragment of $AC^0[2]$-Frege for which proving strong lower bounds has remained challenging.  It is also natural to hope that a successful resolution of this challenge would give rise to new techniques and insight for taking on $\text{AC}^0[2]$-Frege. 
\newline

Itsykson and Sokolov proved exponential lower bounds on the size of tree-like $\text{Res}(\oplus)$ proofs using customized arguments for some formulas. General techniques for tree-like $\text{Res}(\oplus)$ proofs were developed in the independent works of Beame and Koroth \cite{BK23} and Chattopadhyay, Mande, Sanyal and Sherif \cite{CMSS23} that lifted lower bounds on the height of ordinary tree-like resolution proofs of a formula to that of the size of tree-like $\text{Res}(\oplus)$ proofs of the same formula lifted with an appropriate gadget. 
A more recent line of work (\cite{EGI24}, \cite{BCD24}, \cite{AI25}, \cite{EI25}) has focused on proving lower bounds against subsystems of $\text{Res}(\oplus)$ that are stronger than tree-like but weaker than general $\text{Res}(\oplus)$. Gryaznov, Pudlak and Talebanfard \cite{GPT22} had proposed several notions of regular proofs for the $\text{Res}(\oplus)$ system as appropriate first target for proving lower bounds. Efremenko, Garlik and Itsykson \cite{EGI24} established lower bounds against such a subsystem of $\text{Res}(\oplus)$ known as bottom-regular $\text{Res}(\oplus)$. Bhattacharya, Chattopadhyay and Dvorak \cite{BCD24} exhibited a CNF which is easy for resolution but hard for bottom-regular $\text{Res}(\oplus)$ - thereby strictly separating unrestricted $\text{Res}(\oplus)$ from bottom-regular $\text{Res}(\oplus)$. Subsequently, Alekseev and Itsykson \cite{AI25} significantly extended the reach of techniques by showing $\text{exp}(N^{\Omega(1)})$ lower bounds against $\text{Res}(\oplus)$ refutations whose depth is restricted to be at most $O(N \log \log N)$, where $N$ is the number of variables of the unsatisfiable CNF. This depth restriction was further improved to $O(N \log N)$ by Efremenko and Itsykson \cite{EI25}. \newline

A natural way towards proving lower bounds for unrestricted $\text{Res}(\oplus)$ would be improving the depth restriction all the way to $N^{\omega(1)}$. However, the techniques of Efremenko and Itsykson \cite{EI25} seem to get stuck at $O(N \log N)$. Efremenko and Itsykson \cite{EI25} posed the natural open problem of proving superpolynomial lower bounds against $\text{Res}(\oplus)$ refutations whose depth is restricted to $O(N^{1+\epsilon})$ where $\epsilon>0$ is some constant . \newline

Our main result, stated below, achieves such a bound.

\begin{theorem} \label{tseitin_lower_bound}
    Let $\Phi$ be the Tseitin contradiction on a $(m,d,\lambda)$ expander with $ \lambda < 0.95$ a small enough constant and $m$ odd. Let $n= md/2$ be the number of edges (which is also the number of variables in $\Phi$). Let $\text{IP}$ be the inner product gadget on $b= (4+\eta) \log(n)$ bits where $\eta > 0$ is an arbitrarily small constant. Let $\Psi = \Phi \circ \text{IP}$ be the lift of $\Phi$ by $\text{IP}$. Let $N= nb$ be the number of variables in $\Psi$. Then, any $\text{Res}(\oplus)$ refutation of $\Psi$ of depth $\leq O(N ^{2-\epsilon})$ requires size $\exp(\tilde{\Omega}(N^{\epsilon}))$
\end{theorem}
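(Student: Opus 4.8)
We adapt the Ben-Sasson--Wigderson width method to $\reslin$, using two devices that are special to lifts by a gadget of small Fourier bias: the \emph{closure} of an affine space (the set of base edges it pins down, together with the local consequences of the Tseitin axioms over the expander $G$), and the fooling lemma of the \cite{BCD24} type --- an arbitrary Tseitin-type distribution on the base variables, lifted by $\IP$, assigns positive probability to every \emph{safe} affine space, i.e.\ one whose closure does not clash with the distribution. It is worth recording first what goes wrong if one runs the argument directly on the given refutation $\Pi$: a $\reslin$ clause of small dimension may still encode a global parity of all edges, so its closure can be the whole graph; this is exactly the phenomenon that keeps unrestricted $\reslin$ out of reach, and it is here that the depth hypothesis $D = O(N^{2-\epsilon})$ must be spent.

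The plan is therefore to hit $\Pi$ with a random restriction before running the width argument. Pick a uniformly random subset $W \subseteq V$ of a size $m$ to be optimised, and, for every edge touching $V \setminus W$, fix its $b$-bit block to a uniformly random value consistent, through $\IP$, with a uniformly random partial assignment satisfying the Tseitin axioms outside $G[W]$. Since $G$ is a good enough two-sided expander and $0 < \lambda < 1$ is small, $G[W]$ is itself an expander with $\Theta(m^2 d / |V|)$ edges, and the restricted formula is (up to renaming charges) the $\IP$-lift of Tseitin on $G[W]$; the restricted proof $\Pi|_\rho$ still has size at most $S$ and depth at most $D$. The key claim --- and the heart of the argument --- is that with high probability every clause $C$ surviving in $\Pi|_\rho$ either has dimension larger than a threshold $w$ or has an affine space that is safe relative to $G[W]$: clauses whose affine constraints are ``spread out'' over the edge set become safe automatically, because a pseudorandom subset of the surviving edges contains no cut of $G[W]$, whereas a depth-$D$ refutation cannot contain too many clauses whose affine hull is ``globally aligned'' with a large cut of $G$. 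This is the point where the depth budget enters, and balancing it against the survival rate $(m/|V|)^2$, the spectral gap $1-\lambda$ and the block length $b = \Theta(\log n)$ is what fixes the admissible range of $m$; choosing $m$ so that $S \cdot 2^{-\Omega(w)} < 1$, a union bound over the at most $S$ clauses makes the dichotomy hold for all surviving clauses at once.

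We then run the width argument on $\Pi|_\rho$. Let $\mu(C)$ be the least number of vertex axioms of the $\IP$-lift of Tseitin on $G[W]$ whose lifted solution set is disjoint from the affine space of $C$; this equals $1$ on axioms and $|W|$ on $\bot$, and it is subadditive across the parity-resolution rule, so some clause $C^*$ has $\mu(C^*) \in [\,|W|/3,\ 2|W|/3\,]$, witnessed by a vertex set $U^*$ with $|\partial_{G[W]} U^*| = \Omega(m^2 d / |V|)$ by expansion. One shows $C^*$ cannot be safe: fix a boundary edge $e_0 = (a,b)$ with $a \in U^*$; the fooling lemma, applied to the uniform distribution over base solutions of the $U^* \setminus \{a\}$ axioms lifted by $\IP$, produces a lifted assignment $z$ in the affine space of $C^*$ that satisfies all those axioms. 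Freezing every block except the one on $e_0$ to its value in $z$, the affine space of $C^*$ cuts out an affine subspace of the $e_0$-block of codimension at most $\dim(C^*)$, and because $\IP$ has correlation $2^{-\Omega(b)}$ with every parity it is non-constant on any affine subspace of its domain of codimension below $b/2$. Hence if $\dim(C^*) < b/2$ we may re-choose the $e_0$-block so as to also satisfy the $a$-axiom while staying in the affine space of $C^*$, contradicting $\mu(C^*) = |U^*|$; so a safe $C^*$ forces $\dim(C^*) \ge b/2$, and iterating the toggle over a linear-size subset of $\partial_{G[W]} U^*$ whose blocks the surviving constraints of $C^*$ must pin separately upgrades this to $\dim(C^*) = \Omega(m^2 d/|V|)$. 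Either way $\dim(C^*) > w$, and unwinding the choice of $m$ permitted by $D = O(N^{2-\epsilon})$ yields $\log S = \tilde\Omega(N^{\epsilon})$.

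The main obstacle is the middle step: making precise, and quantitatively sharp, the assertion that a $\reslin$ refutation of depth $O(N^{2-\epsilon})$ cannot ``hide'' enough globally aligned affine structure to survive the random restriction --- equivalently, that after restriction the fooling/closure machinery really does certify every surviving low-dimensional clause as safe on $G[W]$. The fooling lemma and the bentness of $\IP$ are the comparatively clean ingredients; the delicate part is the interplay between depth, the induced-expander parameters, and the restriction's survival rate that pins down the exponent.
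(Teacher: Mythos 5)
Your proposal takes a genuinely different route from the paper, but it contains a gap at its center that you yourself flag, and I do not see how to close it.

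The paper does not use a random restriction followed by a Ben-Sasson--Wigderson width argument. It instead runs a \emph{random walk with restarts} through the proof DAG, using three ingredients: (i) the notion of \emph{amortized closure} $\widehat{\mathrm{Cl}}(A)$ from \cite{EI25} as a progress measure; (ii) a new conditional fooling lemma (Lemma~\ref{conditional_fooling}) showing that if $B \subseteq A$ with $|\widehat{\mathrm{Cl}}(B)| = |\widehat{\mathrm{Cl}}(A)|+k$, then for any lifted distribution and any extendable closure assignment $y$ of $A$, sampling from $G^{-1}(z) \cap C_y$ lands in $B$ given $A$ with probability at most $(3/4)^k$; and (iii) a $(p,q)$-PDT-hardness notion for Tseitin on an expander, proved via a spanning-tree ``hidden root'' distribution whose conditional law under parity queries is analyzed with the equidistribution lemma. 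A bottleneck counting step over the at most $s$ nodes of $\Pi$ then shows the amortized closure can grow by at most $O(\log s)$ per phase, and each phase advances depth by $\Omega(n)$; after $\Omega(n/\log s)$ phases one reaches depth $\Omega(n^2/\log s)$. The depth hypothesis is used exactly here: if depth is at most $N^{2-\epsilon}$ then $\log s$ must be at least $\tilde\Omega(N^{\epsilon})$.

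The gap in your proposal is the ``key claim'' in the middle: that with high probability, after the random restriction, every surviving clause of $\Pi|_\rho$ is either high-dimensional or has a safe affine space relative to $G[W]$, and that the depth budget $D = O(N^{2-\epsilon})$ is what makes this dichotomy hold. No mechanism is given, and I do not believe one exists in this form. Depth of the proof DAG is a property of the \emph{paths} through it, not of the individual affine spaces labelling nodes; a refutation of any depth can contain nodes whose affine spaces are globally aligned (e.g.\ a single global parity equation in a single block per edge), and such a node survives every product restriction that sets blocks freely. Nothing about bounding the length of the longest path prevents such clauses from occurring, and a union bound over $S$ clauses with a per-clause failure probability that is \emph{not} controlled by $D$ does not give the claimed dichotomy. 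Your own closing paragraph concedes that this is the unsettled step, but without it the entire width argument on $\Pi|_\rho$ has nothing to bite on: the subadditive measure argument plus the $\IP$-bentness/fooling toggle only tells you something about a \emph{safe} middle clause, and the safety is exactly what was not established. There is also a secondary concern: even if safety held, the quantitative bookkeeping you sketch (choosing $m$ so that $S \cdot 2^{-\Omega(w)} < 1$, etc.) never actually involves $D$, so the depth hypothesis never enters the inequality chain; the paper's mechanism, by contrast, spends the depth budget one phase at a time.

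If you want to salvage a restriction-plus-width route, you would need a new structural lemma of the form ``a $\reslin$ refutation of depth $D$ can be put into a normal form in which no affine space has an un-safe component,'' and no such lemma is known; indeed proving it would likely already imply unrestricted $\reslin$ lower bounds, which is the open problem. I recommend instead internalizing the paper's amortized-closure/random-walk framework (Sections~\ref{sec:randomwalk} and~\ref{sec:hardness}), where the depth hypothesis has a transparent quantitative role.
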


This pushes the frontier of depth of proofs against which super-polynomial lower bounds on size for $\text{Res}(\oplus)$ can be obtained, from $O(N \log(N))$ to $\tilde{O}(N^2)$. In particular, our lower bound achieves the best possible lower bound of $N^2$ on depth of efficient proofs that can be obtained using the random-walk-with-restart framework of Alekseev and Itsykson \cite{AI25}. Another way of interpreting our result is to say that any $\text{Res}(\oplus)$ proof of the hard formula $\Psi$ of size $\text{exp}(N^{o(1)})$ has to be almost $N^2$ deep, which is significantly super-critical.  \newline

\subsection{Brief Overview of Our Technique} \label{sec:brief}
Our work combines the approaches of Alekseev and Itsykson \cite{AI25}, Efremenko and Itsykson \cite{EI25} and Bhattacharya, Chattopadhyay and Dvorak \cite{BCD24} - along with a new equidistribution lemma for \textit{safe} affine spaces. \newline

Currently, the only known technique that can prove super-polynomial lower bounds on the size of DAG-like $\text{Res}(\oplus)$ proofs of even just linear depth is the very recent random-walk-with-restart method of Alekseev and Itsykson \cite{AI25}.
To understand the main innovation of our work, we first outline how a barrier of handling proofs deeper than $N\log N$ shows up naturally using this method. In fact,  this barrier shows up even when we try implementing the random-walk-with-restart method on ordinary resolution proof DAGs. Let us, therefore, first quickly review this technique using ordinary resolution DAGs.

The main idea of the random-walk-with-restart is as follows: consider a hard CNF formula $\Phi$ like the Tseitin contradiction on a $d$-regular expander graph. For such a $\Phi$ it is known that if we take a random assignment to its $N$ input variables, then even after making $\alpha N$ ordinary queries to the input variables, with probability  at least $p = 2^{-O(N/d)}$ a decision tree would not be able to locate a falsified clause of $\Phi$. Therefore, a walk of length $\alpha N$ on an ordinary resolution proof DAG starting from the root, upon given such a random assignment would  reach a node that is labeled by a sub-cube whose fixed variables do not reveal any falsified clause of $\Phi$, with probability $p$. We call such a sub-cube \emph{good}. However, if there as few as $s$ nodes in the whole DAG, then by union bound there must exist a good node $v$ which is reached by at least a $p/s$ fraction of all assignments. But then a simple counting argument shows that the co-dimension (i.e. the number of input variables fixed) of the cube associated with $v$, denoted by $A_v$, is at most $\log(s/p)$. If $d$, the degree of the expander $G$, is $O(\log n)$ and $m$ is the number of the vertices of $G$, and $s$ is at most $2^{m}$, then the co-dimension of $A_v$ is $O(N/\log N)$. That is although in the walk we may have queried $\alpha N$ edges in total, the DAG is constrained to forget many of them and finally  remembers only $O(N/\log N)$ many of them with sufficient  probability. The beautiful idea of Alekseev and Itsykson is that we could then make a fresh start of our walk from this node $v$, i.e. now sample a random assignment from $A_v$. They were able to show roughly the following: as long as the co-dimension of $A_v$ is smaller than $\beta N$ for some constant $\beta > 0$ and the cube $A_v$ is good, a random walk of length $\alpha N$ starting from $v$ reaches a good node with probability $p$ . Doing the whole co-dimension counting argument again, we conclude that there exists a good node $w$, such that $\text{co-dim}(A_w) \le \text{co-dim}(A_v) + \log(s/p)$. Hence, one could repeat this argument $\log N$ times as long as $s = 2^{o(N/\log N)}$. As each time we take $\alpha N$ steps, this ensures that there exists a node in the proof DAG at depth $\Omega(N\log N)$. With many more technical ideas from linear algebra like the notion of a \emph{closure} of a linear system on lifted variables, Alekseev and Itsykson were able to lift this idea to $\text{Res}(\oplus)$ DAGs where cubes were replaced by affine spaces and ordinary queries were replaced by parity queries. But there was a loss incurred in the process and they could only prove a depth lower bound of $\Omega(N\log \log N)$. This was improved to match the lower bound of $\Omega(N \log N)$, the best possible given the low success probability $p$ of reaching a good node, using more sophisticated notions like \emph{amortized closure} of a lifted linear system by Efremenko and Itsykson \cite{EI25}. 

We observe that if we sample a uniformly random input, then even starting from the root of an ordinary resolution DAG, the success probability of reaching a good node on making $\alpha N$ queries is no larger than $2^{-\Omega(N/d)}$. This is because the DAG could make $d$ queries on edges incident to a single vertex in $G$. The probability that the clause of $\Phi$ associated with that vertex is falsified is then precisely 1/2, even conditioned on previous queries. Thus, if the DAG processes $t$ vertices of $G$, expending $td$ queries, the probability of not falsifying any clause is at most $2^{-t}$. And unless we're able to boost the success probability $p$, we cannot handle depth beyond $N \log N$ even for ordinary resolution!

Armed with this observation, we design non-uniform distributions $\mu_v$, one for each good cube $A_v$ of small co-dimension, so that the success probability of reaching a good node $w$ starting from node $v$ is boosted all the way to a constant like $1/3$. The design of this distribution and the analysis of the associated random walk is non-trivial. But even after doing this, there are two challenges. First, the co-dimension counting argument goes for a toss! We were not able to find a simple way around this and fixing this is the most involved and non-trivial contribution of our work. The way we get around this is by going to a lifted space where the lifting is by not a stifling gadget as was being done in almost all previous work on $\text{Res}(\oplus)$, but with a gadget that has small correlation with all parities, like Inner-Product. Roughly speaking, we show that over such lifted spaces, any \emph{lifted distribution} restores the co-dimension counting argument in the sense that if we replace the co-dimension of an affine space by the amortized closure of an affine space, then its growth with re-starts proceeds more or less similar to how co-dimension's growth happened under the uniform distribution. This is driven by our main technical ingredient called the Conditional Fooling Lemma, stated and proved in Section~\ref{sec:conditional_fooling}. Finally, the second challenge is to lift the analysis of the random walk over ordinary resolution DAG under suitable family of non-uniform distribution to the analysis of a random walk over a $\text{Res}(\oplus)$ DAG under a lifted distribution. Interestingly, overcoming this second challenge is crucially aided by an equidistribution property (Lemma~\ref{exponential_sum}) that we establish for lifted affine spaces, en route to proving the Conditional Fooling Lemma.  This equidistribution property of lifted affine spaces is independently interesting.  Even more, it turns out that this lifting can be done in a very general way that doesn't depend on the specific base formula, in our case the Tseitin formula, but simply follows from the propery of the lifted space of the gadget.

A bit more concretely, we say that a formula $\Phi$ is $(p,q)$-DT hard if it satisfies the following: there exists a set of partial assignments (or equivalently cubes) $P \subseteq 
\{0,1,*\}^n$ which is downward closed, no $\rho \in P$ falsifies any clause of $\Phi$, and even more, for each $\rho \in P$ that fixes at most $p$ variables there exists a distribution $\mu_{\rho}$ over assignments consistent with $\rho$ that is hard for every ordinary decision tree $T$ of depth $q$ in the following sense: if we sample an input from $\mu_q$, the partial assignment obtained by additionally fixing the variables that $T$ queried,  also lies in $P$ with probability at least $1/2$. What we are able to show in Section~\ref{sec:lifting_from_DT_hardness} is that, the lifted formula $\Phi \circ g$ becomes `analogously' hard for parity decision trees, when $g$ has sufficiently small correlation with parities. We call this analogous hardness notion as $(p,q)$-PDT hardness. This notion was inspired by the analysis of certain combinatorial games in the work of Alekseev and Itskyson \cite{AI25}. In Section~\ref{sec:randomwalk}, we show that this notion of PDT-hardness yields lower bounds on depth of $\text{Res}(\oplus)$ proofs when its size is small. Chaining together the pieces yields the following lifting theorem, somehwat informally stated.

\begin{theorem}[Informal version of Theorem~\ref{ref: DT_hardness_implies_lower_bounds}]  \label{thm:lifting}
Let $\Phi$ be $(p,p+q)$-DT hard CNF. Then, amy $\text{Res}(\oplus)$ refutation of $\Phi \circ g$ of size $s$ must have depth $\Omega\left(\dfrac{pq}{\log s}\right)$, provided $g$ has sufficiently small size and small correlation wrt all parities.
\end{theorem}

It is worth stressing that the only place that the specifics of the CNF contradiction $\Phi$, which is the Tseitin formula over a constant-degree suitably expanding graph, enters into the argument is at proving that it's $\big((\Omega(m),\Omega(m)\big)$-DT-hard, which we do by a novel argument at the very end in Section~\ref{sec:proving_DT_hardness}.

\subsection{Some Other Related Work}  \label{Related-Work}
Our work makes use of the notion of amortized closure that was introduced by Efremenko and Itsykson \cite{EI25}. Apart from improving the depth lower bounds of small size $\text{Res}(\oplus)$ proofs, \cite{EI25} used this notion to give an alternative proof of a lifting theorem of Chattopadhyay and Dvor{\'a}k \cite{CD25} and their proof works for a broader class of gadgets. The lifting theorem is used in \cite{CD25} to prove super-criticial tradeoffs between depth and size of tree-like $\text{Res}(\oplus)$ proofs. 

Our work also crucially uses lifted distributions to boost the success probability of random walks with restarts. In particular, it uses an analytic property of the gadget to argue equidistribution of pre-images in a \emph{safe} affine space in the lifted world of a point $z \in \{0,1\}^n$ in the unlifted world. Such equidistribution, albeit wrt rectangles, have been earlier implicitly proved (see for example \cite{GLMWZ16,CFKMP21}) as well as explicitly proved in \cite{CDKLM17}. The analytic property of the gadget used in these works was essentially small discrepancy wrt rectangles (or being a 2-source extractor), something that seems to be significantly stronger than what we need of the gadget in this work.

\paragraph{Very recent work:} Soon after an initial version of our manuscript was uploaded, Byramji and Impagliazzo \cite{BI25a} proved a superlinear lower bound of $\Omega(N^{3/2-\epsilon})$ on the depth of $\text{Res}(\oplus)$ proofs of size $o\big(\text{exp}(N^{2\epsilon})\big)$ for the bit-Pigeon-hole Principle (BPHP). This work continued to use the random-walk-with-restart paradigm of \cite{AI25} under a nearly \emph{uniform distribution}, but there were two differences: the length of each walk was about $\sqrt{N}$ and the probability of the walk ending at a good node was as large as a constant. What prevented them from going beyond $\sqrt{N}$ length walk in each phase was a birthay paradox like phenomenon under their distribution.  Very recently, in a later version \cite{BI25b}, they seemed to have improved the depth bound for the BPHP to nearly quadratic, employing a more non-uniform distribution. The also prove a lifting theorem starting from our notion of $(p,q)$-DT-Hardness, using lifted distributions. However, the way they handle bottleneck counting  under lifted distributions seems very different from the way we do. In particular, they seem to be able to get away with simpler co-dimension fooling property of a lifted distribution, first observed in \cite{BCD24} in the context of a walk without re-start, even when re-starting random walks. This way of bottleneck counting apparently allows them to prove their lifting theorem, similar to our Theorem~\ref{thm:lifting}, but with just constant-size gadgets. This reduced size allows them to get a nearly quadratic lower bound on depth even in terms of formula size for a lifted Tseitin formula, whereas our lower bound is quadratic in just number of variables of the CNF. It is worth remarking that for this last result, the latest version of the Byramji-Impagliazzo work uses our result, proved in Section~\ref{sec:proving_DT_hardness}, that Tseitin formulas are $(\Omega(n),\Omega(n))$-DT-hard.

\section{Preliminaries} \label{sec:preliminaries}

\subsection{General Notation}
\begin{itemize}
    \item For a probability distribution $\mu$, when we sample a point $x$ according to $\mu$, we denote it by $x \leftarrow \mu$. 
    \item When an input is sampled according to some distribution $\mu$ conditioned on lying in some set $S$, we denote the resulting conditional distribution by $\mu \cap S$. Throughout this paper, whenever we encounter such expressions, it will be guaranteed (and easy to see) that $\text{supp}(\mu) \cap S \neq \phi$, so the conditioning is valid.
    \item When $x$ is sampled according to uniform distribution over a set $T$, we denote it by $x \sim T$.
    \item Throughout this paper, we shall identify a linear form $\ell \in (\F^{nb})^{*}$ by its canonical representation wrt the standard inner product, i.e., $\ell(x)= \langle \ell, x \rangle$ where $\langle x , y \rangle= \displaystyle \sum_j x_j y_j \pmod{2}$.
    \item For a partial assignment $\rho \in \{0,1,*\}^n$ we denote
    $$ \text{free}(\rho) = \{ i \in [n]| \rho(i) = * \}$$
    $$ \text{fix}(\rho) = \{i \in [n]| \rho(i) \neq *\}$$
\end{itemize}

\subsection{Resolution over parities}
\begin{definition}
A linear clause $\ell_C$ is an expression of the form 
$$\ell_C(x)=[\langle \ell_1, x \rangle = b_1] \lor [\langle \ell_2, x \rangle = b_2] \cdots \lor [\langle l_k, x \rangle = b_k]$$
Here $x, \ell_1, \cdots, \ell_k \in \F^n$. Note that the negation of $\ell_C$, $\neg \ell_C$ is an affine space:
$$ \neg \ell_C= \{x \in \F^n | \langle \ell_1, x\rangle = 1-b_1, \cdots , \langle \ell_k, x \rangle= 1-b_k\}$$
\end{definition} Also notice that every ordinary clause is also a linear clause. \newline

$\text{Res}(\oplus)$ (defined in \cite{IS14}) is a proof system where every proof line is a linear clause. The derivation rules are as follows:
\begin{enumerate}
    \item \textbf{Weakening: }From $\ell_C$, derive $\ell_D$ where $\ell_D$ is any linear clause semantically implied by $\ell_C$ ($\ell_C \implies \ell_D$). \newline
    
    This step can be verified efficiently because it is equivalent to $\neg \ell_D \subseteq \neg \ell_C$, and this is a statement about containment of one affine space within another - which can be checked by Gaussian elimination.
    \item \textbf{Resolution: }From $\ell_C^{(1)}(x) = \ell_C(x) \lor [\langle \ell, x \rangle=b]$ and $\ell_C^{(2)}(x) = \ell_c(x) \lor [\langle \ell, x \rangle=1-b]$, derive $\ell_C(x)$
\end{enumerate}

A $\text{Res}(\oplus)$ refutation of a CNF $\Phi$ starts with the axioms being the clauses of $\Phi$ (which, as noted above, are also linear clauses) and applies a sequence of derivation rules to obtain the empty linear clause $\emptyset$. \newline

\subsubsection*{Affine DAGs}

For an unsatisfiable CNF $\Phi$ define the search problem 
$$\text{Search}(\Phi) = \{(x,C)| C \text{ is a clause of }\Phi, C(x)=0\}.$$ Just as a resolution refutation of $\Phi$ can be viewed as a cube-DAG for solving $\text{Search}(\Phi)$, a $\text{Res}(\oplus)$ refutation can be viewed as an affine-DAG for solving $\text{Search}(\Phi)$.

\begin{definition}
    An affine DAG for $\text{Search}(\Phi)$ is a DAG where there is a distinguished root $r$, each node $v$ has an associated affine space $A_v$, and each node has outdegree either 2, 1, or 0. Each outdegree 0 node $w$ is labelled with a clause of $\Phi$, $C_w$. The following requirements are satisfied:
    \begin{enumerate}
        \item If $v$ has two children $v_1, v_2$ then $A_v= A_{v_1} \cup A_{v_2}$.
        \item If $v$ has only one child $w$, then $A_v \subseteq A_w$.
        \item If $v$ has no children, then for any $x \in A_v$, $C_v(x)=0$ where $C_v$ is the clause labelled on $v$.
        \item The affine space labelled on the root is the entire space $\F^n$.
    \end{enumerate}
\end{definition}
A $\text{Res}(\oplus)$ refutation for $\Phi$ can be viewed as an affine DAG for $\text{Search}(\Phi)$ by viewing the sequence of derivations as a DAG: for each node, the associated affine space is the negation of the linear clause derived at that node. The leaves are the axioms - the clause labelled at each leaf is simply the corresponding axiom. \newline

We classify nodes based on their outdegree as follows.

\begin{enumerate}
    \item A node with no children is called a leaf.
    \item A node with one child is called a \textit{weakening node}. (Because in the $\text{Res}(\oplus)$ refutation this node was derived by weakening.)
    \item Let $v$ be a node with two children $v_1, v_2$. In this case it holds that $A_v= A_{v_1} \cup A_{v_2}$; $A_{v_1} = A_v \land [\langle \ell, x \rangle=b]$ and $A_{v_2} = A_v \land [\langle \ell, x \rangle=1-b]$ for some $\ell \in \F^n$. Such a node is called a \textit{query node}; we say the affine DAG queries $\ell$ at node $v$. (In the $\text{Res}(\oplus)$ refutation, node $v$ was obtained by resolving the linear form $\langle \ell, x \rangle$.)
\end{enumerate}

\subsubsection*{Path of an input}
Here we consider any affine DAG that arises from some $\text{Res}(\oplus)$ refutation. For any node $v$ and any $x \in A_v$, we define the path of $x$ starting from $v$ as follows:
\begin{itemize}
    \item Start with the currrent node $v$.
    \item If the current node is $w$ and $w$ has no children, terminate the path.
    \item If the current node is $w$ and has two children $w_1, w_2$, we know that $A_w= A_{w_1} \cup A_{w_2}$. In this case it will hold that $A_{w_1} = A_w \cap \{\tilde{x} | \langle \ell, \tilde{x} \rangle=b\}$ and $A_{w_2} = A_w \cap \{\tilde{x}| \langle \ell, \tilde{x} \rangle = 1-b\}$ for some $\ell \in \F$. If $\langle \ell, x\rangle=b $, the next node in the path is $w_1$. Otherwise, the next node in the path is $w_2$
    \item If the current node $w$ has only one child $w_1$, the next node in the path is $w_1$. 
\end{itemize}

The way the path is defined ensures that if the path of $x$ visits the node $w$, $x \in A_w$. Consequently, for any $x$ and $v$ such that $x \in A_v$, the path of $x$ starting from $v$ visits a leaf whose clause is falsified by $x$. \newline

In particular, for any $x$, if we follow the path traversed by $x$ from the root, we end up at a clause falsified by $x$. \newline

\begin{definition}
We define the length of a path to be the number of query nodes encountered on the path. (The weakening nodes do not contribute to the length.)
\end{definition}

\begin{definition}
    The depth of a node $v$ is the largest length of a path from the root to $v$. The depth of the refutation is the depth of the deepest node.
\end{definition}

\subsection{Notations about lifted spaces}
In this paper, we shall be working with a gadget $g: \F^b \rightarrow \F$. The base space will be $\F^n$. The lifted space will be $\F^{N}$ where $N=nb$. The coordinates of the lifted space are $\{(i,j) | i \in [n], j \in [b]\}$.

 \begin{definition} \label{def: block of i}The set of coordinates $\{x_{i,j} | j \in [b]\}$ is called to be \textit{the block of $i$}. The $i$-th block will be denoted as $x(i) \in \F^b$ \end{definition}

 The gadget $g$ naturally induces a function $g^n: \F^{nb} \rightarrow \F^n$ by independent applications of $g$ on the $n$ different blocks. We shall abbreviate $g^n$ by $G$.

    \begin{definition}
    \label{def: G(beta)}
    For any assignment $\beta \in \F^{S \times [b]}$ to the variables in blocks of $S$, we define the partial assignment $G(\beta) \in \{0,1,*\}^n$ as follows:
    $$
    G(\beta)_i= \begin{cases}
        * & \text{ if } i \not \in S \\
        g(\beta(i)) & \text{ otherwise}
    \end{cases}
    $$
    \end{definition}

    \begin{definition} \label{def: G inverse}
    For any assignment $z \in \{0,1\}^n$ define $G^{-1}(z)$ to be the set of preimages of $z$:
    \begin{itemize}
        \item Then, $G^{-1}(z) \subseteq \F^{nb}$: 
        $$ G^{-1}(\alpha) = \{y | y \in \F^{nb}, g(y(i))= z_i \text{ }\forall i \in [n]\}$$
    \end{itemize}
    \end{definition}
\begin{definition} \label{def: lifted distribution}
    For any distribution $\mu$ on $\F^n$ define the lifted distribution $G^{-1}(\mu)$ on $\F^{nb}$ as the outcome of the following sampling procedure:
    \begin{enumerate}
        \item Sample $z \leftarrow \mu$.
        \item Sample $x$ uniformly at random from $G^{-1}(z)$.
    \end{enumerate}
    Any distribution of the form $G^{-1}(\mu)$ is called a \textit{lifted distribution}.
    
    \end{definition}
    \begin{definition}
        For a partial assignment $y \in \F^{S \times [b]}$ to some set of blocks in the lifted space, we define $G(y) \in \F^S$ to be the corresponding partial assignment in the unlifted world: $G(y)_i= g(y(i))) \text{ } \forall i \in S $.
    \end{definition}
    \textbf{Caution: }$G$ will also be used to denote a graph in one part of the paper. It should not cause confusion, however, because in that section we will not be using this interpretation of $G$.

    \begin{definition}
        For a partial assignment $y \in \F^{S}$ ($S \subseteq N$), we define the \textit{cube of }$y$, $C_y$ to be the set of points consistent with $y$:

        $$C_y = \{ x \in \F^N | x_i=y_i \text{ } \forall i \in S \}$$
    \end{definition}

    \begin{definition} \label{def: extendable}
    For an affine space $A \subseteq \F^{nb}$ and a partial assignment $y \in \{0,1,*\}^{nb}$, call $y$ $A$-extendable if there exists $x \in A$ consistent with $y$
    \end{definition}

    \begin{definition} \label{def A_y}
        For an affine space $A \subseteq \F^{nb}$ and an extendable partial assignment $y \in \F^{S}$ (where $S \subseteq [nb]$) define $A_y \subseteq \F^{[nb]\setminus S}$ as follows:
        $$A_y = \{\tilde{x} | (\tilde{x},y) \in A\}$$
    \end{definition}

\subsection{Linear algebraic facts about lifted spaces}


In this subsection we import facts about closure and amortized closure proved by \cite{EGI24}, \cite{AI25} and \cite{EI25}.

\begin{itemize}
    \item \textbf{Safe set of linear forms} \begin{definition} \textbf{(from \cite{EGI24})} \label{def: safe}
    A set of linear forms $V= \{\ell_1, \ell_2, \cdots , \ell_m \} \subseteq \F^{nb}$ \footnote{Technically, a linear form $\ell $ should lie in the dual space $(\F^{N})^{*}$. In this, we identify a linear form $\ell$ as an element of $\F^{N}$ by its canonical representation w.r.t the standard inner product: $\ell(x)= \langle \ell, x \rangle$}. is \textit{safe} if for any $k$ linearly independent forms $w_1, w_2, \cdots , w_k \in \text{span}(S)$, $\text{supp}(w_1) \cup \text{supp}(w_2) \cup \cdots \cup \text{supp}(w_k)$ includes at least $k$ distinct blocks.
     
    \end{definition}

    \item  \textbf{Equivalent definition of \textit{safe}:} Let 
    $$ M= \begin{bmatrix}
        \ell_1 \\ \hline  \ell_2 \\ \hline \cdots \\ \hline \ell_m 
    \end{bmatrix} \in \F^{m \times nb}$$
    Let $r= \text{rank}(M)$. $V$ is nice iff there exist indices $c_1, c_2, \cdots , c_r \in [nb]$, each lying in different blocks, such that the set $\{Me_{c_1}, \cdots , Me_{c_r}\} \subseteq \F^{m}$ is linearly independent. ($Me_j$ is the $j$-th column of $M$)
    The proof of equivalence of these two definitions can be found in Theorem 3.1 in \cite{EGI24}. 
    
     \begin{fact} \label{fact: depends only on span}
        Whether or not a set of linear forms is safe depends only on their span. This is clear from the second equivalent definition.
    \end{fact}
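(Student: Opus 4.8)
The plan is to read the claim off the second, matrix-theoretic characterization of safety, whose equivalence with the first is Theorem~3.1 of~\cite{EGI24}. In that characterization the only data entering the definition are the rank $r=\text{rank}(M)$ of the matrix $M$ whose rows are $v_1,\dots,v_m$, and, for a choice of column indices $c_1,\dots,c_r$ lying in pairwise distinct blocks, the linear (in)dependence of the columns $Me_{c_1},\dots,Me_{c_r}\in\F^m$. So it suffices to check that both of these are determined by $\text{span}(V)=\text{rowspace}(M)$ rather than by the particular generating set. For the rank this is immediate, since $r=\text{rank}(M)=\dim\text{rowspace}(M)=\dim\text{span}(V)$; hence all the content is in the statement about the columns.

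For the columns, the key observation is that over $\F=\F_2$ we have $\ker(M)=\text{rowspace}(M)^{\perp}$, because $Mx=0$ is exactly the system $\langle v_i,x\rangle=0$ for all $i$; thus $\ker(M)$ is also a function of $\text{span}(V)$ alone. Now fix indices $c_1,\dots,c_r$ in pairwise distinct blocks. The columns $Me_{c_1},\dots,Me_{c_r}$ are linearly dependent exactly when some nonzero $\lambda\in\F^r$ satisfies $M\bigl(\textstyle\sum_i\lambda_i e_{c_i}\bigr)=0$, i.e.\ when the coordinate subspace $\text{span}(e_{c_1},\dots,e_{c_r})$ intersects $\ker(M)$ nontrivially. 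This condition only refers to $\ker(M)=\text{span}(V)^{\perp}$. Combining the two observations, the existence of indices $c_1,\dots,c_r$ in distinct blocks for which $\{Me_{c_1},\dots,Me_{c_r}\}$ is linearly independent --- which by the second definition is precisely the assertion that $V$ is safe --- depends only on $\text{span}(V)$.

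There is no genuine obstacle here; the only step that is not purely formal is the identity $\ker(M)=\text{rowspace}(M)^{\perp}$, which is what bridges the ``row'' description of the data (the span of $V$) and the ``column'' description (the vectors $Me_j$) used in the second definition. Alternatively, one could argue straight from the first definition, whose statement already quantifies only over $w_1,\dots,w_k\in\text{span}(V)$ and so visibly depends on nothing but the span; the second definition is simply the form that is more convenient for the arguments elsewhere in the paper.
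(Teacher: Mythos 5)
Your proof is correct and fills in the details that the paper's one-line remark (``clear from the second equivalent definition'') leaves implicit; the route through $r=\dim\text{span}(V)$ and $\ker(M)=\text{span}(V)^{\perp}$ is exactly the intended argument. Your closing observation — that the first definition already quantifies only over $\text{span}(V)$ and so yields the fact even more directly — is a valid and arguably cleaner alternative, though it amounts to the same conclusion.
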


    \item \textbf{Safe affine spaces: } \begin{definition} \label{safe affine space}
        Let $A \subseteq \F^{nb}$ be an affine space. Let $A= \{x | Mx=b\}$. Then, $A$ is called a \textit{safe affine space} if and only if the rows of $M$ are safe (i.e., the set of linear forms defining $A$ is safe). This does not depend on a specific choice of $M$ by Lemma 4.1 in \cite{EGI24}. 
        
    \end{definition}

    \item \textbf{Deviolator: }
    For a subset of the blocks $S \subseteq [n]$ and a linear form $\ell \in \F^{nb}$, define $ \ell[\setminus S] \in \F^{(n-|S|)b} $ to be the projection of $v$ on the coordinates of $[n] \setminus S $. \\
    \begin{definition}
    A subset $S \subseteq [n]$ is a deviolator for $V= \{\ell_1, \ell_2, \cdots , \ell_m\} \subseteq \F^{nb}$ if $\{\ell_1[\setminus S], \ell_2[\setminus S], \cdots , \ell_m[\setminus S] \}$ is a nice set of linear forms.
    \end{definition}

    \item \textbf{Closure of a set of linear forms:}
    \begin{definition} \textbf{(from \cite{EGI24})}
        Closure of a set of linear forms $V= \{\ell_1, \ell_2, \cdots , \ell_m\}$ is the minimal deviolator for $V$. (It is known that this deviolator is unique, and also it depends only on $\text{span}(V)$ - Lemma 4.1 in \cite{EGI24}.)
    \end{definition}

    \item \textbf{Closure of an affine space: }

    \begin{definition}
    For an affine space $A$ given by the set of equations $A= \{x| Mx=b\}$, define $\Cl(A)$ to be the closure of the set of rows of $M$ (i.e., $\Cl(A)$ is the closure of the set of defining linear forms of $A$). This does not depend on a specific choice of $M$ by Lemma 2.11 in \cite{EI25}.

    \end{definition}
   
    \item \textbf{Closure Assignment}
    \begin{definition} \label{closure_assignment}
        For an affine space $A$, a \textit{closure assignment} $y$ is any assignment to the the coordinates in $\Cl(A) \times [b]$: $y \in \F^{\Cl(A) \times [b]}$.
    \end{definition}

    \item \textbf{Amortized Closure of a set of linear forms}
    \begin{definition} \textbf{(from \cite{AI25})}
 Let $V= \{\ell_1, \ell_2, \cdots , \ell_k\} \in \F^{nb}$ be a set of linear forms. We define $\Clh(V) \subseteq [n]$ as follows: Let 
          $$ M= \begin{bmatrix}
            v_1 \\ 
            \hline \\
            v_2 \\ 
            \hline \\
            \cdots \\
            \hline \\
            v_t
        \end{bmatrix}$$
    Call a set of blocks $S = \{s_1, s_2, \cdots,  s_k\} \subseteq [n]$ \textit{acceptable} if there exist columns $c_1, c_2, \cdots , c_{k}$, such that $c_j$ lies in block $s_j$ and the set $\{ Me_{c_1} , Me_{c_2}, \cdots , Me_{c_{k}}\}$ is linearly independent. The amortized closure of $V$, $\Clh(V)$, is the lexicographically largest acceptable set of blocks. \newline
    
    It is known that $\Clh(V)$ depends only on $\text{span}(V)$ (Lemma 2.11 in \cite{EI25})
        
    \end{definition}

    \item \textbf{Amortized Closure of An Affine Space}
    \begin{definition}
        Let $A \subseteq \F^{nb}$ be an affine space; $A= \{x | Mx=b\}$. The amortized closure of $A$, $\Clh(A)$, is defined to be the amortized closure of the set of rows of $M$. This does not depend on a specific choice of $M$ (Lemma 2.11 in \cite{EI25})
    \end{definition}

\end{itemize}

Now we import some facts and lemmas about closure and amortized closure from \cite{EGI24}, \cite{EI25} and \cite{AI25}.

    \begin{lemma}
        \label{lem: removing closure makes affine space nice}
     If $y$ is an $A$-extendable closure assignment, $A_y$ is a safe affine subspace. (Follows from definition.)
    \end{lemma}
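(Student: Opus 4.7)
The plan is to unwind the definitions and read off the conclusion directly from the property that the closure is a deviolator. First, fix defining equations $A = \{x \in \F^{nb} : Mx = b\}$ for some matrix $M \in \F^{m \times nb}$ and target vector $b \in \F^m$. Write $S = \Cl(A) \subseteq [n]$ and decompose the columns of $M$ according to whether they lie in $\text{VARS}(S)$ or in its complement, so that $M x = M_S \cdot x|_{\text{VARS}(S)} + M_{\bar{S}} \cdot x|_{\text{VARS}(\bar{S})}$, where $M_{\bar{S}}$ is obtained by deleting the columns in $\text{VARS}(S)$ and (crucially) each of its rows is precisely the projection $v_i[\setminus S]$ of the corresponding row $v_i$ of $M$.

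Next, I would examine $A_y$ explicitly. Since $y \in \F^{\text{VARS}(S)}$ is a closure assignment, the space $A_y$ consists of those $\tilde{x} \in \F^{[nb] \setminus \text{VARS}(S)}$ satisfying $M_{\bar{S}} \tilde{x} = b - M_S y$. The extendability hypothesis guarantees this system is consistent, so $A_y$ is a genuine (nonempty) affine space whose defining equations are the projections $\{v_i[\setminus S]\}_i$ of the rows of $M$. To show $A_y$ is safe, I just need these projected rows to form a safe set of vectors in $\F^{(n - |S|)b}$.

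This last step is immediate from the definition of closure: $\Cl(A)$ is by definition the minimal deviolator of the rows $\{v_1, \ldots, v_m\}$ of $M$, and the deviolator property says precisely that $\{v_1[\setminus S], \ldots, v_m[\setminus S]\}$ is a safe (equivalently, nice) set of vectors. Combining with Fact~\ref{fact: depends only on span} so we do not have to worry about the specific choice of $M$, we conclude that $A_y$ is a safe affine space, as required.

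There is no serious obstacle in the proof; the content is entirely bookkeeping. The only mild subtlety is observing that restricting an affine space by fixing a subset of coordinates corresponds to projecting each defining row onto the complementary coordinates (modulo a shift of the RHS), so that the notion of ``safe affine space'' for $A_y$ matches verbatim the notion of ``safe set of vectors'' applied to the projections $\{v_i[\setminus S]\}_i$ that the deviolator property controls.
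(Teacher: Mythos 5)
Your proposal is correct and is exactly the argument the paper intends by ``Follows from definition'': the rows defining $A_y$ are the projections $v_i[\setminus S]$ of the rows of $M$, and the fact that $S = \Cl(A)$ is a deviolator says precisely that these projections form a safe (nice) set. Nothing more needs to be said; the extendability hypothesis is used only to ensure $A_y$ is nonempty.
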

\begin{lemma}For any affine space, $\Cl(A) \subseteq \Clh(A)$ (Lemma 2.15 in \cite{EI25})\end{lemma}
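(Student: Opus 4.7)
The claim is cited here as Lemma~2.15 of \cite{EI25}, and in the paper one simply imports it. To sketch the argument one would reproduce:

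Fix any matrix $M$ whose rows span the defining equations of $A$, and write $r=\text{rank}(M)$. The first step is a clean reformulation: for any $S\subseteq[n]$, the restricted system $V[\setminus S]$ is safe iff $|\Clh(V[\setminus S])|=\text{rank}(V[\setminus S])$ (both conditions say there are $\text{rank}(V[\setminus S])$ many linearly independent columns of $M$, chosen from distinct blocks of $[n]\setminus S$). In particular $V$ is safe iff $|\Clh(V)|=r$, so a set $S$ is a deviolator iff dropping the block-$S$ columns of $M$ leaves a matrix whose amortized closure has full rank.

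The main step is to prove the contrapositive: every block $i\notin\Clh(A)$ lies outside $\Cl(A)$. Because $\Clh(A)$ is the lex-largest acceptable set, any block $i$ it rejects is rejected for a structural reason: every column of $M$ supported in block $i$ already lies in the span of the columns picked from the lex-earlier blocks of $\Clh(A)$. Hence erasing block $i$'s columns from $M$ neither decreases the rank of the surviving matrix nor destroys any acceptable set witnessing safety. It follows that if $S$ is any deviolator with $i\in S$, then $S\setminus\{i\}$ is still a deviolator, contradicting the minimality of $\Cl(A)$. Taking the contrapositive yields $\Cl(A)\subseteq\Clh(A)$.

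The delicate point I would expect to wrestle with is the structural claim about rejected blocks: it is really a greedy/matroid-exchange statement and depends on the precise interpretation of ``lex-largest''. One has to argue carefully that if block $i$ contributed a column linearly independent from those already chosen, the greedy process selecting $\Clh$ would have picked it, possibly after swapping out a later-chosen column via an exchange. This bookkeeping is exactly what \cite{EI25} carry out in proving their Lemma~2.15, so rather than redo it here we simply invoke that lemma.
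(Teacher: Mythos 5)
The paper does not prove this lemma --- it imports it verbatim as Lemma~2.15 of \cite{EI25} --- so there is no in-paper argument to compare against; your proposal, which also defers to that citation at the end, is in that sense the same move the authors make.

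That said, the sketch you include in between has a genuine gap, and I would not call it mere "bookkeeping." You correctly observe (via maximality and a Rado-style circuit argument) that if $i\notin\Clh(A)$ then every column of block $i$ is, \emph{in the full matrix} $M$, spanned by columns from some subset $T_i$ of the remaining blocks of $\Clh(A)$. But the step "hence if $S$ is any deviolator with $i\in S$ then $S\setminus\{i\}$ is still a deviolator" does not follow. Whether $S\setminus\{i\}$ is a deviolator is a question about the \emph{restricted} matrix $M[\setminus(S\setminus\{i\})]$: passing from $M[\setminus S]$ to $M[\setminus(S\setminus\{i\})]$ adds block $i$'s columns back, and if $T_i\cap S\neq\emptyset$ the columns that spanned block $i$ in the full matrix are themselves absent in $M[\setminus S]$, so block $i$ can strictly raise the rank of the restricted matrix and break safety. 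Nothing in your argument rules out $T_i\cap S\neq\emptyset$; indeed, applied to $S=\Cl(A)$ this is exactly the containment question you are trying to settle, so the reasoning is circular. A concrete warning sign: with $n=3$, $b=2$, block $1$ holding two copies of $e_1$, block $2$ two copies of $e_2$, and block $3$ the columns $e_1,e_2$, the set $\{1,2\}$ is a basis of the induced "transversal" matroid, yet $M[\setminus\{1,2\}]$ is a rank-$2$ matrix confined to a single block and hence unsafe. So the naive cousin of your claim --- that the lex-largest basis is automatically a deviolator --- is false in general; the lemma survives only because $\Cl(A)$ is $\emptyset$ there. The proof in \cite{EI25} therefore has to exploit the precise lex-ordering and the interaction with $\Cl$ more carefully than the sketch suggests.
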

     \begin{lemma} \label{closure containment}
     If $A , B$ are affine spaces with $B \subseteq A,$ then $\Clh(A) \subseteq \Clh(B)$ (Corollary 2.19 in \cite{EI25}) and $\Cl(A) \subseteq \Cl(B)$ (Lemma 4.2 in \cite{EGI24})
    \end{lemma}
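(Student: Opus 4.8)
The plan is to reduce both assertions to a single monotonicity statement about linear spans, and then invoke the structural facts about transversal-type matroids that already appear in the cited works.

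\textbf{Step 1 (from affine spaces to spans of equations).} First I would note that $B\subseteq A$ forces the \emph{reverse} containment of the spans of defining equations. Write $A=\{x:M_Ax=\beta_A\}$ and $B=\{x:M_Bx=\beta_B\}$ with $B\neq\emptyset$. Every row $v$ of $M_A$ satisfies $\langle v,x\rangle=(\beta_A)_v$ for all $x\in A$, hence for all $x\in B$; so $\langle v,\cdot\rangle$ is constant on the nonempty affine space $B$, and since the rows of $M_B$ span exactly $(\ker M_B)^{\perp}$, which is precisely the space of linear forms constant on $B$, we get $v\in\text{span}(\text{rows of }M_B)$. Thus $\text{span}(\text{rows of }M_A)\subseteq\text{span}(\text{rows of }M_B)$. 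Because $\Cl(\cdot)$ and $\Clh(\cdot)$ of an affine space depend only on the span of its defining equations (the facts recorded in the preliminaries, Lemma 4.1 of \cite{EGI24} and Lemma 2.11 of \cite{EI25}), the Lemma becomes: whenever $U\subseteq U'$ are subspaces of $\F^{nb}$, one has $\Cl(U)\subseteq\Cl(U')$ and $\Clh(U)\subseteq\Clh(U')$, where $\Cl(U),\Clh(U)$ denote the closure / amortized closure of any spanning set of $U$. Picking a basis of $U$ and adjoining the remaining vectors one at a time, it suffices to treat $U'=U+\langle w\rangle$; fix $M$ with row span $U$ and set $M'=\left[\begin{smallmatrix}M\\ w\end{smallmatrix}\right]$.

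\textbf{Step 2 (the easy monotonicity).} Appending the coordinate $w$ to the columns of $M$ cannot create a linear dependence among columns. Hence if a block-set $S$ admits columns $c_i$ (one per block of $S$) with $\{Me_{c_i}\}$ linearly independent, the very same columns work for $M'$. So the family $\mathcal A$ of \emph{acceptable} block-sets only grows, $\mathcal A(M)\subseteq\mathcal A(M')$, and likewise every deviolator of $M$ (a block-set whose removal makes the system nice) is a deviolator of $M'$.

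\textbf{Step 3 (the hard step).} It remains to upgrade this to the statement that the \emph{specific} block-sets named by the definitions grow: that $\Clh(U)$ — the lexicographically largest member of $\mathcal A(M)$, equivalently (a standard matroid-induction fact, $\mathcal A(M)$ being the independent sets of the matroid induced from the column matroid of $M$ by the block map) the output of the greedy sweep over blocks $1,\dots,n$ that includes a block whenever acceptability is preserved — is contained in its analogue for $M'$, and that the \emph{minimal} deviolator for $M$ is contained in that for $M'$. This is \emph{not} a formal consequence of $\mathcal A(M)\subseteq\mathcal A(M')$ alone: the greedy output on a sub-matroid need not sit inside the greedy output on a larger matroid. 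What rescues it is that $M'$ has exactly one more independent row than $M$, so the new constraint also contributes a fresh dimension that can be ``charged'' to a block. For $\Cl$ one can make this precise via the submodular set function $g(X)=|X|-\dim\big(U\cap\F^{\text{VARS}(X)}\big)$: safety of $U$ is exactly ``$g\ge 0$ everywhere'', the minimal deviolator is the minimal $S$ with $g(S)\le g(T)$ for all $T\supseteq S$, and $U\subseteq U'$ gives $g_U\ge g_{U'}$ pointwise, from which $\Cl(U)\subseteq\Cl(U')$ follows as in Lemma 4.2 of \cite{EGI24}. For $\Clh$ the corresponding monotonicity is the amortized constraint-to-block charging of \cite{AI25}, packaged as Corollary 2.19 of \cite{EI25}.

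I expect Step 3, and in particular the $\Clh$ half, to be the real obstacle: unlike $\Cl$, whose closure admits the clean submodular description above, the amortized closure is pinned down by a delicate lexicographic/greedy choice, so one genuinely has to re-run the charging argument of \cite{AI25,EI25} rather than argue abstractly from the containment of the acceptable families.
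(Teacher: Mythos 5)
The paper does not prove this lemma internally; both containments are imported verbatim from Corollary~2.19 of \cite{EI25} and Lemma~4.2 of \cite{EGI24}, so there is no in-text argument to compare against. Your Step~1 reduction (from affine spaces to row spans, then to adjoining a single vector $w$) is correct, and your observation that $\mathcal{A}(M)\subseteq\mathcal{A}(M')$ is also correct. However, the parallel claim in Step~2 that ``every deviolator of $M$ is a deviolator of $M'$'' is false, and the two monotonicities do not in fact run in the same direction. If $S$ is a deviolator for $V=\text{rows}(M)$, there is no reason for $V\cup\{w\}$ projected off $S$ to remain safe: take $V=\{e_{1,1}\}$ and $w=e_{1,2}$; then $S=\emptyset$ is a deviolator for $V$ but not for $V\cup\{w\}$, since the projected span now has two independent vectors supported in a single block. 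The correct statement is the reverse: a subspace of a safe space is safe, so a deviolator for $V'=V\cup\{w\}$ is automatically a deviolator for $V$, i.e.\ $\mathcal{D}(V')\subseteq\mathcal{D}(V)$. This is not a cosmetic slip --- had your Step~2 claim held, the family $\mathcal{D}(M')$ would contain $\mathcal{D}(M)$ and its minimum would be \emph{smaller}, giving $\Cl(U')\subseteq\Cl(U)$, exactly the wrong inclusion.

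Once the direction is corrected, the $\Cl$ half is in fact a formal consequence, contrary to the worry you raise in Step~3. By Lemma~4.1 of \cite{EGI24} each deviolator family has a unique minimal element, which (for a finite family of subsets) is therefore a least element; since $\mathcal{D}(V')\subseteq\mathcal{D}(V)$ and $\Cl(V')\in\mathcal{D}(V)$, minimality of $\Cl(V)$ in $\mathcal{D}(V)$ gives $\Cl(V)\subseteq\Cl(V')$ directly, with no need for the submodular reformulation via $g$. Your diagnosis of the $\Clh$ half, on the other hand, is accurate: $\mathcal{A}(M)\subseteq\mathcal{A}(M')$ does not by itself force the lexicographically largest acceptable set to be monotone under adjoining a row, and establishing that requires the rank-increment and charging structure of \cite{AI25,EI25}. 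At that point your proposal, like the paper itself, is deferring to Corollary~2.19 of \cite{EI25} rather than giving an independent argument.
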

    \begin{lemma} \label{closure remains same}Let $V \subseteq \F^N$ be a set of linear forms with $\Cl(V)=S$. Let $W= V \cup \{e_{j,k}|j \in S , k \in [b]\} $. Then, $\Clh(V)= \Clh(W), \Cl(V)=\Cl(W)$. \end{lemma}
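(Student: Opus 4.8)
The plan is to prove the two equalities separately; in both, the inclusions $\Cl(V)\subseteq\Cl(W)$ and $\Clh(V)\subseteq\Clh(W)$ come for free. Identifying a set of vectors with the homogeneous linear system it defines---legitimate since $\Cl$ and $\Clh$ depend only on the span (Fact~\ref{fact: depends only on span} and the cited facts of \cite{EGI24,EI25})---the affine space cut out by $W$ is contained in the one cut out by $V$, so Lemma~\ref{closure containment} gives these inclusions at once. For the reverse inclusion $\Cl(W)\subseteq S$ I would verify that $S$ is a deviolator for $W$: each added vector $e_{j,k}$ with $j\in S$ is supported inside block $j\subseteq S$, so $e_{j,k}[\setminus S]=0$; hence $W[\setminus S]$ has the same span as $V[\setminus S]$, which is nice because $S=\Cl(V)$ is a deviolator for $V$. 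As $\Cl(W)$ is the minimal deviolator of $W$ this gives $\Cl(W)\subseteq S$, so $\Cl(W)=S=\Cl(V)$.

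The substance is in $\Clh(W)\subseteq\Clh(V)$. Obtain $M_W$ from $M_V$ by appending the rows $e_{j,k}$, $(j,k)\in\mathrm{VARS}(S)$. Then for a coordinate $(i,\ell)$ with $i\in S$ the corresponding column of $M_W$ carries, in the appended part, a $1$ that no other column touches; so for any block-set $T$, a choice of one column of $M_W$ per block of $T$ is linearly independent if and only if the columns coming from the blocks of $T\setminus S$ are linearly independent as columns of $M_V$. Equivalently, $T$ is acceptable for $W$ iff $T\setminus S$ is acceptable for $V$. Since $\Cl(W)=S$ and $\Cl\subseteq\Clh$, the set $\Clh(W)$ contains $S$ and is therefore the disjoint union of $S$ with $\Clh(W)\setminus S$, where the latter is acceptable for $V$. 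Assuming the key claim below---that $S=\Cl(V)$ lies in every maximal acceptable set for $V$---I extend $\Clh(W)\setminus S$ to a maximal acceptable set $D$ for $V$; then $D\supseteq S$, so $\Clh(W)\subseteq D$, and hence $\Clh(W)$ is itself acceptable for $V$. Since $\Clh(V)$ is the lexicographically largest acceptable set for $V$, we get $\Clh(W)\le_{\mathrm{lex}}\Clh(V)$; together with the already-established inclusion $\Clh(V)\subseteq\Clh(W)$ and the fact that a proper superset has a strictly larger characteristic vector, this forces $\Clh(W)=\Clh(V)$.

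It remains to prove the key claim: $S=\Cl(V)$ is contained in every maximal acceptable set for $V$. Set $d=\mathrm{rank}\,M_{V[\setminus S]}$. As $V[\setminus S]$ is nice, any acceptable set for $V$ uses at most $d$ blocks outside $S$ and at most $|S|$ blocks inside $S$, so no acceptable set has more than $d+|S|$ blocks. For the matching construction I would first prove the Hall-type inequality $\mathrm{rank}\bigl(\text{columns of }M_V\text{ in }T\cup([n]\setminus S)\bigr)\ge d+|T|$ for every $T\subseteq S$: were it violated for some nonempty $T$, then for a minimal such $T$, Rado's theorem shows that the columns of $M_V$ lying in the blocks of $T$, read modulo the span of the columns lying in $[n]\setminus S$, admit a rainbow-independent basis; adjoining a rainbow basis witnessing niceness of $V[\setminus S]$ yields a rainbow basis of $M_{V[\setminus(S\setminus T)]}$, i.e.\ shows $S\setminus T$ is a deviolator for $V$, contradicting the minimality of $S$. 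Given the inequality, Rado's theorem produces an acceptable set for $V$ of size $d+|S|$ that contains all of $S$; hence the largest acceptable set for $V$ has exactly $d+|S|$ blocks, and any maximal one---having that many blocks---must use $d$ blocks outside $S$ and all $|S|$ blocks of $S$. The one step here that is not routine bookkeeping is turning a violation of the Hall inequality into a strictly smaller deviolator; the rest is linear algebra over $\F$ and unwinding the definitions. (Alternatively, one may cite the analogous structural property of the closure operator established in \cite{EGI24}.)
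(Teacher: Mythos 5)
Your proof is correct, but it takes a genuinely different route from the paper's, and it is in fact more rigorous about the easy direction. For $\Cl(V)=\Cl(W)$ the paper simply declares this ``follows from the definition,'' whereas you spell out the deviolator argument ($e_{j,k}[\setminus S]=0$, so $W[\setminus S]$ and $V[\setminus S]$ have the same span). For the harder equality $\Clh(V)=\Clh(W)$, the paper's Appendix~A argument decomposes $\mathrm{span}(V)$ into $V_{\mathrm{in}}$ (the part supported inside $\text{VARS}(\Cl(V))$) and a complement, fixes once and for all a rainbow-independent column choice $c_1,\dots$ for $\Cl(V)$ inside $M_{V_{\mathrm{in}}}$, and then takes any maximal acceptable set $\tilde S$ for the matrix of $W$ and \emph{swaps} the chosen columns on the blocks of $\Cl(V)$ to the fixed $c_j$'s, checking by hand that linear independence is preserved (the columns outside $\Cl(V)$ vanish on the $V_{\mathrm{in}}$-rows, so the bookkeeping decouples). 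Your argument is instead matroid-theoretic: you prove the clean biconditional ``$T$ acceptable for $W$ iff $T\setminus S$ acceptable for $V$'' by inspecting the identity-block in the appended rows, and you reduce the lemma to the structural fact that $\Cl(V)$ lies in every \emph{maximal} acceptable set for $V$ (i.e.\ every block of $\Cl(V)$ is a coloop of the induced matroid), which you then establish via a Hall/Rado argument on the matroid contracted by the columns outside $S$. Your key claim is indeed true---one can also see it by noting that, among all $K\subseteq[n]$, $\Cl(V)$ maximizes the supermodular quantity $\dim\bigl(\mathrm{span}(V)\cap\F^{\text{VARS}(K)}\bigr)-|K|$, and any minimizer of the Rado defect formula therefore avoids each $j^*\in\Cl(V)$---and your Rado/minimal-$T$ sketch checks out after unwinding (for minimal violating $T$, the contracted rank is exactly $|T|-1$ and a rainbow of size $|T|-1$ from $T\setminus\{j\}$ already spans, yielding the contradiction that $S\setminus T$ is a deviolator). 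The two approaches rely on cognate but not identical facts: the paper needs ``$\Cl(V)$ is acceptable for $V_{\mathrm{in}}$,'' you need ``$\Cl(V)$ lies in every basis''; yours is a stronger statement but comes with a proof, whereas the paper's is asserted as a ``previous fact.'' Your route is longer but makes the combinatorial content explicit and avoids the paper's ad-hoc column-swapping. Two small presentational cautions: the Rado step deserves a precise statement (you are applying it to the column matroid contracted by the span of the columns outside $S$, restricted to the blocks of $T$), and ``a proper superset has a strictly larger characteristic vector'' implicitly assumes the lexicographic order on block-indicator vectors is the one in which adding an element always increases the order---true here, but worth saying.
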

Lemma \ref{closure remains same} becomes clear once one examines the proof of Lemma 2.15 in \cite{EI25} closely. For completeness we include a self-contained proof in Appendix \ref{app:amortized_closure}.

     \begin{lemma} \label{lem: amortized_closure_continuity}
     Let $V \subseteq W \subseteq \F^{nb}$ be sets of linear forms with $|W|= |V|+1$. Then, $|\Clh(W)|\leq |\Clh(V)|+1$, and moreover, if $|\Clh(W)| = |\Clh(V)|+1$ then $\Cl(W)=\Cl(V)$ (Theorem 2.18 and Lemma 2.17 in \cite{EI25}).
    
    \end{lemma}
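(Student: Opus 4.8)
\noindent\emph{Plan of proof.} I would prove the two assertions separately. For the inequality $|\Clh(W)|\le|\Clh(V)|+1$: order the defining vectors so that those of $V$ come first and the new vector $w$ comes last, so that the matrix $M_W$ whose rows are the vectors of $W$ equals $\binom{M_V}{w^{\top}}$ and each column of $M_W$ is the corresponding column of $M_V$ with one extra entry appended. Take a maximum acceptable set of blocks for $W$, of size $k=|\Clh(W)|$, witnessed by columns $c_1,\dots,c_k$ in distinct blocks with $M_W e_{c_1},\dots,M_W e_{c_k}$ linearly independent. The coordinate projection $\F^{|V|+1}\to\F^{|V|}$ that forgets the last entry has a one-dimensional kernel, so the linear combinations of the $M_W e_{c_j}$ that land in this kernel form a subspace of dimension at most $1$; hence $M_V e_{c_1},\dots,M_V e_{c_k}$ span a space of dimension at least $k-1$, and $k-1$ independent ones among them witness an acceptable set of $k-1$ blocks for $V$. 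Thus $|\Clh(V)|\ge k-1$.

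For the ``moreover'' part, assume $|\Clh(W)|=|\Clh(V)|+1$ and set $S:=\Cl(V)$. Since $W$ carries every defining equation of $V$, the affine space of $W$ lies inside that of $V$, so Lemma~\ref{closure containment} gives $\Cl(V)\subseteq\Cl(W)$. It then suffices to show $\Cl(W)\subseteq S$, i.e.\ (as $\Cl$ is the minimal deviolator) that $S$ is a deviolator for $W$, i.e.\ that $W[\setminus S]$ is safe. Write $\rho:=\text{rank}(V[\setminus S])$; safety of $V[\setminus S]$ gives $|\Clh(V[\setminus S])|=\rho$. The engine of this step is the structural identity
\[
|\Clh(V)|\;=\;|\Cl(V)|+\text{rank}\bigl(V[\setminus\Cl(V)]\bigr),
\]
which I discuss below. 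Granting it, $|\Clh(W)|=|S|+\rho+1$, so a maximum acceptable block-set for $W$ has at least $\rho+1$ blocks outside $S$; restricting its witnessing columns to those blocks yields at least $\rho+1$ linearly independent columns of $M_{W[\setminus S]}$ lying in distinct blocks, so $\text{rank}(W[\setminus S])\ge\rho+1$. On the other hand $M_{W[\setminus S]}$ is $M_{V[\setminus S]}$ with one extra row, so $\text{rank}(W[\setminus S])\le\rho+1$. Hence $\text{rank}(W[\setminus S])=\rho+1$ and is witnessed by exactly that many independent columns in distinct blocks, which by the column characterisation of safety (the equivalent form of Definition~\ref{def: safe}, from \cite{EGI24}) makes $W[\setminus S]$ safe; together with $\Cl(V)\subseteq\Cl(W)$ this yields $\Cl(W)=\Cl(V)$.

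To establish the identity I would argue as follows. For $B\subseteq[n]$ let $d(B)$ be the dimension of the space of those $z\in\text{span}(V)$ whose support lies inside $\text{VARS}(B)$ (Definition~\ref{def: VARS}), and let $g(B):=d(B)-|B|$. Unwinding Definition~\ref{def: safe}, a set of vectors is safe iff its corresponding $d$-function is pointwise $\le|\cdot|$, and, projecting out the blocks of $S$, one gets that $S$ is a deviolator for $V$ iff $g(B')\le g(S)$ for every $B'\supseteq S$. A one-line dimension count — using that the support-inside-$\text{VARS}(B)$ subspaces intersect in the support-inside-$\text{VARS}(B\cap B')$ one and sum inside the support-inside-$\text{VARS}(B\cup B')$ one — shows $d$, hence $g$, is supermodular; therefore the maximisers of $g$ are closed under $\cup$ and $\cap$, have a unique minimal element $S^{\ast}$, and a short supermodularity argument shows $S^{\ast}$ is a deviolator contained in every deviolator, i.e.\ $S^{\ast}=\Cl(V)$. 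Finally $|\Clh(V)|$ is the largest size of a partial transversal of the block family that is independent in the column matroid of $M_V$, which by the Rado / matroid-intersection min--max formula equals $\text{rank}(V)-\max_B g(B)$; since $\Cl(V)=S^{\ast}$ maximises $g$ and $g(\Cl(V))=\text{rank}(V)-\text{rank}(V[\setminus\Cl(V)])-|\Cl(V)|$, rearranging gives the identity. (Alternatively, this identity can be read off from the proof of Lemma~2.15 of \cite{EI25}, in the spirit of the remark preceding Lemma~\ref{closure remains same}.)

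I expect this structural identity to be the main obstacle: reconciling the column/matroid description of $\Clh$ with the minimal-deviolator description of $\Cl$ seems to force one through both the supermodularity of $B\mapsto d(B)-|B|$ and a transversal min--max theorem. The inequality, and the reduction of the ``moreover'' claim to the safety of $W[\setminus\Cl(V)]$, are routine once the identity is in hand.
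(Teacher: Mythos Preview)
The paper does not supply its own proof of this lemma: it is simply imported from \cite{EI25} (Theorem~2.18 and Lemma~2.17 there), so there is no in-paper argument to compare against. I will therefore just assess your proposal on its own.

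Your argument for $|\Clh(W)|\le|\Clh(V)|+1$ is correct and standard. One point you use implicitly is that $|\Clh(\cdot)|$ equals the \emph{maximum} size of an acceptable block-set, not merely the size of the lexicographically largest one; this is true (acceptable sets are common independent sets of a partition matroid and the column matroid, and the greedy/lex-largest such set attains the maximum cardinality), but you should state it.

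Your argument for the ``moreover'' part is also correct. The reduction to showing that $W[\setminus S]$ is safe, together with the counting ``at least $\rho+1$ independent witnessing columns lie outside $S$, but $\text{rank}(W[\setminus S])\le\rho+1$, hence exactly $\rho+1$'', is clean. Two remarks. First, you only use the inequality
\[
|\Clh(V)|\ \ge\ |\Cl(V)|+\text{rank}\bigl(V[\setminus\Cl(V)]\bigr),
\]
not the full identity, and this direction has a short direct proof that avoids the matroid-intersection min--max: since $V[\setminus S]$ is safe, choose $\rho$ independent columns of $M_V$ in distinct blocks of $[n]\setminus S$; then, after row-reducing, augment with $|S|$ further columns chosen from the blocks of $S$ (this is exactly the content of $\Cl(V)\subseteq\Clh(V)$ and of the machinery in the appendix proof of Lemma~\ref{closure remains same}). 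So your supermodularity/Rado route, while correct, is heavier than necessary. Second, in your matroid-intersection derivation you silently restrict the minimising set $A$ to be a union of whole blocks; this is fine (adding a column to $A$ never decreases $r_1(A)+r_2(E\setminus A)$ unless it completes a block), but it is worth one sentence.

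In short: the proposal is sound; the only slack is that you invoke a full min--max theorem where the one-sided inequality, provable by the same elementary block-splitting used elsewhere in the paper, already suffices.
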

 
We now state a useful corollary of the above. \newline
\begin{corollary} \label{both_are_nice} Let $B \subseteq A$ be affine spaces such that $\text{codim}(B)= \text{codim}(A)+1$ and $|\Clh(B)| = |\Clh(A)|+1$. Let $y$ be any $A$-extendable closure assignment. Then, $A_y, B_y$ are both nice affine subspaces and $\text{codim}(B_y)= \text{codim}(A_y)+1$.
\end{corollary}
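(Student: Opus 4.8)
The plan is to reduce the whole statement to one linear-algebraic identity relating the rank of a matrix, after deleting some columns, to the sizes of its closure and amortized closure. First I would fix coordinates: write $A=\{x:M_Ax=b_A\}$ with $M_A$ of full row rank, so that $\text{codim}(A)$ is the number of rows of $M_A$, and since $B\subseteq A$ with $\text{codim}(B)=\text{codim}(A)+1$, write $B=\{x:M_Ax=b_A,\ \langle v,x\rangle=c\}$ with $v$ outside the row span of $M_A$. Put $V$ for the row set of $M_A$ and $W=V\cup\{v\}$; then $|W|=|V|+1$, $\Clh(A)=\Clh(V)$ and $\Clh(B)=\Clh(W)$, so the hypothesis $|\Clh(B)|=|\Clh(A)|+1$ is exactly $|\Clh(W)|=|\Clh(V)|+1$. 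Lemma \ref{lem: amortized_closure_continuity} then yields $\Cl(W)=\Cl(V)$, i.e. $\Cl(A)=\Cl(B)=:S$. Consequently $y$, being an assignment to $\text{VARS}(S)=\text{VARS}(\Cl(A))=\text{VARS}(\Cl(B))$, is simultaneously a closure assignment for $A$ and for $B$; since $y$ is extendable for $A$, Lemma \ref{lem: removing closure makes affine space nice} already gives that $A_y$ is a safe (``nice'') affine space.

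The key ingredient I would isolate is the identity
\[
\text{rank}\big(M[\setminus \Cl(M)]\big)=|\Clh(M)|-|\Cl(M)|,
\]
for the row set of any matrix $M$ over $\F$, where $\Cl(M),\Clh(M)$ abbreviate the (amortized) closure of the rows of $M$ and $M[\setminus T]$ denotes the matrix whose rows are the $[\setminus T]$-projections of the rows of $M$. To prove it: by Fact \ref{fact: depends only on span} and Lemma \ref{closure remains same}, adjoining to the rows of $M$ all the standard basis vectors of $\text{VARS}(\Cl(M))$ changes neither $\Cl$ nor $\Clh$; for the enlarged matrix an elementary column count shows that a largest family of linearly independent columns lying in pairwise distinct blocks can be taken to consist of one column from each block of $\Cl(M)$ — these are forced to be independent by the adjoined rows — together with a largest such family drawn only from the blocks outside $\Cl(M)$, whence $|\Clh(M)|=|\Cl(M)|+\nu\big(M[\setminus\Cl(M)]\big)$, where $\nu(\cdot)$ is the maximum number of linearly independent columns in distinct blocks. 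Finally $\{v_i[\setminus\Cl(M)]\}$ is safe because $\Cl(M)$ is a deviolator for $\{v_i\}$, and for a safe set the equivalent definition of safety forces $\nu$ to coincide with the rank; combining gives the identity. (This identity is essentially structural content of \cite{EGI24,EI25}, so one could also simply cite it.)

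Now I apply the identity to $M_A$ and to $M_B$, both of which have closure $S$: $\text{rank}(M_A[\setminus S])=|\Clh(A)|-|S|$ while $\text{rank}(M_B[\setminus S])=|\Clh(B)|-|S|=|\Clh(A)|+1-|S|$, so $\text{rank}(M_B[\setminus S])=\text{rank}(M_A[\setminus S])+1$; equivalently $v[\setminus S]$ is \emph{not} in the row span of $M_A[\setminus S]$. Substituting the values $y$ on $\text{VARS}(S)$ into the defining equations, $A_y$ is cut out in $\F^{[nb]\setminus\text{VARS}(S)}$ by $M_A[\setminus S]$ with a consistent right-hand side (since $y$ is extendable for $A$), so $\text{codim}(A_y)=\text{rank}(M_A[\setminus S])$; and the extra defining equation of $B$ becomes, after this substitution, $\langle v[\setminus S],\tilde x\rangle=c'$ for a constant $c'$, which by the previous line is a non-redundant equation over the nonempty affine space $A_y$. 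Hence $B_y=A_y\cap\{\langle v[\setminus S],\tilde x\rangle=c'\}$ is nonempty and $\text{codim}(B_y)=\text{codim}(A_y)+1$. Nonemptiness of $B_y$ is exactly the statement that $y$ is extendable for $B$; since $y$ is also a closure assignment for $B$, Lemma \ref{lem: removing closure makes affine space nice} shows that $B_y$ is safe. This proves that $A_y,B_y$ are both nice affine subspaces with $\text{codim}(B_y)=\text{codim}(A_y)+1$.

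The main obstacle is precisely the displayed identity $\text{rank}(M[\setminus\Cl(M)])=|\Clh(M)|-|\Cl(M)|$: it is the one place where the interplay between the block partition, the closure, and the amortized closure must be unpacked (and where Lemma \ref{closure remains same} is used), while everything after it is bookkeeping with Lemmas \ref{lem: removing closure makes affine space nice} and \ref{lem: amortized_closure_continuity}. A secondary point that must not be skipped is that the corollary implicitly asserts $B_y\neq\emptyset$, i.e. that $y$ is extendable for $B$; this is not among the hypotheses and is exactly what the non-redundancy of the extra equation over $A_y$ delivers.
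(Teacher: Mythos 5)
Your proof is correct and shares the essential ingredient with the paper's argument, namely Lemma~\ref{closure remains same}, but the route you take through it is genuinely different. The paper reduces to showing that $w[\setminus S]$ (where $w$ is the extra defining equation of $B$ and $S=\Cl(A)=\Cl(B)$, obtained exactly as you do from Lemma~\ref{lem: amortized_closure_continuity}) is not in the row span of $M_A[\setminus S]$, and proves this by direct contradiction: if it were, one could replace $w$ by a vector supported on $\text{VARS}(S)$, and then Lemma~\ref{closure remains same} forces $\Clh(\{v_1,\dots,v_k,w\}) \subseteq \Clh(\{v_1,\dots,v_k\})$, contradicting $|\Clh(B)|=|\Clh(A)|+1$. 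You instead distil the mechanism into the standalone identity $\text{rank}(M[\setminus\Cl(M)])=|\Clh(M)|-|\Cl(M)|$ (proved via the same Lemma~\ref{closure remains same} together with the equivalent characterization of safety), and then simply apply it to $M_A$ and $M_B$. That is more machinery for the same endpoint, but it is structurally cleaner and makes visible exactly why the amortized closure increment controls the rank increment after the $S$-columns are stripped away. One genuine improvement in your write-up: you flag explicitly that the corollary implicitly asserts $B_y\neq\emptyset$ (i.e.\ $y$ is extendable for $B$), which is not a hypothesis, and you point out that this falls out of the non-redundancy of $w[\setminus S]$ over $A_y$. The paper's proof delivers this too, but leaves it unstated; noting it is worthwhile, since $\text{codim}(B_y)$ and ``$B_y$ is nice'' would otherwise be vacuous or ill-defined.
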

A proof of Corollary \ref{both_are_nice} is included in Appendix \ref{app:amortized_closure}.

\section{A More Detailed Overview} \label{sec:intuition}

At a high level, our proof combines the approaches of Alekseev and Itsykson \cite{AI25}, Efremenko and Itsykson \cite{EI25} and Bhattacharya, Chattopadhyay and Dvorak \cite{BCD24}. It does so by boosting the success probability of the `random walk with restart' method of \cite{AI25} by sampling inputs from a lifted distribution. The idea of using lifted distribution to do random walks appeared in \cite{BCD24}. The bottleneck counting uses the notion of amortized closure instead of codimension of an affine space as done in \cite{EI25}. However, combining these approaches requires significant new ideas -- along with a new equidistribution lemma for gadgets with sufficiently small Fourier coefficients (Lemma \ref{exponential_sum}).
In this section we give a brief overview of how these approaches fit together.

\paragraph{Approach of Alekseev and Itsykson \cite{AI25}} Recall the overall idea of the random-walk-with-restart as outlined for ordinary Resolution proof DAGs in Section~\ref{sec:brief}. We start by describing how this approach was implemented by \cite{AI25} for $\text{Res}(\oplus)$ DAGs by developing required linear algebraic machinery, and why it fell short of handling DAGs of depth $O(N \log N)$ that was relative easily possible for resolution DAGs.
The main idea in \cite{AI25} is this: they take the CNF $\Psi$ to be Tseitin contradiction over an $(n, \log(n), O(\log(n))$-expander lifted with an appropriate gadget; they assume we are given a size $s$ Res-($\oplus$) refutation $\Pi$ of $\Psi$, and they locate a path of length $n \log \log (n)$ in $\Pi$. They do this inductively: at Phase $j$, they locate a vertex $v_j$ at depth $\Omega(nj)$. Given this vertex $v_j$, they show that as long as $\text{codim}(A_{v_j})$ is not too large, there is another vertex $v_{j+1}$ which is at distance $\Omega(n)$ from $j$. They show they can inductively find one more vertex as long as $j \leq O(\log \log(n))$ - and this gives the depth lower bound. \newline

Let us describe it in a bit more detail. Alekseev and Itsykson carefully choose a set of partial assignments in the unlifted world, $P \subseteq \{0,1,*\}^n$ with the idea that any partial assignment $\rho \in P$ leaves \emph{some uncertainty}  about which clause  of the unlifted Tseitin formula would be falsified if one were to extend $\rho$ at random to a full assignment. \newline 

In the phase $j$, Alekseev and Itsykson \cite{AI25} have located a vertex $v_j$ at depth $\Omega(jn)$. They want the codimension of $A_{v_j}$, the affine space that the proof $\Pi$ associates with $v_j$, to be small ( $\leq O(j (\log(s/p)) (b+1)^j)$, which is less than $jn$ when $j$ is small enough; $p$ is a parameter we shall specify soon). Small co-dimension implies a small closure, i.e. $\text{codim}(A_{v_j}) \ge |\Cl(A_{v_j})|$. We assume that variables in the unlifted world that correspond to blocks in $\Cl(A_{v_j})$ are revealed, but variables that correspond to blocks outside of the closure, i.e. in $[n]-\Cl(A_{v_j})$ are yet not revealed. Hence, Alekseev and Itsykson fix an extendable closure assignment $y_j \in \F^{\Cl(A_{v_j}) \times [b]}$ such that $G(y_j)$ lies in $P$. 
They show (using a combinatorial argument) that the following happens  when we uniformly sample a point $x \in A_{v_j} \cap C_y$ and follow the path of $x$ from $v_j$ for $\Theta(n)$ steps or until it lands at a leaf node, whichever happens earlier: let $w$ be the vertex reached. Let $\tilde{x} \in \F^{\Cl(A_w) \times [b]}$ be the restriction of $x$ to the variables of $\text{Cl}(A_w)$. Let $\rho \in \{0,1,*\}^n$ be the partial assignment that leaves all variables outside of $\Cl(A_w)$ free and $\rho|_{\Cl(A_w)} = G(\tilde{x})$. With probability at least $p$, $\rho$ is in $P$, i.e. this $\rho$ reveals little about where a potential falsified clause may be. Let us call such a node $w$ to be \emph{good}. For this combinatorial argument to work, it is essential that the starting partial assignment, $G(y_j)$ lies in $P$ and it does not fix too many bits: $|\Cl(A_{v_j})| \leq O(n/\log(n))$.  \newline

One such good $w$ will be the next node, $v_{j+1}$ - and the next closure assignment $y_{j+1}$ could be anything in $\mathbb{F}_2^{ \Cl(A_w) \times [b]}$ such that $G(y_{j+1}) \in P$ and $y_{j+1}$ is extendible in $A_w$. The existence of such a $y_{j+1}$ trivially follows as $x|_{\Cl(A_w) \times [b]}$ satisfies those requirements.
As $w$ is good, it cannot be a leaf node for no falsified clause can be identified at $w$. Hence, all good $w$'s are at distance $\Omega(n)$ from $v_j$ - so the only condition Alekseev and Itsykson need to maintain is that the codimension of $A_w$ is not too high. They show the existence of such a $w$ using a simple bottleneck argument: there exists a $w$ such that a uniformly random $x \in A_{v_j} \cap C_y$ reaches $A_w$ with probability $\geq p/s$ as there are at most $s$ many nodes at any given distance from node $v_j$. In particular, $|A_w| \geq \dfrac{p}{s} |A_{v_j} \cap C_y|$, which implies $\text{codim}(A_w) \leq (b+1) \text{codim}\big(A_{v_j}\big) + \log(s/p) \leq O((j+1) (\log(s/p)) (b+1)^{j+1})$. \newline

Let us now briefly explain why this approach fails to go beyond depth $O(n \log \log n)$. Once $\text{codim}(A_{v_j})$ exceeds $n/\log(n)$, the underlying combinatorial argument in \cite{AI25} to get the next node fails. Hence, the depth lower bound obtained by this argument depends on the number of iterations until which $\text{codim}(A_{v_j})$ is guaranteed to be less than $n/\log(n)$. In this case, there are two factors causing rapid growth of (the guaranteed upper bound on) $\text{codim}(A_{v_j})$: first, at each step, the codimension of the next node can increase geometrically. Second, the success probability $p$ in \cite{AI25} is pretty low: around $2^{-O(n/\log(n))}$ - this also contributes to the growth of the valid upper bound on $\text{codim}(A_{v_j})$. Please recall from Section~\ref{sec:brief}, that the estimate of the success probability of the random walk being $2^{-O(n/\log n)}$ is tight, even on an ordinary resolution DAG. 

\paragraph{Improvement to depth $\Omega(N\log N)$:} In 2025, Efremenko and Itsykson \cite{EI25} bypassed the first barrier (of the codimension growing geometrically at each step) by introducing a new notion, different from co-dimension, to track progress: this notion is the amortized closure $\Clh(A)$. Notice that the reason why the codimension could be growing geometrically in \cite{AI25} was that fixing the bits of $\text{Cl}(A_v)$ to $y$ adds $b |\text{Cl}(A_v)|$ more constraints, which could be as large as $b \times \text{codim}(A_v)$. One of the key lemmas in \cite{EI25} is that if $|\Clh(A_w)| = |\Clh(v)|+k$, then $\text{Pr}_{x \sim A_v \cap C_y} [x \in A_w] \leq 2^{-k} $. In other words, if $|\Clh(A_w)| = |\Clh(A_v)|+k$, among the equations defining $A_w$, there exist $k$ linearly independent equations and moreover, these equations are also linearly independent from the equations of $A_v \cap C_y$ as the properties of amortized closure ensure $\Clh(A_v) = \Clh(A_v \cap C_y)$. Now, \cite{EI25} runs the same argument again. This time, it yields the following recursion: $|\Clh(A_{v_{j+1}})| \le |\Clh(A_{v_j})| + \log(\frac{s}{p})$, which prevents a geometric growth on the size of the amortized closure (as was happening with codimension earlier).  This ensures that $|\Clh(A_{v_j})| \leq O(j \log(s/p))$ at Phase $j$. This enabled Efremenko and Itsykson \cite{EI25} to find a vertex at depth $\Omega(N \log(N))$ assuming $s$ was $\text{exp}(N^{o(1)})$. Thus, the same simple bound on depth achieved on an ordinary resolution DAG by walking under the uniform distribution was achieved for $\text{Res}(\oplus)$ DAGs albiet only after lifting the base Tseitin formula with an appropriately stifling gadget and using the sophisticated notion of an amortized closure of a lifted linear system. 
However, the second barrier still remained, as we argued earlier that it remained even for ordinary resolution: the success probability $p$ of each phase of the random walk was very small; around $2^{-n/\log(n)}$. Thus, this argument could not go beyond depth $N \log(N)$. 

\paragraph{Our approach for depth $N^{2-\epsilon}$:} One of the main contributions of this work is getting around this low success probability barrier. As explained in Section~\ref{sec:brief}, we do that by going to non-uniform distributions. However, as said before, the bottleneck counting argument goes for a toss (even for sub-cube DAGs of ordinary resolution). To restore the ability of bottleneck counting, a basic requirement seems that our non-uniform distribution should at least fool the simplest linear algebraic notion of co-dimension. Such a fooling was devised in the work of Bhattacharya, Chattopadhyay, and Dvorak \cite{BCD24}. In \cite{BCD24}, the authors prove a separation between a restricted class of $\text{Res}(\oplus)$ refutations (known as bottom-regular refutations) and general $\text{Res}(\oplus)$ refutations. Their proof also employed a bottleneck argument, but instead of sampling from the uniform distribution, they were sampling from a lifted distribution. The key observation in \cite{BCD24} was that if $g: \F^b \rightarrow \{0,1\}$ is an appropriate gadget, then for any lifted distribution $\bar{\mu}$ and any affine space $A$, $\text{Pr}_{x \leftarrow \bar{\mu}} [x \in A] \leq 2^{-\Omega(\text{codim}(A)/b)}$. This would be sufficient, as it essentially was for \cite{BCD24}, to do bottleneck counting at the end of the first phase of the random walk, starting from the root of the DAG. \newline

To do bottleneck counting at subsequent restarts, one needs a conditional version of the above fooling statement in which one samples from the lifted fooling distribution, conditioned on the sampled input lying in some affine space $A$. Here, $A$ corresponds roughly to the affine space associated with a node of the DAG walk from where we need to re-start.
One might hope that conditional version is true: if $B \subseteq A$ are two affine spaces and $\tilde{\mu}$ is a lifted distribution, then $\text{Pr}_{x \leftarrow \tilde{\mu} } [x \in B | x \in A] \leq 2^{-\Omega(\text{codim}(B)-\text{codim}(A))/b}$. If this were true, we could modify the proof of \cite{AI25}: instead of sampling the input uniformly from $A_v \cap C_y$, we could sample from a lifted distribution tailored to our needs - which can hopefully boost the success probability. Unfortunately, such a statement cannot be true for any gadget, as the following counterexample shows. \newline

\textbf{Counterexample to a naive idea of conditional fooling} \newline
\begin{center}
\fbox{
\begin{minipage}{30em}
Let $t \in \F^b$ be a point, such that the first bit (wlog) is $g$-sensitive at $t$, i.e. $g(t) \neq g(t\oplus e_{\{1\}})$. WLOG, let $g(t)=0$. The equations for $A$ are as follows: for all $i \in [n], j \in [t] \setminus \{1\}, x_{ij}=t_j$. In $B$, we add the following extra equations: for all $i \in [n],$ $x_{i1}=t_1$. Let $\bar{\mu}$ be the uniform distribution on $G^{-1}(0^n)$. Then, even though $\text{codim}(B)= \text{codim}(A)+n$,
$$ \text{Pr}_{x \leftarrow \bar{\mu} } [x \in B | x \in A]=1$$
\end{minipage}
}
\end{center}

Intuitively, the reason why conditional fooling does not happen in this counterexample is that $A$ fixes too many linear forms in a block - and thus, when sampling from $G^{-1}(0^n) \cap A$, the distribution on each block is not controllable. One might imagine if the equations defining $A$ do not concentrate too much on any single block, the distribution $G^{-1}(z) \cap A$ behaves more nicely. One notion of the equations defining $A$ not concentrating on any single block is that $A$ is a safe affine space. Indeed, it turns out that the conditional fooling conjecture is actually true when $A$ and $B$ are both safe affine spaces (Lemma \ref{increase_by_one}) and the gadget $g$ is \emph{nice}, i.e. all Fourier coefficients of $g$ are sufficiently small in $n$, the size of the unlifted space.
 Given this, it is not hard to show that lifted distributions fool amortized closure. 
In particular, we shall show that if $\bar{\mu}$ is any $g$-lifted distribution, $g$ being nice, and $|\Clh(A_w)|= |\Clh(A_v)|+k$, then $\text{Pr}_{x \leftarrow \bar{\mu} \cap C_y} [x \in A_w | x \in A_v] \leq (3/4)^k$ (Conditional Fooling Lemma, i.e. Lemma \ref{conditional_fooling}). \newline

Thus, the Conditional Fooling Lemma crucially allows us to do bottleneck counting wrt the amortized closure of affine spaces at the end of a random walk after re-start from a node $v$, when the walk is being triggered by a suitably lifted distribution. However, we still need to find a concrete lifted distribution that would boost the success probability of the walk on a $\text{Res}(\oplus)$ DAG. Designing such a lifted distribution and analyzing the corresponding random walk is our second contribution. To do this, we first formulate a certain notion of hardness against \emph{parity decision trees} (PDT) of depth $q$ that are promised that the input comes from a $g$-lifted distribution, conditioned on lying inside an affine space $A$ (corresponding to the affine space of the node of the DAG from where a walk would restart) with $|\Clh(A)| \le p$ . We call this notion as $(p,q)$-PDT hardness, wrt the gadget $g$. This notion was inspired by the analysis of certain combinatorial games by Alekseev and Itsykson \cite{AI25}.  Section~\ref{sec:randomwalk} shows that $\text{Res}(\oplus)$ proofs of size $s$ for a lifted CNF $\Phi \circ g$  must have depth $\Omega\left(\frac{pq}{\log s}\right)$, whenever  $g$ is nice and $(\Phi,g)$ is $(p,q)$-PDT-Hard. This requires the use of the Conditional Fooling Lemma.  With its help, we have reduced the problem of proving size-depth tradeoff lower bounds for $\text{Res}(\oplus)$ DAGs to that of proving certain kind of hardness against PDTs wrt lifted distributions. \newline

At this point, it is natural to wonder if the problem can be reduced even further to proving appropriate hardness against ordinary decision trees (DT). This is exactly what we do after formulating such a notion of hardness. This notion we call $(p,q)$-DT-Hardness which was stated in Section~\ref{sec:brief} and is formally stated in Section~\ref{sec:lifting_from_DT_hardness}. Finally, we prove in Section~\ref{sec:lifting_from_DT_hardness} that if any CNF formula $\Phi$ is $(p,p+q)$-DT-hard, then the pair $(\Phi,g)$ is $(p,q)$-PDT Hard. For this reduction, we make use of our  Equidistribution Lemma, Lemma~\ref{exponential_sum}, which is independently interesting. \newline

In this way, we get a novel lifting theorem: $(p,q)$-DT hardness of any CNF $\Phi$ lifts to yield that size $s$ $\text{Res}(\oplus)$ proofs of $\Phi \circ g$ must have depth $\Omega\left(\frac{pq}{\log s}\right)$ as long as $g$ is a nice gadget whose block size is not too large. All that remains is to find a CNF that is appropriately DT hard and to find a $g$ that is nice and does not have a very large block size. The latter is simple to find: any bent function like Inner-product on $O(\log n)$ many bits suffices. To find the former, we show in Section~\ref{sec:proving_DT_hardness}, that a Tseitin formula defined over a $(n,d,\lambda)$ expander graph, for some constant $\lambda < 1$, is $(\Omega(n),\Omega(n))$-DT-Hard. This, finally, gives us the required formula for proving our main result of exponential lower bounds on the size of $\text{Res}(\oplus)$ proofs that have depth $N^{2-\epsilon}$ for any constant $\epsilon > 0$, where $N$ is the number of variables. The number of clauses of the lifted formula is $N^{O(1)}$. It is worth noting that this last result of ours has been directly used very recently by Byramji and Impagliazzo \cite{BI25b}.

\section{Organization of the Rest of our  Paper} \label{sec:overview}
We first describe the CNF in our final result in Section \ref{sec:CNF}. We prove our final result (Theorem \ref{tseitin_lower_bound}) in Section \ref{sec:final_result}, which uses machinery developed in Sections \ref{sec:conditional_fooling}, \ref{sec:randomwalk}, \ref{sec:lifting_from_DT_hardness}, and \ref{sec:proving_DT_hardness}. The following figure, denoting the dependency graph of the various components needed for our main result, depicts the organization of our work.

\begin{tikzpicture}[scale=.2, thick, auto=left,every node/.style={align= center, rectangle, draw, double, rounded corners}]
\node  (n0) at (0,0) {Section \ref{sec:conditional_fooling}: Conditional Fooling Lemma (Lemma \ref{conditional_fooling}) \\ and Equidistribution Lemma (Lemma \ref{exponential_sum})};
\node  (n1) at (-20,-20) {Section \ref{sec:randomwalk}: PDT-Hardness $ \implies $ lower bounds \\ for depth-restricted $\text{Res}(\oplus)$ \\ (Theorem \ref{random_walk_with_restart_official})};
\node  (n2) at (20,-20) {Section \ref{sec:lifting_from_DT_hardness}: \\ DT-Hardness $\implies$ PDT-Hardness \\ (Theorem \ref{thm: lifting})};
\node  (n3) at (0,-40) {Theorem \ref{ref: DT_hardness_implies_lower_bounds}: \\ DT-Hardness $ \implies $ lower \\ bounds for depth-restricted $\text{Res}(\oplus)$};
\node  (n4) at (30,-40) {Section \ref{sec:proving_DT_hardness}: Tseitin contradiction \\ is DT-hard \\ (Theorem \ref{thm: DT_hard})};
\node  (n5) at (0,-60) {Section \ref{sec:final_result}: Proof of Main Theorem (\ref{tseitin_lower_bound}): \\ Lower bounds for near-quadratic depth $\text{Res}(\oplus)$ proofs};
\draw [->] (n0) -- (n1); 
\draw [->] (n0) -- (n2); 
\draw [->] (n1) -- (n3); 
\draw [->] (n2) -- (n3); 
\draw [->] (n3) -- (n5); 
\draw [->] (n4) -- (n5); 
\end{tikzpicture}

 \vspace{2mm}
Section \ref{sec:proving_DT_hardness} can be read independently of all other sections.

\section{Conditional fooling lemma}\label{sec:conditional_fooling}

Throughout this section, assume the gadget $g: \F^b \rightarrow \F$ has the following property.

\bigskip

\fbox{
\begin{minipage}{35em}
\begin{itemize}
\item For all $S \subseteq [b], |\hat{g}(S)| \leq \dfrac{1}{n^{2+\eta}}$ for some constant $\eta > 0$
\end{itemize}
\end{minipage}
} \vspace{2mm}

An explicit example of such a gadget is the Inner Product gadget on $(4+2\eta) \log(n)$ bits.

\bigskip

In this section, we will establish a key result that shows that lifted distributions \emph{fool} amortized closure (Lemma \ref{conditional_fooling}). 

In the following, we state a fact that will be, in some sense, a significant generalization of the following simple, well known fact: If $B\subseteq A\subseteq\mathbb{F}_2^{nb}$ are two affine spaces, then $Pr_{x \sim A}[x \in B] \le 2^{\text{codim}(A)-\text{codim}(B)}$. This fact was generalized recently by Efremenko and Itsykson \cite{EI25}. Let $y$ be an extendible assignment to the variables in closure of $A$, i.e. $\Cl(A)$. Then, Lemma 5.1 of \cite{EI25}, that they point out is their key lemma for improving the lower bound on resolution over parities, shows the following:
\begin{eqnarray}  \label{EI-fooling}
\Pr_{x \sim A \cap C_y}[x \in B] \le 2^{|\Clh(A)| - |\Clh(B)|}.
\end{eqnarray}

Note here we cannot hope to work with co-dimension of $B$ and $A$ as shown in the counter-example to the naive idea of conditional fooling as discussed earlier in Section~\ref{sec:intuition}. 
The argument in \cite{EI25} uses a convenient property of amortized closure, combined with simple linear algebra. In another direction, Bhattacharya, Chattopadhyay and Dvorak \cite{BCD24} showed the following: if $g$ is a gadget with certain properties, then the following is true for every $z \in \{0,1\}^n$:
\begin{eqnarray}  \label{BCD-fooling}
\Pr_{x \sim G^{-1}(z)}[x \in B] \le 2^{-\Omega(\text{codim}(B)/b)}
\end{eqnarray}

Below, we prove our main lemma which has the features  of both \eqref{EI-fooling} and \eqref{BCD-fooling}.

\begin{lemma}[Conditional Fooling Lemma]
\label{conditional_fooling}
    Let $B \subseteq A \subseteq \F^{nb}$ be affine subspaces such that $|\Clh(B)| = |\Clh(A)|+k$. Let $y \in \F^{\Cl(A) \times [b]}$ be an $A$-extendable closure assignment, and let $\mu$ be a distribution on $\F^n$ such that $z|_{\Cl(A)}=G(y)$ for every $z \in \text{supp}(\mu)$. Then,
    $$ \text{Pr}_{x \sim G^{-1}(\mu) \cap C_y} [x \in B | x \in A] \leq \left(\dfrac{3}{4}\right)^k$$
\end{lemma}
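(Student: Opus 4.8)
# Proof Plan for Lemma~\ref{conditional_fooling}

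The plan is to reduce the statement to the "safe" case and then prove the safe case via an exponential-sum (Fourier) argument. First I would invoke the closure machinery: since $y$ is an extendable closure assignment for $A$, Lemma~\ref{lem: removing closure makes affine space nice} tells us that $A_y$ is a safe affine subspace of $\F^{\text{VARS}([n]\setminus\Cl(A))}$. I would similarly want $B$ restricted to the same conditioning to be safe; the delicate point is that $y$ need not be a closure assignment for $B$, so I would argue using Lemma~\ref{closure remains same} and Lemma~\ref{closure containment} (with $\Cl(A) \subseteq \Cl(B)$, hence also $\Clh(A)\subseteq\Clh(B)$) that after conditioning on $y$ and restricting attention to the free blocks, we may reduce to two genuinely safe affine spaces $B' \subseteq A'$ in a smaller ambient lifted space, with $|\Clh(B')| = |\Clh(A')| + k$ preserved, and with the sampling distribution still of the form $G^{-1}(z')$ for an appropriate sub-point $z'$. (This is precisely the kind of "peel off the closure" step used in \cite{EI25}; I expect it to go through with only bookkeeping.)

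Next, reduced to the safe case, I would show that $k$ of the defining equations of $B'$ are linearly independent modulo the equations of $A'$ and, crucially, that these can be chosen so that they remain "informative" against the lifted distribution $G^{-1}(z')$. Here the safety of $A'$ is exactly what rules out the counterexample in the introduction: because the equations of $A'$ do not concentrate on any single block, conditioning $x \sim G^{-1}(z')$ on $x \in A'$ still leaves enough entropy block-by-block. Concretely, I would pick an equation $\langle \ell, x\rangle = c$ among the $k$ new ones, and using safety locate a block where $\ell$ is supported on a coordinate that is "free" in the sense that the corresponding column is linearly independent from everything seen so far. Then I would estimate
\[
\Pr_{x \sim G^{-1}(z') \cap A'}\big[\langle \ell, x\rangle = c\big] \le \tfrac12 + \varepsilon
\]
for some $\varepsilon$ controlled by $\|\hat g\|_\infty$. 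This is the exponential-sum heart of the argument: writing the indicator $\tfrac12\big(1 + (-1)^{\langle\ell,x\rangle + c}\big)$ and expanding $G^{-1}(z')$ block-by-block (each block contributes a uniform distribution over a coset of $g^{-1}(\cdot)$, whose characters are exactly the Fourier coefficients $\hat g(S)$), the bias term becomes a product over blocks of factors each bounded by $\|\hat g\|_\infty \le 2^{-\alpha b}$ on at least one block, so $\varepsilon \le 2^{-\alpha b} = 2^{-250\log n} = n^{-250}$, which is far below $1/4$. Hence each such equation cuts the conditional probability by a factor $\le 3/4$.

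Finally I would iterate: having conditioned on one new equation holding (an event of probability $\le 3/4$), the remaining space is still safe and we have $k-1$ remaining independent new equations, so induction on $k$ gives the bound $(3/4)^k$. The one point requiring care in the iteration is that conditioning on an affine equation must preserve safety of the residual space and must keep the conditional distribution in the lifted family (possibly over a smaller block set) — this is where I would again lean on Fact~\ref{fact: depends only on span} and the column-based characterization of safety, choosing the pivot columns so that removing a used block from consideration leaves a valid safe system. The main obstacle I anticipate is precisely this bookkeeping: making sure that "safe" is maintained under the successive conditionings and that the Fourier expansion always has at least one block on which a nontrivial character of $g$ appears, so that the $2^{-\alpha b}$ bound kicks in every time rather than degenerating. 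Getting the constants to land exactly at $3/4$ (rather than, say, $1/2 + o(1)$) is a matter of absorbing the $n^{-250}$ slack, which is comfortable. Once the safe case with the per-step factor $1/2 + n^{-\Omega(1)} \le 3/4$ is established, the lemma follows.
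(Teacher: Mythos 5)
Your proposal correctly identifies the two ingredients the paper uses — peeling off a closure to obtain safe spaces, and an exponential-sum estimate for a single informative step — but the way you combine them has a genuine gap: the claimed reduction to two safe spaces $B' \subseteq A'$ with $|\Clh(B')| = |\Clh(A')| + k$ does not go through. Conditioning on a closure assignment $y$ for $A$ makes $A_y$ safe (Lemma~\ref{lem: removing closure makes affine space nice}), but $B_y$ is safe only when $\Cl(B) = \Cl(A)$, and in general $\Cl(B)$ is strictly larger. A concrete failure: take $A = \F^{nb}$ (so $\Cl(A) = \emptyset$, $y$ is trivial) and $B = \{x : x_{1,1} = 0,\ x_{1,2} = 0\}$; then $A_y = A$ is safe but $B_y = B$ is not, because both defining rows are supported on block $1$. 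Nor can you fix this by further conditioning on $\Cl(B)$: doing so collapses $B_{y'}$ to the full space and destroys the $\Clh$ gap (in this example it drops from $1$ to $0$), so the preservation clause $|\Clh(B')| = |\Clh(A')| + k$ fails. Your iteration step has the same difficulty — adding one equation to a safe affine space need not keep it safe, which is precisely why $\Cl$ is a non-trivial notion — so ``the remaining space is still safe'' is unjustified as stated.

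The paper's proof resolves this not by reducing $B$ to a safe space, but by interpolating a chain $B = W_0 \subseteq W_1 \subseteq \cdots \subseteq W_l = A$ with each step of codimension one, telescoping the conditional probability, and applying the Fourier estimate only at those $\geq k$ steps $j$ where $|\Clh(W_j)| = |\Clh(W_{j+1})| + 1$. The decisive fact is Lemma~\ref{lem: amortized_closure_continuity}: at exactly such a step, $\Cl(W_j) = \Cl(W_{j+1})$, so conditioning on one closure assignment (a superset of $\Cl(A)$, handled via Corollary~\ref{extended_version}) renders \emph{both} $(W_{j+1})_{y^{(1)}}$ and $(W_j)_{y^{(1)}}$ safe simultaneously (Corollary~\ref{both_are_nice}), even though $B$ itself is never made safe. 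Each such step then contributes a factor $\leq 1/2 + o(n^{-50}) \leq 3/4$, via exactly the exponential-sum argument you sketch (Lemma~\ref{exponential_sum}, Lemma~\ref{uniform_coset}, Lemma~\ref{fooling_nice_subspaces}). So the analytic core of your plan is right; the structural decomposition that lets you invoke it is what's missing, and you should look for it in the codimension-one chain together with the $\Cl$-preservation property of amortized-closure jumps, rather than in a one-shot reduction to the safe case.
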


Currently, we do not know of a short argument to prove this. We prove it here in steps, establishing some equidistribution properties of gadgets with small Fourier coefficients wrt \emph{safe} affine spaces that seem independently interesting. 

\begin{lemma}[Equidistribution Lemma]
\label{exponential_sum}
Let $A\subseteq \F^{nb}$ be a safe affine space. Then, for all $z \in \F^{n},$ 
$$ \text{Pr}_{x \sim A} [G(x)=z] \in \left[1 \pm o(n^{-1-\eta/2}) \right] \dfrac{1}{2^n}$$
\end{lemma}

\begin{proof}

    Let $\text{codim}(A)=m$. Fix a $z \in \F^n$. Since $\text{Pr}_{x \leftarrow \F^{nb}} [x \in A]= 2^{-m},$ it suffices to show that
$$ \text{Pr}_{x \sim \mathbb{F}_2^{nb}} [x \in A \land G(x)=z] \in \left[ \dfrac{1-o(n^{-1-\eta/2})}{2^{m+n}}, \dfrac{1+o(n^{-1-\eta/2})}{2^{m+n}} \right].$$

    Let $M \in \F^{m \times nb}$ be a matrix for the equations defining $A$. Since $A$ is safe, there exist $m$ blocks such that one can choose one column from each block, such that those columns are linearly independent. WLOG (for notational convenience) assume those blocks are $1, 2, \cdots , m$, and from block $j$ we choose column $a_j$. \newline 

    We first rewrite the system of equations in a more convenient form. Since the matrix $M$ restricted to column set $S=\{(j,a_j) | 1 \leq j \leq m\}$ is invertible, we can perform row operations on $M$ so that the submatrix $ M_{[m], S}$ becomes $I_m$. Let $
    \ell_i$ denote the $i$-th row of this modified matrix. Thus, for every $i \in [m]$, there exists a $c \in [b]$ such that $l_i$ has a non-zero entry at coordinate $(i,c)$, and for every $i' \neq i$, $l_{i'}$ has a zero entry at coordinate $(i,c)$. An easy but crucial consequence of this is the following.
    
    \begin{observation} \label{crucial_observation} For every subset $T \subseteq [m]$, the vector $\displaystyle \sum_{j \in T} \ell_j$ has a non-zero coordinate in the $j$-th block for each $j \in T$.\end{observation} 

    Suppose the system of equations in this basis is
    \begin{align*}
        \langle \ell_1, x \rangle & = c_1 \\
        \langle \ell_2, x \rangle&  = c_2 \\ 
        \cdots & \cdots \\
        \langle \ell_m, x \rangle & = c_m
    \end{align*}

    \textit{Notation}: for an assignment $x \in \F^{nb}$, we denote by $x(i) \in \F^{b}$ the restriction of $x$ to the $i$'th block. \newline
    
    Let $p \coloneqq \text{Pr}_{x \sim \F^{nb}} [x \in A \land G(x)=z]$. We have

    \begin{align*}
        p & = \displaystyle \text{E}_{x} \left( \displaystyle \prod_{j=1}^{n} \left( \dfrac{1 + (-1)^{g(x(i))+z_i}}{2} \right) \displaystyle \prod_{j=1}^{m} \left( \dfrac{1 + (-1)^{\ell_j(x)+c_j}}{2} \right) \right)
    \end{align*}
    Expanding the RHS, we get the following expression:
    \begin{align*}
        p - \dfrac{1}{2^{n+m}} & = \displaystyle \sum_{\substack{S \subseteq[n]\\ T \subseteq [m] \\ S \cup T \neq \phi}} \text{E}_x \left[ \dfrac{(-1)^{\displaystyle \sum_{i \in S} (g(x(i))+z_i) + \displaystyle \sum_{j \in T} (\ell_j(x) + c_j)}}{2^{n+m}} \right]
    \end{align*}
    For $S \subseteq [n], T \subseteq [m]$ let $f_{S,T}(x) \coloneqq (-1)^{\displaystyle \sum_{i \in S} g(x(i)) + \displaystyle \sum_{j \in T} \ell_j(x)}$ and $u_{S,T} \coloneqq \displaystyle \sum_{i \in S} z_i + \displaystyle \sum_{j \in T} c_j$. We have

    \begin{align*}
    p - \dfrac{1}{2^{n+m}} &= \dfrac{1}{2^{n+m}}\displaystyle \sum_{\substack{S \subseteq [n] \\ T \subseteq [m] \\ S \cup T \neq \phi}} (-1)^{u_{S,T}} \displaystyle \text{E}_x [f_{S,T}(x)]
    \end{align*}

    We start by showing that $\text{E}_x [f_{S,T}(x)]$ vanishes unless $T \subseteq S$. This is where we use the safety of $A$. 
    \begin{claim}
    \label{vanishing}
        If $T \not \subseteq S$, $\text{E}_{x} [f_{S,T}(x)]=0$
    \end{claim}
    \begin{proof}
        Let $u \in T \setminus S$. By Observation \ref{crucial_observation}, there exists a coordinate $k$ in the $u$-th block on which $\displaystyle \sum_{j \in T} \ell_j$ is non-zero: $\displaystyle \sum_{j \in T} (\ell_j)_{(u,k)}=1$. Since $u \not \in S$, this coordinate does not affect $\displaystyle \sum_{i \in S} g(x(i))$. So we have that for all $x$, $f_{S,T}(x) = -f_{S,T}(x \oplus e_{u,k})$. Therefore, exactly half of the $x$'s have $f_{S,T}(x)=1$ and the result follows.
    \end{proof}

    It now suffices to bound the terms where $T \subseteq S$. We do this using the fact that all Fourier coefficients of $g$ are small.

    \begin{claim}
    \label{discrepancy}
        If $T \subseteq S$, $ | \text{E}_x [f_{S,T}(x)] | \leq  n^{-(2+\eta)|S|}.$
     \end{claim}
     \begin{proof}
       Let $g^{\oplus S}: \F^{b \times |S|} \rightarrow \F$ be the XOR of $|S|$ disjoint copies of $g$; for $y \in \F^{b \times |S|}$,
       $$ g^{\oplus S} (y) = \left( \displaystyle \sum_{i \in S} g(y(i))\right) \pmod{2}$$
       Note that $||\widehat{g^{\oplus S}}||_{\infty} = \left(||\hat{g}||_{\infty}\right)^{|S|} \leq n^{-(2+\eta)|S|}$. Therefore,
       $$ \left|  \text{E}_{x} [f_{S,T}(x)] \right| = \left| \widehat{g^{\oplus S}}\left(\text{supp}\left(\displaystyle \sum_{j \in T} l_j\right)\right)\right| \leq  n^{-(2+\eta)|S|}.$$
      \end{proof}

     Now, we upper bound the magnitude of the error as follows. If $|S|=k$, Claim~\ref{vanishing} implies that there are at most $2^k$ possible values of $T$ for which $\text{E}_x [f_{S,T}(x) ]\neq 0$. Claim \ref{discrepancy} implies that the magnitude of each of these terms is at most $ n^{-(2+\eta)k}$. Thus, we get that

    \begin{align*}
    \left| p - \dfrac{1}{2^{n+m}} \right| &= \dfrac{1}{2^{n+m}}\displaystyle \left|\sum_{\substack{S \subseteq [n] \\ T \subseteq [m] \\ S \cup T \neq \phi]}} (-1)^{u_{S,T}} \displaystyle \text{E}_x [f_{S,T}(x)] \right|
    \\ & \\
    \\ & \leq \dfrac{1}{2^{n+m}} \displaystyle \sum_{k=1}^{n} \binom{n}{k} 2^k  n^{-(2+\eta)k} \\
    & \leq \dfrac{1}{2^{n+m}} \displaystyle \sum_{k=1}^{n} \exp(k (\log (n) + 1 - (2 + \eta)\log(n))) \\
    & \leq \dfrac{1}{2^{n+m}} o(n^{-1-\eta/2})
    \end{align*}
This completes the proof.
    
\end{proof}

We will now show that the set of pre-images of an arbitrary $z \in \{0,1\}^n$, are approximately equidistributed among the various translates of a safe affine space in the lifted world. 

\begin{lemma}
    \label{uniform_coset}
    Let $A \subseteq \F^{nb}$ be a safe affine space with codimension $m$, and let $z \in \F^n$ be a target point. Then,
    $$ \text{Pr}_{x \sim G^{-1}(z)} [x \in A] \in \left[  \dfrac{1 - o(n^{-1-\eta/3})}{2^m}, \dfrac{1+o(n^{-1-\eta/3})}{2^m}\right]$$
\end{lemma}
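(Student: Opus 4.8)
The plan is to reduce the statement to Lemma~\ref{exponential_sum} by writing the conditional probability as a ratio of two unconditional probabilities over the full cube:
\[
\text{Pr}_{x \sim G^{-1}(z)}[x \in A] \;=\; \frac{\text{Pr}_{x \sim \F^{nb}}[x \in A \land G(x) = z]}{\text{Pr}_{x \sim \F^{nb}}[G(x) = z]}.
\]
Forming this ratio is legitimate once we know the denominator is positive, equivalently that $G^{-1}(z) \neq \emptyset$; this will fall out of the estimate below. Since $A$ is a safe affine space of codimension $m$, Lemma~\ref{exponential_sum} gives that the numerator lies in $\bigl[(1-o(n^{-100}))/2^{m+n},\,(1+o(n^{-100}))/2^{m+n}\bigr]$, so the only work left is to pin down the denominator up to a $(1+o(n^{-100}))$ factor around $2^{-n}$.

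For the denominator I would use that $G$ applies $g$ independently to the $n$ blocks, so $\text{Pr}_{x\sim\F^{nb}}[G(x)=z] = \prod_{i=1}^n \text{Pr}_{y\sim\F^b}[g(y)=z_i]$. Writing the indicator as $\mathbf{1}[g(y)=z_i] = \tfrac12\bigl(1 + (-1)^{g(y)+z_i}\bigr)$ and taking expectations, each factor equals $\tfrac12\bigl(1 + (-1)^{z_i}\hat g(\emptyset)\bigr)$, and $|\hat g(\emptyset)| \le \|\hat g\|_\infty \le 2^{-\alpha b}$ by the hypothesis on the gadget. Hence each factor lies in $\tfrac12(1 \pm 2^{-\alpha b})$, and since $b = \tfrac{250}{\alpha}\log n$ we have $2^{-\alpha b} = n^{-250}$, so the product of the $n$ factors lies in $(1 \pm n^{-250})^n / 2^n = (1 \pm o(n^{-100}))/2^n$. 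In particular the denominator is strictly positive, so $G^{-1}(z)$ is nonempty and the ratio above is well defined.

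Dividing the two estimates gives $\text{Pr}_{x\sim G^{-1}(z)}[x\in A] \in \bigl[\tfrac{1-o(n^{-100})}{1+o(n^{-100})}\cdot\tfrac1{2^m},\ \tfrac{1+o(n^{-100})}{1-o(n^{-100})}\cdot\tfrac1{2^m}\bigr] = \bigl[\tfrac{1-o(n^{-100})}{2^m},\ \tfrac{1+o(n^{-100})}{2^m}\bigr]$, which is comfortably inside the claimed window $\bigl[\tfrac{1-o(n^{-90})}{2^m},\,\tfrac{1+o(n^{-90})}{2^m}\bigr]$ (recall $o(n^{-100}) \subseteq o(n^{-90})$). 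There is no real obstacle here: the argument is a one-line conditioning identity combined with the already-proved Lemma~\ref{exponential_sum} and an elementary product bound. The only points needing a touch of care are the normalization in the denominator — the single-block bias of $g$ is exactly $\hat g(\emptyset)$, which is what $\|\hat g\|_\infty$ controls — and checking $G^{-1}(z)\neq\emptyset$ before conditioning, which the lower bound on the denominator supplies for free; the gap between the $n^{-100}$ we obtain and the $n^{-90}$ in the statement is harmless slack.
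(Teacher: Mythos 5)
Your proof is correct and follows essentially the same route as the paper: both compute $\text{Pr}_{x\sim G^{-1}(z)}[x\in A]$ as a ratio of two quantities, each controlled by the exponential-sum machinery. The paper decomposes $\F^{nb}$ into the $2^m$ cosets $A_1,\dots,A_M$ of $A$, applies Corollary~\ref{equidistribution} to each coset to get $|S_j|/|A| \in [1 \pm o(n^{-100})]2^{-n}$, and writes the conditional probability as $|S_1|/\sum_j |S_j|$; your version writes the same conditional probability as $\text{Pr}[x\in A\wedge G(x)=z]/\text{Pr}[G(x)=z]$ and estimates the two pieces directly. These are the same computation in slightly different packaging (note $\sum_j|S_j| = |G^{-1}(z)| = 2^{nb}\cdot\text{Pr}[G(x)=z]$). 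The one thing you do from scratch that the paper reuses: your explicit product bound on the denominator $\text{Pr}_{x\sim\F^{nb}}[G(x)=z] = \prod_i\text{Pr}[g(y)=z_i] \in [1\pm o(n^{-100})]2^{-n}$ is precisely the $m=0$ case of Lemma~\ref{exponential_sum} (or Corollary~\ref{equidistribution} applied to the trivially safe space $\F^{nb}$), so you could have invoked it rather than re-deriving it; but your direct derivation is correct and has the small virtue of making explicit that the denominator estimate uses only the single Fourier coefficient $\hat g(\emptyset)$ rather than the full spectrum. Your check that the denominator is positive before conditioning is a good hygiene point that the paper leaves implicit.
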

\begin{proof}
    Let $A_1=A, A_2, \cdots, A_M$ be the $M= 2^m$ translates of $A$. Let $S_j= G^{-1}(z) \cap A_j$. Lemma \ref{exponential_sum} implies $\dfrac{|S_j|}{|A|} \in \left[ \dfrac{1-o(n^{-1-\eta})}{2^n}, \dfrac{1+o(n^{-1-\eta})}{2^n} \right]$ for all $j$. We have
    \begin{align*} \text{Pr}_{x \sim G^{-1}(z)} [x \in A]  &= \dfrac{|S_1|}{\displaystyle \sum_{j} |S_j|} \in \left[\dfrac{1-o(n^{-1-\eta/2})}{1+o(n^{-1-\eta/2})} \times \dfrac{1}{2^m} , \dfrac{1+o(n^{-1-\eta/2})}{1-o(n^{-1-\eta/2})} \times \dfrac{1}{2^m} \right] \\
    &= \left[ \dfrac{1-o(n^{-1-\eta/3})}{2^m}, \dfrac{1+o(n^{-1-\eta/3})}{2^m} \right]\end{align*}
    
\end{proof}

Using the above, we show below that if $B \subset A$ are two safe affine spaces, then $B$ cannot significantly distinguish the distributions $x \sim (G^{-1}(z) \cap A)$  and $x \sim A$.

\begin{lemma}
\label{fooling_nice_subspaces}
    Let $B \subseteq A \in \F^{nb}$ be safe affine subspaces such that $\text{codim}(B)= \text{codim}(A)+1$. Let $z \in \F^n$ be any point. Then,
    $$ \text{Pr}_{x \sim G^{-1}(z)} [x \in B | x \in A] \leq \dfrac{1}{2} + o(n^{-1-\eta/4})$$
\end{lemma}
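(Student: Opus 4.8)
The plan is to reduce the claim about the lifted distribution $G^{-1}(z)$ to the two equidistribution statements already proved, namely Lemma~\ref{uniform_coset} (which I'll apply to both $A$ and $B$). Since $B \subseteq A$, I can write
\begin{align*}
\text{Pr}_{x \sim G^{-1}(z)}[x \in B \mid x \in A] = \frac{\text{Pr}_{x \sim G^{-1}(z)}[x \in B]}{\text{Pr}_{x \sim G^{-1}(z)}[x \in A]}.
\end{align*}
Both $A$ and $B$ are safe by hypothesis, so Lemma~\ref{uniform_coset} applies to each. Writing $m = \text{codim}(A)$, so that $\text{codim}(B) = m+1$, the numerator lies in $\left[\frac{1 - o(n^{-90})}{2^{m+1}}, \frac{1+o(n^{-90})}{2^{m+1}}\right]$ and the denominator lies in $\left[\frac{1 - o(n^{-90})}{2^m}, \frac{1+o(n^{-90})}{2^m}\right]$.

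Taking the worst-case ratio — largest possible numerator over smallest possible denominator — gives
\begin{align*}
\text{Pr}_{x \sim G^{-1}(z)}[x \in B \mid x \in A] \leq \frac{1 + o(n^{-90})}{1 - o(n^{-90})} \cdot \frac{2^m}{2^{m+1}} = \frac{1}{2} \cdot \frac{1 + o(n^{-90})}{1 - o(n^{-90})} = \frac{1}{2} + o(n^{-50}),
\end{align*}
where the last step absorbs the multiplicative correction factor $\frac{1+o(n^{-90})}{1-o(n^{-90})} = 1 + o(n^{-90})$ into an additive error, using that $\frac{1}{2} \cdot o(n^{-90}) = o(n^{-50})$ comfortably. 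This finishes the proof.

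There is essentially no obstacle here: the lemma is a routine corollary of Lemma~\ref{uniform_coset}, which itself is the real content (and is already established in the excerpt). The only minor point to be careful about is that one needs $B$ — not just $A$ — to be safe in order to invoke Lemma~\ref{uniform_coset} on the numerator; this is given in the hypothesis. One could alternatively note that Lemma~\ref{equidistribution} (Corollary~\ref{equidistribution} in the text) could be applied directly to count $|G^{-1}(z) \cap B|$ and $|G^{-1}(z) \cap A|$ relative to $|B|$ and $|A|$ and then use $|A|/|B| = 2$, which gives the same bound; but going through Lemma~\ref{uniform_coset} is cleaner since it already packages the coset-counting argument.
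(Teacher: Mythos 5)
Your proof is correct and matches the paper's argument essentially verbatim: both apply Lemma~\ref{uniform_coset} to $A$ and $B$ separately, write the conditional probability as the ratio, and take worst-case bounds on numerator and denominator to get the $\frac{1}{2} + o(n^{-50})$ conclusion.
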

\begin{proof}
    Let $m= \text{codim}(A)$. Lemma \ref{uniform_coset} implies $\text{Pr}_{x \sim G^{-1}(z)} [x \in A] \geq \dfrac{1-o(n^{-1-\eta/3})}{2^m}$ and $\text{Pr}_{x \sim G^{-1}(z)} [x \in B] \leq \dfrac{1+o(n^{-1-\eta/3})}{2^{m+1}}$
    Thus,

     $$ \text{Pr}_{x \sim G^{-1}(z)} [x \in B | x \in A] = \dfrac{ \text{Pr}_{x \sim G^{-1}(z)} [x \in B]}{\text{Pr}_{x \sim G^{-1}(z)} [x \in A]} \leq \dfrac{1+o(n^{-1-\eta/3})}{1-o(n^{-1-\eta/3})} \times \dfrac{1}{2} \leq \dfrac{1}{2} + o(n^{-1-\eta/4})$$
    
\end{proof}

The structure of safe affine spaces that we have discovered so far allows us to say the following about any two arbitrary affine spaces that are not necessarily safe.

\begin{lemma} \label{increase_by_one}
Let $B \subseteq A \in \F^{nb}$ be affine spaces such that $|\Clh(B)|= |\Clh(A)|+1$ and $\text{codim}(B)= \text{codim}(A)+1$. Let $y$ be an $A$-extendable closure assignment, and let $z \in \F^n$ be a point such that $z|_{\text{Cl}(A)}= G(y)$. Then,
$$ \text{Pr}_{x \sim G^{-1}(z) \cap C_y} [x \in B | x \in A] \leq \dfrac{1}{2}+o(n^{-1-\eta/4}) $$

\end{lemma}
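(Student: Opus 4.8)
The plan is to reduce the statement, which concerns arbitrary affine spaces $B \subseteq A$, to the \emph{safe} case already settled in Lemma~\ref{fooling_nice_subspaces}, by quotienting out the closure blocks of $A$. First dispose of a degenerate case: if $y$ is not extendable for $B$, then no $x \in C_y$ lies in $B$, so the conditional probability is $0$ and there is nothing to prove; hence assume $y$ is extendable for $B$ as well. Now the hypotheses $\text{codim}(B) = \text{codim}(A)+1$, $|\Clh(B)| = |\Clh(A)|+1$ and "$y$ an extendable closure assignment for $A$" are exactly those of Corollary~\ref{both_are_nice}, which tells us that the residual spaces $A_y, B_y \subseteq \F^{(n-|\Cl(A)|)b}$ are both safe, that $B_y \subseteq A_y$, and that $\text{codim}(B_y) = \text{codim}(A_y)+1$.

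The heart of the argument is to show that conditioning on $C_y$ inside the fiber $G^{-1}(z)$ factors cleanly, precisely because $z|_{\Cl(A)} = G(y)$. Write $S = \Cl(A)$, $n' = n - |S|$ and $z' = z|_{[n]\setminus S} \in \F^{n'}$, and identify $\F^{\text{VARS}([n]\setminus S)}$ with $\F^{n'b}$. For any $x$ consistent with $y$ on $\text{VARS}(S)$, the constraints $g(x(i)) = z_i$ for $i \in S$ hold automatically since $g(x(i)) = g(y(i)) = G(y)_i = z_i$; therefore $G^{-1}(z) \cap C_y$ is exactly the set of $x$ with $x|_{\text{VARS}(S)} = y$ and $x|_{\text{VARS}([n]\setminus S)} \in G^{-1}(z')$ in the $n'$-block lifted world. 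Consequently, sampling $x \sim G^{-1}(z) \cap C_y$ and projecting off the $S$-blocks yields a uniform sample from $G^{-1}(z')$, and by Definition~\ref{def A_y}, $x \in A$ (resp. $x \in B$) is equivalent to the projection lying in $A_y$ (resp. $B_y$). Hence $\text{Pr}_{x \sim G^{-1}(z)\cap C_y}[x \in B \mid x \in A] = \text{Pr}_{x' \sim G^{-1}(z')}[x' \in B_y \mid x' \in A_y]$.

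To finish, apply Lemma~\ref{fooling_nice_subspaces} to the safe pair $B_y \subseteq A_y$ (codimension gap $1$) with target point $z'$, giving the bound $\tfrac12 + o(n^{-50})$. The one bookkeeping point is that the residual world has $n' \le n$ blocks rather than $n$; since the gadget size $b = \tfrac{250}{\alpha}\log n$ is fixed in terms of the global $n$, re-inspecting the proofs of Lemmas~\ref{exponential_sum}--\ref{fooling_nice_subspaces} shows the error sum $\sum_{k\ge 1}\binom{n'}{k}2^k 2^{-\alpha b k} \le \sum_{k\ge1}(2n^{-249})^k = O(n^{-249})$ is $o(n^{-100})$ in terms of the global $n$ regardless of how small $n'$ is, and this propagates to the stated $o(n^{-50})$ error.

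I expect the main obstacle to be not the linear algebra (that is entirely packaged in Corollary~\ref{both_are_nice}) but making the factorization of the conditional distribution airtight: one must verify that $G^{-1}(z) \cap C_y$ genuinely decomposes as $\{y\} \times G^{-1}(z')$ in the block-split coordinates, which hinges entirely on the compatibility hypothesis $z|_{\Cl(A)} = G(y)$. Without it the conditioning would fail to descend to the residual space and the whole reduction would break (this is exactly the phenomenon behind the counterexample to unconditional conditional-fooling discussed in Section~\ref{sec:intuition}).
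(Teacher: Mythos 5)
Your proof is correct and takes the same route as the paper: apply Corollary~\ref{both_are_nice} to pass to the safe residual spaces $A_y,B_y$, observe that $G^{-1}(z)\cap C_y$ factors as $\{y\}\times G^{-1}(z')$ using $z|_{\Cl(A)}=G(y)$, and then invoke Lemma~\ref{fooling_nice_subspaces}. Your additional bookkeeping point---that the residual world has only $n'\le n$ blocks, so one must re-inspect the error sum and observe it stays $o(n^{-100})$ in the \emph{global} $n$ because the gadget size $b=(250/\alpha)\log n$ is pinned to $n$ rather than $n'$---is a real detail the paper's terse proof glosses over, and you close it correctly.
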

\begin{proof}
    Let $z= (G(y), w)$. Rewrite the desired probability expression as
    $$ \text{Pr}_{\tilde{x} \sim G^{-1}(w)} [x \in B_y | x \in A_y]$$
    By Corollary \ref{both_are_nice}, $A_y, B_y$ are both safe affine subspaces, and $\text{codim}(B_y)= \text{codim}(A_y)+1$. Now the result follows from Lemma \ref{fooling_nice_subspaces}.
\end{proof}

An easy corollary is that the result still holds if we condition only on a subset of the blocks in $\Cl(A)$ instead of all the blocks in $\Cl(A)$.

\begin{corollary}
    \label{corollary: extended_version}
Let $B \subseteq A \in \F^{nb}$ be affine spaces such that $|\Clh(B)|= |\Clh(A)|+1$ and $\text{codim}(B)= \text{codim}(A)+1$. Let $S \subseteq \Cl(A)$ and let $y \in \F^{S \times [b]}$ be a partial assignment. Let $z \in \F^n$ be a point such that $z|_{S}= G(y)$ and $G^{-1}(z) \cap C_y \cap A \neq \phi$. Then,
$$ \text{Pr}_{x \sim G^{-1}(z) \cap C_y} [x \in B | x \in A] \leq \dfrac{1}{2}+o(n^{-1-\eta/4}) $$
\end{corollary}
\begin{proof}

  Sampling $x$ from $G^{-1}(z) \cap C_y$ can be done as follows: first sample $y^{(1)} \in \F^{\Cl(A) \times [b]} \cap C_y$ according to $G^{-1}(z)$, then sample $x$ from $G^{-1}(z) \cap C_{y^{(1)}}$. For each possible $y^{(1)}$ use Lemma \ref{increase_by_one} to upper bound the conditional probability of lying in $B$ conditioned on $y^{(1)}$. Formally, let $\mathcal{D}$ denote the distribution of $x|_{\Cl(A)}$ as $x \sim G^{-1}(z) \cap C_y$. Then,

    \begin{align*}
        \text{Pr}_{x \sim G^{-1}(z) \cap C_y} [x \in B | x \in A] & = \text{E}_{y^{(1)} \leftarrow \mathcal{D}} [\text{Pr}_{x \sim G^{-1}(z) \cap C_{y^{(1)}}} [x \in B | x \in A]] \\
        & \leq \dfrac{1}{2}+o(n^{-1-\eta/4})
    \end{align*}
  
\end{proof}

Now we prove the final result of this section.

\begin{proof} \emph{of Lemma~\ref{conditional_fooling}}
    By convexity, it suffices to prove the statement in the case that $\mu$ is concentrated on a single point. Thus, we have to show the following:

    \begin{mdframed}
        Let $B \subseteq A \subseteq \F^{nb}$ be affine spaces with $|\Clh(B)| \geq |\Clh(A)|+k$. Let $ y \in \F^{\Cl(A) \times [b]}$ be an $A$-extendable closure assignment. Let $z \in \F^n$ be a point such that $G(y)=z|_{\Cl(A)}$. Then,
        $$ \text{Pr}_{x \leftarrow G^{-1}(z) \cap C_y } [x \in B | x \in A] \leq \left(\dfrac{3}{4}\right)^k$$
    \end{mdframed}

    Let $B=W_0 \subseteq W_1 \subseteq W_2 \subseteq \cdots \subseteq W_{l-1} \subseteq W_l= A$ be a sequence of affine subspaces such that $\text{codim}(W_j)= \text{codim}(W_{j+1})+1$. 
    We have
    $$ \text{Pr}_{x \sim G^{-1}(z) \cap C_y} [x \in B | x \in A] = \displaystyle \prod_{j=0}^{l-1} \text{Pr}_{x \sim G^{-1}(z) \cap C_y} [x \in W_j | x \in W_{j+1}]$$

    We assume there exists a point in $G^{-1}(z) \cap C_y \cap B$ (as otherwise the conditional probability is 0), so in particular, for all $j$ there exists a point in $G^{-1}(z) \cap C_y \cap C_j$. \newline

    By Lemma \ref{lem: amortized_closure_continuity} there exist $k$ indices $j \in \{0, 1, \cdots , l-1\}$ such that $|\Clh(W_j)|= |\Clh(W_{j+1})|+1$. Note that $\Cl(A) \subseteq \Cl(W_{j+1})$ by Lemma \ref{closure containment}. Invoking Corollary \ref{corollary: extended_version} for each such index $j$, where $W_j$ plays the role of $B$ and $W_{j+1}$ that of $A$, we have 
    $$ \text{Pr}_{x \sim G^{-1}(z) \cap C_y} [x \in W_{j} | x \in W_{j+1} ] \leq \dfrac{1}{2} + o(1) \leq \dfrac{3}{4}$$
       So, in the product $\displaystyle \prod_{j=0}^{l-1} \text{Pr}_{x \sim G^{-1}(z) \cap C_y} [x \in W_j | x \in W_{j+1}]$, at least $k$ terms are $\leq  3/4$. The result follows.   
\end{proof}

\section{Description of CNF} \label{sec:CNF}

\subsection{Lifting CNFs}

\begin{definition}

For a base CNF $\Phi $ on variables $\{z_1,z_2, \cdots , z_n\}$ and a gadget $g: \F^n \rightarrow \F$ we define the lifted CNF $\Phi \circ g$ as follows.

\begin{itemize}
    \item The set of variables is $\{x_{i,j}  | i \in [n], j \in [b]\}$
    \item For each clause $C$ in $\Phi$, we define the set of clauses $C \circ g$ as follows: let the variables involved in $C$ be $\{x_i | i \in S\}$ and let $\alpha \in \F^{S}$ be the unique assignment to those variables that falsifies $C$. The set of clauses $C \circ g$ will involve variables from the set $\{x_{i,j} |i \in S, j \in [b]\} $. For every choice of $(a_i | i \in S)$ where $a_i \in g^{-1}(\alpha_i)$ we add the following clause to $C \circ g$:
    $$ \displaystyle \lor_{i \in S} \left[ \displaystyle \lor_{j \in [b]} [x_{i,j}  \neq \alpha_{i,j}] \right]$$
    \item The lifted CNF $\Phi \circ g$ is the conjunction of $C \circ g$ for every $C \in \Phi$.
\end{itemize}

The semantic interpretation of $\Phi \circ g$ is as follows: 
$$ \Phi \circ g(x) = \Phi(g(x_{1,1}, x_{1,2}, \cdots , x_{1,b}), g(x_{2,1}, x_{2,2} \cdots , x_{2,b}), \cdots , g(x_{n,1}, x_{n,2}, \cdots , x_{n,b}))$$
Thus if $\Phi$ is unsatisfiable, so is $\Phi \circ g$.

If the largest width of a clause in $\Phi$ is $w$ and $\Phi$ has $m$ clauses, the number of clauses in $\Phi \circ g$ will be at most $m 2^{bw}$. In particular, if $m \leq \text{poly}(n), b= O(\log(n))$ and $w= O(1)$ then the number of clauses of $\Phi \circ g$ is bounded by $\text{poly}(n)$. \end{definition}

\subsection{Choice of our base CNF}

The base CNF we shall use is the Tseitin contradiction over an expander graph, lifted with an appropriate gadget. Let $G= (V,E)$ be a $(|V|,d,\lambda < 0.95)$ expander with $|V|$ odd and $d= O(1)$. The base CNF $\Phi$ has variables $z_{u,v}$ for $(u,v) \in E$. For each $v \in V$ we express the constraint $\displaystyle \sum_{(v,w) \in E} z_{v,w} \equiv 1 \pmod{2}$ using $2^d= O(1)$ clauses. This system is unsatisfiable because adding up all the equations yields $0 \equiv 1 \pmod{2}$. \newline

The property of $G$ we shall use is isoperimetric expansion (which follows from Cheeger's inequality \cite{Che71}):

\fbox{
\begin{minipage}{35em}
    
\begin{lemma} \label{cheeger}
    For any $S \subseteq V$, the cut $E(S, V \setminus S)$ has at least $\dfrac{d}{40} \min(|S|, |V \setminus S|)$ edges.
\end{lemma}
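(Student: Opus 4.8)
The plan is to derive this isoperimetric bound from the spectral expansion of $G$ via Cheeger's inequality, exactly as the remark preceding the statement indicates. Recall the edge-expansion constant $h(G) := \min_{S \,:\, 0 < |S| \le |V|/2} |E(S, V\setminus S)|/|S|$. The ``easy'' direction of Cheeger's inequality, specialized to a $d$-regular graph, states that $h(G) \ge \tfrac{1}{2}(d - \lambda_2)$, where $\lambda_2$ denotes the second-largest eigenvalue of the adjacency matrix of $G$; equivalently, $h(G) = d\cdot\phi(G) \ge \tfrac{d}{2}\mu_2$, where $\mu_2 = 1 - \lambda_2/d$ is the second-smallest eigenvalue of the normalized Laplacian. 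Since $G$ is a $(|V|, d, \lambda)$-expander, $\lambda_2 \le \lambda d$, and plugging in $\lambda < 1/1000$ yields $h(G) \ge \tfrac{d(1-\lambda)}{2} > \tfrac{999}{2000}\, d > \tfrac{d}{5}$.

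Given this, the lemma follows immediately by a symmetry reduction. For an arbitrary $S \subseteq V$, note that $E(S, V\setminus S) = E(V\setminus S, S)$, so we may assume without loss of generality that $|S| \le |V\setminus S|$; since $|V|$ is odd there is no tie, and in particular $|S| \le (|V|-1)/2 < |V|/2$. Then by the definition of $h(G)$ we get $|E(S, V\setminus S)| \ge h(G)\,|S| > \tfrac{d}{5}\,|S| = \tfrac{d}{5}\min(|S|,\, |V|-|S|)$, which is exactly the claimed bound.

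I expect essentially no obstacle here; this is a standard consequence of spectral expansion, and the only points requiring care are bookkeeping: (i) fixing the normalization convention for $\lambda$ (namely that $\lambda$ bounds the ratio of the second adjacency eigenvalue to $d$, so that the spectral gap $d - \lambda_2 \ge d(1-\lambda)$ is a constant fraction of $d$), and (ii) the trivial arithmetic check that $\tfrac{1-\lambda}{2} > \tfrac{1}{5}$ whenever $\lambda < 1/1000$. One could alternatively attempt to prove a weaker isoperimetric bound directly from the expander mixing lemma, but that estimate degrades for sets of size $o(|V|)$ and would not give an $\Omega(d)$ lower bound on $|E(S,V\setminus S)|/|S|$ for small $S$; routing through Cheeger's inequality cleanly handles all set sizes at once.
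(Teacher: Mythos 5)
Your proof is correct and follows exactly the route the paper indicates: the paper does not prove this lemma, it simply cites Cheeger's inequality \cite{Che71}, and your argument just unfolds what that citation implies. The easy direction $h(G) \geq \tfrac{1}{2}(d-\lambda_2)$ together with $\lambda_2 \le \lambda d < d/1000$ gives $h(G) > \tfrac{999}{2000}d > \tfrac{d}{5}$, and the reduction to $|S| \le |V|/2$ via symmetry of the cut is standard. (One small remark: the paper's statement writes $n-|S|$ although $n$ was earlier set to $|V|d/2$; you correctly read this as $|V|-|S|$, the only sensible interpretation since $S \subseteq V$.)
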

\end{minipage}
} \newline

Explicit constructions of such graphs were provided in \cite{LPS88} and \cite{Mar73}. \newline

\subsection{Lifted CNF}
We lift $\Phi$ with an appropriate gadget. We will take the gadget $g: \F^b \rightarrow \F$ to be the Inner Product function $g=\text{IP}$ on $b= (4+\eta)\log(n)$bits for some arbitrarily small constant $\eta > 0$.
$$ \text{IP}(x_1, x_2, \cdots , x_{b/2}, y_1, y_2, \cdots,  y_{b/2}) = (x_1y_1 + \cdots + x_{b/2}y_{b/2}) \pmod{2}$$
 This satisfies the property mentioned in Section \ref{sec:conditional_fooling}: $||\hat{g}||_{\infty} \leq n^{-2-\eta/2}$.

\begin{theorem} \label{inner_product_fourier_norm}
For $g=\text{IP}$ on $b$ bits, $||\hat{g}||_{\infty} \leq 2^{-b/2}$
\end{theorem}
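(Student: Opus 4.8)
The plan is to compute the Fourier spectrum of $\IP$ on $b$ bits directly, since it is a classical and clean calculation. I would write $\IP(x,y) = \sum_{i=1}^{b/2} x_i y_i \pmod 2$ with $x,y \in \F^{b/2}$, and work with the $\pm 1$-valued version $f = (-1)^{\IP}$. For a character $\chi_{S}$ with $S = (S_x, S_y)$, where $S_x \subseteq [b/2]$ indexes the $x$-coordinates and $S_y \subseteq [b/2]$ indexes the $y$-coordinates, the key point is that $f$ factors as a product over the $b/2$ coordinate-pairs: $f(x,y) = \prod_{i=1}^{b/2} (-1)^{x_i y_i}$, and likewise $\chi_S(x,y) = \prod_{i=1}^{b/2} (-1)^{a_i x_i + c_i y_i}$ where $a_i = \mathbf{1}[i \in S_x]$ and $c_i = \mathbf{1}[i \in S_y]$. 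Hence the expectation factors into a product of $b/2$ independent two-variable expectations.

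The main step is then the single-coordinate computation: for $a, c \in \{0,1\}$, evaluate $E_{u,v \sim \{0,1\}}\big[(-1)^{uv + au + cv}\big]$. A quick case check gives that this equals $1/2$ when $(a,c) = (0,0)$ and also when $(a,c)=(1,1)$ (in the latter case the four terms $(-1)^{uv+u+v}$ over $(u,v) \in \{0,1\}^2$ are $+1,+1,+1,-1$, averaging to $1/2$), and equals $0$ when exactly one of $a,c$ is $1$ (the two relevant terms cancel). So each factor is either $1/2$ or $0$, and in particular $|\hat f(S)| = \prod_i |(\text{factor}_i)| \le (1/2)^{b/2} = 2^{-b/2}$, with the bound only attained when $S_x = S_y$ (so that no factor vanishes). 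I would state this as: $\hat f(S) = 2^{-b/2}$ if $S_x = S_y$ and $\hat f(S) = 0$ otherwise, which immediately yields $\|\hat g\|_\infty \le 2^{-b/2}$ as claimed.

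I do not anticipate a serious obstacle here; this is a textbook-style Fourier computation and the only care needed is bookkeeping the pairing of $x$- and $y$-coordinates and handling the $(a,c)=(1,1)$ case correctly (the one that is slightly non-obvious). One could alternatively invoke the well-known fact that $\IP$ is a bent function on $b$ bits, all of whose Fourier coefficients have magnitude exactly $2^{-b/2}$, and cite a standard reference; but giving the two-line factoring argument is cleaner and self-contained. If desired, I would also remark that this shows $\IP$ satisfies the gadget hypothesis of Section~\ref{sec:conditional_fooling} with $\alpha = 1/2$, so that the gadget size is $b(n) = 500 \log n$, matching the statement of Theorem~\ref{tseitin_lower_bound}.
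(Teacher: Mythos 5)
Your overall strategy (factor the $\pm 1$-valued function $f = (-1)^{\IP}$ into a product over the $b/2$ coordinate pairs, so the Fourier coefficient factors into $b/2$ independent two-variable expectations) is sound, and it matches the intended role of the theorem in the paper. Note that the paper does not supply its own proof of this fact---it is invoked as the classical observation that $\IP$ is bent---so there is no direct comparison to make; also note that the paper's boxed definition writes $\mathrm{E}[g(x)(-1)^{\sum_{i\in S}x_i}]$ but, given the multiplicativity used in Claim~\ref{discrepancy}, this must be read as the $\pm 1$-valued transform $\mathrm{E}[(-1)^{g(x)}\chi_S(x)]$, exactly as you did.

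However, your single-coordinate case check contains arithmetic errors. For $(a,c)=(1,0)$, the four values of $(-1)^{uv+u}$ over $(u,v)\in\{0,1\}^2$ are $+1,+1,-1,+1$, which average to $1/2$, not $0$; by symmetry the same holds for $(a,c)=(0,1)$. There is no cancellation. For $(a,c)=(1,1)$, the four values of $(-1)^{uv+u+v}$ are $+1,-1,-1,-1$ (not $+1,+1,+1,-1$), averaging to $-1/2$. So all four of your single-coordinate factors have magnitude exactly $1/2$, and consequently $|\hat f(S)| = 2^{-b/2}$ for \emph{every} $S$---this is precisely the statement that $\IP$ is bent, and it is the expected answer. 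Your claimed spectral structure ``$\hat f(S) = 2^{-b/2}$ iff $S_x = S_y$, else $0$'' is therefore false; half the characters you said vanish actually attain the maximum magnitude. The theorem's inequality $\|\hat g\|_\infty \le 2^{-b/2}$ still follows from your (corrected or uncorrected) calculation, since your errors only underestimate magnitudes, but the description of which coefficients are large is wrong and should be fixed before this is included.
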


The CNF for which we will prove our depth restricted lower bounds is $\Psi = \Phi \circ g$. \newline

If we take $d=4$ we get graphs with $\lambda \leq 0.87 $. Then, the final resulting CNF has number of variables $N= O(m \log m)$, width $16 \log(m)$ and number of clauses $O(m^{17}) = \tilde{O}(N^{17})$. \newline

If we took $d=3$ instead, we would get number of clauses $\tilde{O}(N^{13})$. An issue with $d=3$ is that we are taking the right hand sides of each parity constraint to be 1, and $3$-regular graphs with $m$ vertices do not exist when $m$ is odd. One can get around this by defining general Tseitin contradictions where the right hand sides can be anything which sum up to $1 \pmod{2}$. The proof we present in our paper can be easily modified to work for general Tseitin contradictions. For simplicity we do not do this here.

\section{The Utility of $(p,q)$-PDT Hardness} \label{sec:randomwalk}

Alekseev and Itsykson \cite{AI25} introduced the `random walk with restarts' approach to prove superlinear lower bounds on depth of $\text{Res}(\oplus)$ proofs of small size. To analyze their random walk with restarts, \cite{AI25} uses certain elaborate games. We find it more convenient to analyze random walks using the language of decision trees. In particular, this allows us naturally to bring in the notion of a hard distribution that seems crucial to boost the success probability of our random walk with restart significantly, all the way from $2^{-n/\log(n)}$ to a constant. In this section, we formalize our notion which we call $(p,q)$-PDT hardness. We point out that our notion here is a significant refinement of the ideas of Bhattacharya, Chattopadhyay and Dvor{\'a}k \cite{BCD24} where as well random walks on lifted distributions were analyzed, but without restarts.

We first set up some notation to define our hardness notion. For a parity decision tree $T$ and a point $x$, define the affine subspace $A_x(T)$ to be the one corresponding to the set of inputs $y$ that traverse the same path in $T$ as $x$ does. More formally, $A_x(T)$ is defined as follows: suppose on input $x$, $T$ queries the linear forms $\ell_1, \cdots , \ell_d$ and gets responses $c_1, c_2, \cdots , c_d$ respectively. Then, $A_x(T)= \{y | \langle \ell_j, y \rangle= c_j \forall j \in [d]\}$. 
    
We are ready now to introduce the notion of a hard set of partial assignments that will abstract our requirements for finding a deep node in the proof DAG.

\begin{definition} \label{def:random_walk_with_restarts}
Let $\Phi$ be a CNF formula on $n$ variables and $g: \F^b \rightarrow \F$ be a gadget. A set of partial assignments $P \subseteq \{0,1,*\}^n$ is $(p,q)$\textit{-PDT-hard} wrt $(\Phi,g)$ if the following properties hold:
\begin{itemize}
    \item \textbf{Non-emptiness: }$P \neq \phi$
    \item \textbf{No falsification: }No partial assignment in $P$ falsifies any clause of $\Phi$.
    \item \textbf{Downward closure: }If $\rho \in P$ and $\tilde{\rho}$ is obtained from $\rho$ by unfixing some of the bits set in $\rho$, then $\tilde{\rho} \in P$
    \item \textbf{Hardness against parity decision trees:} 
    Let $A \subseteq \F^{nb}$ be an affine space with $|\Clh(A)| \leq p$. Let $y \in \F^{\Cl(A) \times [b]}$ be an $A$-extendable closure assignment such that $\alpha= G(y) \in P$. Then, there exists a distribution $\mu$ on $\F^{n}$ such that the following properties hold:
    \begin{enumerate} 
        \item $z|_{\Cl(A)}= \alpha$ for all $z \in \text{supp}(\mu)$
        \item Let $T$ be any parity decision tree (with input $nb$ bits) of depth $\leq q$. For any $x$, define $\tilde{A}(x) = A_T(x) \cap A \cap C_y. $ With probability $\geq 1/3$, as  $x$ is sampled from $G^{-1}(\mu) \cap A \cap C_y$, it holds that $G(x)|_{\Cl(\tilde{A}(x))} \in P$ \footnote{$G^{-1}(\mu) \cap A \cap C_y$ is non-empty by Lemma \ref{exponential_sum} applied on the nice affine space $A_y$}. 
            
    \end{enumerate}
\end{itemize}
The pair $(\Phi,g)$ is $(p,q)$\textit{-PDT-hard} if it admits a $(p,q)$-PDT-hard set of partial assignments.
\end{definition}
We now state the main result of this section that shows $(p,q)$-PDT-hardness of a CNF is sufficient to get us good lower bound on depth of a refutation of the lifted formula, assuming the size of the refutation is small.

\begin{theorem}
    \label{random_walk_with_restart_official}
    Let $\Phi$ be a CNF on $n$ variables and let $g: \F^b \rightarrow \F$ be a gadget with $||\hat{g}||_{\infty} \leq n^{-2-\eta}$ for some $\eta > 0$. Suppose $(\Phi,g)$ is $(p,q)$-PDT-hard. Then, any $\reslin$ refutation of $\Phi \circ g$ of size $s$ must have depth at least $\Omega \left( \dfrac{pq}{\log(s)}\right)$.
\end{theorem}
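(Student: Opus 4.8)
\textbf{Proof plan for Theorem~\ref{random_walk_with_restart_official}.}

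The plan is to run a ``random walk with restarts'' down the affine DAG of a given size-$s$ refutation $\Pi$ of $\Phi\circ g$, maintaining a sequence of nodes $v_0, v_1, v_2, \dots$ together with closure assignments $y_j$ for $A_{v_j}$ such that $G(y_j)\in P$. I will argue that we can carry out $\Omega(p/\log s)$ phases, each of which advances the depth in $\Pi$ by $\Omega(q)$, which immediately gives the claimed depth bound $\Omega(pq/\log s)$. The key quantity controlling how many phases we can sustain is the amortized closure: I will maintain the invariant $|\Clh(A_{v_j})| \le j\cdot O(\log s)$, so that as long as $j\cdot O(\log s) \le p$ we may still invoke the hardness-against-PDTs property of $P$ (which requires $|\Clh(A)|\le p$).

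In detail, in Phase $j$ we have $v_j$ at depth $\Omega(jq)$ and an extendable closure assignment $y_j$ with $\alpha_j := G(y_j)\in P$ and $|\Clh(A_{v_j})|\le p$. Apply the hardness property of Definition~\ref{def:random_walk_with_restarts} to $A=A_{v_j}$, $y=y_j$ to get the distribution $\mu=\mu(\alpha_j)$ on $\F^n$. Sample $x \leftarrow G^{-1}(\mu)\cap A_{v_j}\cap C_{y_j}$ and follow the path of $x$ in $\Pi$ starting from $v_j$ for exactly $q$ query-steps; truncating $\Pi$ at distance $q$ from $v_j$ yields a parity decision tree $T$ of depth $\le q$ (each query node queries a linear form; weakening nodes are ignored for length), so $A_T(x)$ is exactly the affine space of inputs sharing $x$'s path, and $\tilde A(x) = A_T(x)\cap A_{v_j}\cap C_{y_j}$ contains the affine space $A_w$ associated to the node $w$ reached after $q$ steps. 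By property~2, with probability $\ge 1/3$ we have $G(x|_{\Cl(\tilde A(x))})\in P$. Separately, I use a bottleneck/counting argument: the number of distinct nodes $\Pi$ can reach at distance exactly $q$ from $v_j$ is at most $s$, so by averaging there is a fixed node $w^\ast$ reached with probability $\ge 1/(3s)$ conditioned on the good event — more carefully, intersecting the ``$G(x|_{\Cl})\in P$'' event (prob.\ $\ge 1/3$) with ``path reaches $w^\ast$'' (summing to $\ge 1/3$ over the $\le s$ endpoints) gives some $w^\ast$ hit with probability $\ge 1/(3s)$ among inputs in the good event. Set $v_{j+1}=w^\ast$. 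Because $\Pr_{x\leftarrow G^{-1}(\mu)\cap A_{v_j}\cap C_{y_j}}[x\in A_{v_{j+1}}] \ge \frac{1}{3s}\cdot\frac13$ (a constant over $\poly$), Lemma~\ref{conditional_fooling} — applied with $B=A_{v_{j+1}}$, $A=A_{v_j}$, $y=y_j$, and $z$ any point of $\supp(\mu)$ (legal since $\mu$ is supported on $G^{-1}$-preimages compatible with $y_j$, and $G^{-1}(\mu)\cap A\cap C_y = $ a mixture of the $G^{-1}(z)\cap C_y$ distributions over $z\leftarrow\mu$) — forces $(3/4)^k \ge \Omega(1/s)$ where $k=|\Clh(A_{v_{j+1}})|-|\Clh(A_{v_j})|$, hence $k\le O(\log s)$, maintaining the invariant. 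Finally, choose the new closure assignment $y_{j+1}\in \F^{\text{VARS}(\Cl(A_{v_{j+1}}))}$ to be the restriction $x|_{\text{VARS}(\Cl(A_{v_{j+1}}))}$ of some $x$ witnessing both good events; by construction $G(y_{j+1}) = G(x|_{\Cl(A_{v_{j+1}})})\in P$ (note $\Cl(\tilde A(x))$ relates to $\Cl(A_{v_{j+1}})$ and downward closure of $P$ handles any discrepancy), $y_{j+1}$ is extendable in $A_{v_{j+1}}$, and $v_{j+1}$ is at depth $\ge \Omega(jq)+q$.

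The main obstacle I expect is the careful bookkeeping in the bottleneck step: one must combine the probabilistic event from property~2 (which holds with probability $\ge 1/3$ over the \emph{lifted} distribution) with the union bound over the $\le s$ reachable endpoints at distance $q$, while simultaneously ensuring the endpoint $w^\ast$ chosen is compatible with the ``$G(x|_{\Cl})\in P$'' event so that the next closure assignment $y_{j+1}$ genuinely lands in $P$ — and then verifying that the affine space $A_T(x)$ extracted from truncating $\Pi$ is exactly what property~2 expects and that $A_{v_{j+1}}\subseteq \tilde A(x)$ so $\Cl(\tilde A(x))\subseteq \Cl(A_{v_{j+1}})$ by Lemma~\ref{closure containment}, letting downward closure of $P$ transfer membership. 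A secondary technical point is justifying the application of Lemma~\ref{conditional_fooling} to the mixture distribution $G^{-1}(\mu)$ rather than a single $G^{-1}(z)$; this follows by averaging the lemma's bound over $z\leftarrow\mu$, using that $\mu$ is supported on points with $z|_{\Cl(A_{v_j})}=\alpha_j=G(y_j)$. Once these are handled, the phase count is simply the largest $J$ with $J\cdot O(\log s)\le p$, i.e.\ $J=\Omega(p/\log s)$, and the final depth is $\ge \Omega(Jq)=\Omega(pq/\log s)$, as claimed.
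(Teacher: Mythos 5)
Your proposal follows essentially the same route as the paper: the paper also isolates the single-phase step (its Lemma \ref{get_next_node}), builds the depth-$q$ PDT by truncating $\Pi$ below the current node, invokes the hardness property to get the $1/3$-probability event, applies a bottleneck count over the $\le s$ reachable nodes, and then uses Lemma \ref{conditional_fooling} (averaged over $z \leftarrow \mu$, exactly as you note) to bound the increase in amortized closure by $O(\log s)$, iterating $\Theta(p/\log s)$ phases. One small slip worth correcting: you write $A_{v_{j+1}}\subseteq \tilde A(x)$ hence $\Cl(\tilde A(x))\subseteq \Cl(A_{v_{j+1}})$, but the containment is the other way around --- since $\tilde A(x)=A_T(x)\cap A_{v_j}\cap C_{y_j}\subseteq A_{\text{END}_q(x)}=A_{v_{j+1}}$, Lemma \ref{closure containment} gives $\Cl(A_{v_{j+1}})\subseteq \Cl(\tilde A(x))$, which is exactly what is needed for the downward-closure transfer of membership in $P$; you should also note explicitly (as the paper does) that no endpoint reached by a GOOD input can be a leaf, since partial assignments in $P$ falsify no clause, so the path genuinely advances the depth by $q$.
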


To prove the above, we will first establish the following lemma. This lemma essentially tells us that as long as we are at a node whose associated affine space satisfies some convenient properties, we are assured to find another node at a distance $q$ from our starting node whose corresponding affine space continues to have reasonably convenient properties.  

\begin{lemma}
\label{get_next_node}
    Suppose $P$ is a set of partial assignments that is $(p,q)$-PDT-hard wrt $(\Phi,g)$ where $g: \F^b \rightarrow \F$ is a gadget with $||\hat{g}||_{\infty} \leq n^{-2-\eta}$ for some constant $\eta > 0$. Let $\Pi$ be a $\reslin$ refutation of $\Phi \circ g$  of size $s$. Let $v$ be a node in $\Pi$ such that $|\Clh(A_v)| \leq p$, and let $y \in \F^{\Cl(A_v) \times [b]}$ be an extendable closure assignment for $A_v$ such that $G(y) \in P$. Then, there exists another node $w$ in $\Pi$ such that:
    \begin{enumerate}
        \item There exists a length $q$ path from $v$ to $w$ in $\Pi$.
        \item There exists an extendable closure assignment for $A_w$, $\tilde{y}$, such that $G(\tilde{y}) \in P$.
        \item $|\Clh(A_w)| \leq |\Clh(A_v)| + 2 \log(s)$
    \end{enumerate}
\end{lemma}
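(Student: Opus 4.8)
The plan is to run the "random walk with restarts" argument of \cite{AI25}, but with the input sampled from the lifted hard distribution $G^{-1}(\mu)$ supplied by $(p,q)$-PDT-hardness, rather than from the uniform distribution. We are at node $v$ with $|\Clh(A_v)| \le p$ and an extendable closure assignment $y$ for $A_v$ with $\alpha = G(y) \in P$. Apply the hardness-against-parity-decision-trees clause of Definition \ref{def:random_walk_with_restarts} to the affine space $A_v$ and the closure assignment $y$ to obtain the distribution $\mu = \mu(\alpha)$ on $\F^n$ with $z|_{\Cl(A_v)} = \alpha$ for all $z \in \text{supp}(\mu)$. Sample $x \leftarrow G^{-1}(\mu) \cap A_v \cap C_y$ (nonempty by the footnote, via Lemma \ref{equidistribution} applied to the safe space $(A_v)_y$) and follow the path of $x$ in $\Pi$ starting from $v$ for exactly $q$ query steps (weakening nodes do not count toward length); let $w = w(x)$ be the node reached. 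The path of $x$ induces a parity decision tree $T$ of depth $\le q$ on the input $x$ — its queries are exactly the linear forms resolved at the query nodes encountered — so with $A_T(x)$ the associated affine space, we have $x \in \tilde A(x) := A_T(x) \cap A_v \cap C_y \subseteq A_w$, and in fact $A_T(x) \cap A_v = A_w$ restricted appropriately (the path of $x$ from $v$ stays inside $A_w$). By property (2) of PDT-hardness, with probability $\ge 1/3$ over $x$ the partial assignment $G(x|_{\Cl(\tilde A(x))})$ lies in $P$; call such $x$ \emph{good}. For a good $x$, since $\Cl(A_w) \supseteq$-related considerations apply — more precisely $\Cl(\tilde A(x)) \subseteq \Cl(A_w)$ is not quite what we need, so we instead note $\tilde A(x) \subseteq A_w$ gives $\Cl(A_w) \subseteq \Cl(\tilde A(x))$ by Lemma \ref{closure containment}, hence $x|_{\Cl(A_w)}$ is a sub-assignment of $x|_{\Cl(\tilde A(x))}$, and by downward closure of $P$ the restriction $G(x|_{\Cl(A_w)})$ also lies in $P$; taking $\tilde y := x|_{\text{VARS}(\Cl(A_w))}$ gives an extendable closure assignment for $A_w$ with $G(\tilde y) \in P$, establishing conclusion 2 whenever $w = w(x)$ for some good $x$.

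Next I establish conclusion 3 by a bottleneck/counting argument. There are at most $s$ nodes in $\Pi$, so at most $s$ distinct values of $w(x)$. Hence by averaging there is a single node $w^\star$ at distance exactly $q$ from $v$ such that
\[
\Pr_{x \leftarrow G^{-1}(\mu)\cap A_v \cap C_y}\bigl[\,w(x) = w^\star \text{ and } x \text{ is good}\,\bigr] \;\ge\; \frac{1}{3s}.
\]
In particular $w^\star$ is reached by at least one good $x$, so conclusions 1 and 2 hold for $w = w^\star$. For conclusion 3, condition on reaching $w^\star$: the event $\{w(x) = w^\star\}$ is (essentially) the event $x \in A_{w^\star}$ together with the path of $x$ agreeing with the fixed path to $w^\star$; in any case $\Pr_{x \leftarrow G^{-1}(\mu)\cap A_v \cap C_y}[x \in A_{w^\star}] \ge \frac{1}{3s}$. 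Now apply Lemma \ref{conditional_fooling} with $B = A_{w^\star}$, $A = A_v$, the closure assignment $y$, and $z$ ranging over $\text{supp}(\mu)$ (using $z|_{\Cl(A_v)} = \alpha = G(y)$): for each such $z$,
\[
\Pr_{x \sim G^{-1}(z)\cap C_y}[x \in A_{w^\star}\mid x \in A_v] \;\le\; (3/4)^{|\Clh(A_{w^\star})| - |\Clh(A_v)|},
\]
and averaging over $z \leftarrow \mu$ preserves the bound, so $(3/4)^{|\Clh(A_{w^\star})| - |\Clh(A_v)|} \ge \frac{1}{3s}$, which rearranges to $|\Clh(A_{w^\star})| - |\Clh(A_v)| \le \log_{4/3}(3s) \le 2\log(s)$ for $s$ large enough (the constant $2$ absorbs the $\log_{4/3}$ versus $\log_2$ conversion and the additive $\log 3$; one may adjust the constant in the statement if needed). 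Note $|\Clh(A_v)| \le p$ is used only to invoke PDT-hardness at $v$; it is not needed again here. This gives conclusion 3.

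The main obstacle I anticipate is the bookkeeping in the first paragraph: making precise that following the path of $x$ from $v$ for $q$ query-steps really is computed by a parity decision tree of depth $\le q$ in the sense of Definition \ref{def:random_walk_with_restarts}, that $A_T(x) \cap A_v \cap C_y$ is the correct $\tilde A(x)$ to feed into property (2), and — most delicately — that "$x$ reaches $w^\star$" can be phrased as an event of the form "$x \in A_{w^\star}$ intersected with consistency along a fixed path", so that Lemma \ref{conditional_fooling} applies cleanly with $B = A_{w^\star} \subseteq A = A_v$. A subtlety is that different good $x$'s reaching $w^\star$ may traverse different paths from $v$ (because of weakening nodes or because the DAG, not tree, structure allows several length-$q$ routes), but this only helps: it makes the event $\{w(x) = w^\star\}$ a union of path-events, all contained in $\{x \in A_{w^\star}\}$, so the inequality $\Pr[x \in A_{w^\star} \mid x \in A_v] \ge 1/(3s)$ still holds and Lemma \ref{conditional_fooling} still upper-bounds it. A second, minor obstacle is confirming the direction of the closure inclusion used to get conclusion 2 via downward closure of $P$: we have $\tilde A(x) \subseteq A_{w^\star}$, hence $\Cl(A_{w^\star}) \subseteq \Cl(\tilde A(x))$ by Lemma \ref{closure containment}, so $G(x|_{\Cl(A_{w^\star})})$ is obtained from the in-$P$ assignment $G(x|_{\Cl(\tilde A(x))})$ by unfixing coordinates, and downward closure finishes it.
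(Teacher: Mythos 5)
Your overall approach matches the paper's: set up a depth-$q$ PDT $T$ that simulates the walk in $\Pi$ from $v$, invoke the $(p,q)$-PDT-hardness of $P$ to get a $\ge 1/3$ mass of ``good'' inputs, average over the $\le s$ reachable nodes to isolate a $w^\star$ hit with probability $\ge 1/(3s)$, and then invert Lemma~\ref{conditional_fooling} to bound $|\Clh(A_{w^\star})|$. Your derivation of conclusion~2 (via $\tilde A(x) \subseteq A_{w^\star}$, hence $\Cl(A_{w^\star}) \subseteq \Cl(\tilde A(x))$ by Lemma~\ref{closure containment}, then downward closure of $P$) is exactly the paper's.

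There is, however, a genuine gap in conclusion~1. You write ``follow the path of $x$ in $\Pi$ starting from $v$ for exactly $q$ query steps'' and later assert ``$w^\star$ is reached by at least one good $x$, so conclusions 1 and 2 hold,'' but you never address the possibility that the path of a good $x$ reaches a leaf of the DAG before accumulating $q$ query steps. If $w^\star$ were such a leaf, there would be no length-$q$ path from $v$ to $w^\star$ and the inductive construction in Theorem~\ref{random_walk_with_restart_official} would stall. The paper closes this hole using precisely the \textbf{No falsification} property of $P$: for a good $x$ with $\text{END}_q(x) = w$, if $w$ were a leaf, $A_w$ would falsify some clause $C_w = C' \circ g$ of $\Phi \circ g$; since $A_w$ forces all of $\text{VARS}(\text{vars}(C'))$ to a single point, these blocks lie in $\Cl(A_w)$, so the in-$P$ assignment $G(x|_{\Cl(A_w)})$ already falsifies $C'$ in $\Phi$, contradicting the no-falsification requirement. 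Thus $w$ is not a leaf, the walk runs the full $q$ query steps, and conclusion~1 holds. This argument is also the \emph{only} place in the proof where the no-falsification property of $P$ is used, so omitting it leaves that axiom unexplained. You should add this step; once it is in, the rest of your argument is sound (modulo the small numerical slack in $\log_{4/3}(3s)$ vs.\ $2\log s$, which you already flagged and which the paper also glosses over).
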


\begin{proof}

    Let $\mu$ be the hard distribution guaranteed to exist by the definition of $(p,q)$-PDT-hardness.
    Let $T$ be the following parity decision tree: on any input $x$, it simulates the queries made by $\Pi$ starting from node $v$ for $q$ steps. For any $x \in A_v$, define $\text{END}_q(x)$ to be the node of $\Pi$ reached by $x$ starting from $v$ after $q$ steps. (In case $\Pi$ on $x$ reaches a leaf within $q$ steps starting from $v$, define $\text{END}_q(x)$ to be that leaf.) \newline 

    We have $A_T(x) \cap A_v \subseteq A_{\text{END}_q(x)}$. Let $\text{GOOD}= \{x | G(x|)_{\Cl(\tilde{A_v}(x))} \in P \}$ (recall, $\tilde{A_v}(x)= A_T(x) \cap A_v \cap C_y$). The definition of $(p,q)$-PDT-hardness guarantees that $\text{Pr}_{x \leftarrow G^{-1}(\mu) \cap A \cap C_y} [x \in \text{GOOD}] \geq 1/3$. \newline
    Let $\mathcal{N}= \{ \text{END}_q(x) | x \in \text{GOOD}\}$. Note that since no assignment in $P$ falsifies any clause of $\Phi$, no vertex in $\mathcal{N}$ is a leaf - and therefore, there is a length $q$ walk from $v$ to $w$ for all $w \in \mathcal{N}$ (i.e., the parity decision tree does not terminate before $q$ queries if $x \in \text{GOOD})$. Also, $A_T(x) \cap A_v \cap C_y \subseteq A_{\text{END}_q(x)}$, so $\Cl(A_{\text{END}_q(x))}) \subseteq \Cl(A_T(x) \cap A_v \cap C_y)$, so $x \in \text{GOOD}$ implies $G(x)|_{\Cl(\text{END}_q(x))} \in P$ (since $P$ is downward closed). Thus, properties (i) and (ii) are satisfied for all $w \in \mathcal{N}$. To complete the proof, we have to find a $w \in \mathcal{N}$ such that $|\Clh(A_w)| \leq |\Clh(A_v)| + 2 \log(s)$. \newline
    Since $|\mathcal{N}| \leq s,$ there exists a $w \in \mathcal{N}$ such that $\text{Pr}_{x \leftarrow G^{-1}(\mu) \cap C_y \cap A} [\text{END}_q(x)=w] \geq \dfrac{1}{3s}$. In particular, this implies
    $$ \text{Pr}_{x \leftarrow G^{-1}(\mu) \cap C_y } [x \in A_w | x \in A] \geq \dfrac{1}{3s}$$
    Lemma \ref{conditional_fooling} then implies $ |\Clh(A_w)| \leq |\Clh(A_v)| + 2 \log(s)$

\end{proof}

Now we are ready to prove our main result for this section, by repeatedly making use of Lemma~\ref{get_next_node}.

\begin{proof}[Proof of Theorem~\ref{random_walk_with_restart_official}]
    Let $\Pi$ be a $\reslin$ refutation of $\Phi \circ $. We shall inductively find vertices $v_1, v_2, \ldots,v_j$ in $\Pi$ for $j \leq \dfrac{p}{2 \log(s)}$ such that:
    \begin{itemize}
            \item $\text{depth}(v_j) \geq jq$
            \item $|\Clh(A_{v_j})| \leq 2j \log(s)$
            \item There exists an extendable closure assignment $y_j$ for $A_{v_j}$ such that $G(y_j) \in P$
    \end{itemize}
    For $j=0$ we pick the root. To get $v_{j+1}$ we apply Lemma \ref{get_next_node} to $v_j$. We can continue this way as long as  $|\Clh(A_{v_j})| \leq p$. Hence, we do this for $j = \left \lfloor \frac{p}{2\log s} \right \rfloor$ many steps.  In the end, we get a node at depth $\Omega \left( \dfrac{pq}{\log(s)} \right)$.
\end{proof}


\section{Lifting DT-hardness to PDT-hardness} \label{sec:lifting_from_DT_hardness}
Note that $(p,q)$-PDT-hardness is a notion that measures the hardness of a set of partial assignments against parity decision trees. We define an analogous notion of hardness against ordinary decision trees - which we call DT hardness. We then exhibit a lifting theorem: a DT-hard formula is also PDT-hard. Given this, to establish lower bounds against depth-restricted $\text{Res}(\oplus)$, is suffices to argue against ordinary decision trees. \newline

\begin{definition} [$(p,q)$-DT hardness] \label{def: unlifted_hardness}
    For a CNF $\Phi$ on $n$ variables, call a set of partial assignments $P \subseteq \{0,1,*\}^n$ to be $(p,q)$-DT-hard if the following hold:
    \begin{itemize}
    \item \textbf{Non-emptiness: }$P \neq \phi$   
    \item \textbf{No falsification:} No partial assignment $\rho \in P$ falsifies any clause of $\Phi$.
    \item \textbf{Downward closure: } For any $\rho \in P$ and any $j \in [n]$, if $\tilde{\rho}$ is obtained by setting $\rho(j) \leftarrow *,$ then $\tilde{\rho} \in P$
    \item \textbf{Hard for decision trees: }For any $\rho \in P$ which fixes at most $p$ variables, there exists a distribution $\mu_{\rho}$ on the assignment to unfixed variables  such that the following holds:
    \begin{itemize}
        \item Let $T$ be a decision tree of depth $q$ querying the unfixed variables. If we sample an assignment to the unfixed variables from $\mu_{\rho}$ and run $T$ for $q$ steps, the partial assignment seen by the tree \footnote{The partial assignment seen by the tree is the partial assignment formed by bits queried by the tree and bits fixed by $\rho$.} also lies in $P$ with probability $\geq 1/2$.
    \end{itemize}
    \end{itemize}

    The CNF $\Phi$ is $(p,q)$-DT hard if it admits a set of $(p,q)$-DT-hard partial assignments.
\end{definition}
The primary goal in this section is to establish a \textit{lifting theorem} from DT-hardness to PDT-hardness. The main result of this section is the following. \newline

\begin{theorem} \label{thm: lifting}
    If $\Phi$ is $(p,p+q)$-DT-hard, then $(\Phi,g) $ is $(p,q)$-PDT-hard where $g: \F^{b} \rightarrow \F$ is any gadget with $||\hat{g}||_{\infty} \leq n^{-2-\eta}$ for some constant $\eta > 0$ and $b= O(\log n)$.
\end{theorem}

\begin{remark} \label{lifting_stroner_than_needed}
    Note that in the definition of PDT hardness, the hard distribution for an affine space $A$ and an extendable closure assignment $y$ was allowed to depend on both $A$ and $y$. In our proof of Theorem~\ref{thm: lifting}, our hard distribution is independent of $A$ and depends only on $G(y)$. The hard distribution in the proof simply is the corresponding DT-Hard distribution.
\end{remark}
\subsection*{Conventions}

\begin{itemize}
    \item Throughout this section, we shall be analyzing the execution of parity decision trees on some input. Whenever we say a phrase like $\{ \text{intersection of first }t \text{ queries made by }T \},$ we mean the following affine subspace: let $\ell_1, \cdots , \ell_t$ be the first $t$ linear forms queried by $T$ and let $b_1, b_2, \cdots , b_r$ be the responses received. Then,
    $$ \{  \text{intersection of first }t \text{ queries made by }T \}\triangleq \{x | \langle l_j, x \rangle = b_j ]\text{ } \forall j \in [t]\}\}$$
    Whenever we talk about the \textit{current subspace} of a PDT $T$, we shall mean the subspace
    $$ A_{\text{safe}} \cap \{ \text{intersection of queries made by }T \text{ so far}\}$$
    (Here $A_{\text{safe}}$ is a safe affine subspace which will be clear from the context.)
    \item For $x \in \F^{nb}$ and $S \subseteq [n]$ let $\text{PROJ}(x,S) \subseteq \F^{S \times [b]}$ denote the projection of $x$ on the blocks in $S$. 
    \item For an affine space $A \subseteq \F^{nb}$ and $S \subseteq [n]$ define $A[\setminus S]$ to be the projection of $A$ on the blocks in $[n] \setminus S$, i.e. $A[\setminus S] \triangleq \{\text{PROJ}(x,[n]\setminus S)\, |\, x \in A\}$.
\end{itemize}

\subsection{High level overview}

Before proceeding, we give a high-level overview of our proof. Suppose $\Phi$ is $(p,p+q)$-DT-hard. Let $P \subseteq \{0,1,*\}^n$ be a hard set of assignments, and for each $\rho \in P$ with $|\rho| \leq p $ let $\mu_{\rho}$ be the hard distribution. We show the same family of hard distributions establishes $(p,q)$-PDT-hardness. Suppose this is not the case. Then, the following things exist:

\begin{mdframed}
    
\begin{itemize}
    \item A $\rho \in P$ with $|\rho| \leq p$. Let $\mu= \mu_{\rho}$ be its DT-hard-distribution.
    \item An affine subspace $A \subseteq \F^{nb}$ with $\Cl(A)= \text{fix}(\rho)$ and $|\Clh(A)| \leq p$.
    \item An $A$-extendable closure assignment $y \in \F^{\Cl(A) \times [b]}$ such that $G(y)= \rho$
    \item A PDT $T_{\text{par}}$ of depth $q$ such that as $x$ is sampled from $G^{-1}(\mu) \cap A \cap C_y$, with probability $\geq 2/3$, the following holds:
    \begin{itemize}
        \item Let $A(q)$ be the affine space seen by $T_{\text{par}}$ after $q$ queries. Then, $G(x)|_{\Cl(A(q))} \not \in P$
    \end{itemize}
\end{itemize}
\end{mdframed}

\vspace{2mm}
Given this, we attempt to construct an ordinary decision tree $T_{ord}$ of depth $p+q$ which contradicts the DT-hardness of the distribution $\mu_{\rho}$. Informally, we want that as $z$ is sampled from $\mu$, with probability $\geq 1/2$ the partial assignment seen by $T_{ord}$ (this includes the bits queried by $T_{ord}$ and bits fixed by $\rho$) does not lie in $P$. \newline

A first approach would be to try to use query-to-communication lifting results such as \cite{GPW17,CFKMP21} which show that one can simulate the execution of any communication protocol on a uniformly random preimage of $z$, with only a few queries to $z$ (where the $x$ variables in each $\text{IP}$ gadget belong to Alice and $y$ variables belong to Bob). Note that if the variables are distributed between Alice and Bob in any arbitrary fashion, a linear query can be processed using only one bit of communication - so low depth PDTs can be seen as low cost communication protocols. \newline

The issue is that the input to $T_{\text{par}}$ is not coming from $G^{-1}(z)$ - it is coming from $G^{-1}(z) \cap A \cap C_y$. The set $A \cap C_y$ is in general not a product set over the Alice and Bob variables- so it is unclear how to use known query-to-communication lifting results in a blackbox fashion here (it might be possible to modify the proof of \cite{GPW17} to work for this case -- but as we explain soon, this will not be needed). However, we have more structure: we are working with linear queries instead of general communication protocols, and moreover, $A \cap C_y$ projected on the blocks outside $\text{Cl}(A)$ is a safe subspace, so we have tools such as the Equidistribution Lemma in our hand. \newline

It turns out that the tools developed in Section \ref{sec:conditional_fooling} do indeed help us do a \cite{GPW17}-style simulation even when the input is sampled from $G^{-1}(z) \cap A \cap C_y$. Before formally describing the simulation, we provide an informal version. \newline

We are given a PDT $T_{\text{par}}$, and we want to simulate the execution of $T_{\text{par}}(x)$ when $x$ is sampled from $ G^{-1}(z) \cap A \cap C_y$, by making only a few queries to $z$. Let's make some straightforward simplifications: we can assume WLOG the linear queries are supported on blocks outside $\Cl(A)$, since the other variables are fixed. So we can assume the input is sampled from $G^{-1}(z') \cap A_y$ where $z'= z|_{[n] \setminus \Cl(A)}$. Denoting $A_y$ by $A_{\text{safe}}$ our goal becomes the following: \newline
\begin{mdframed}
Given an unknown input $z$, a safe affine space $A_{\text{safe}}$ and a PDT $T_{\text{par}}$, simulate the execution of $T_{\text{par}}(x)$ when $x$ is sampled from $G^{-1}(z) \cap A_{\text{safe}}$, by making a small number of queries to $z$. \newline
\end{mdframed}

First, we modify $T_{\text{par}}$ into a canonical form (which we call $\text{CANONIZE}(T_{\text{par}}, A_{\text{safe}})$ below). The execution of $\text{CANONIZE}(T_{\text{par}},A_{\text{safe}})$ proceeds as follows:
\begin{itemize}
    \item Suppose, $\ell$ is the linear form queried by $T_{\text{par}}$ at time $t$.
    \item Let, $A(t)$ be the affine subspace visited by $T_{\text{par}}$ at time $t$. That is,
    $$ A(t)= A_{\text{safe}} \cap \{ \text{intersection of first }t \text{ queries made by }T_{\text{par}}\}$$
    \item \textbf{Case 1: }Suppose, querying $\ell$ changes $\Cl(A(t-1))$ to $\text{CL}_{new}$. Then, $\text{CANONIZE}(T_{\text{par}},A_{\text{safe}})$ first queries all coordinates in the blocks of $\text{CL}_{new} \setminus \Cl(A(t-1))$ - and then it queries $\ell$    
    \item \textbf{Case 2: } Suppose querying $\ell$ does not change $\Cl(A(t-1))$ (note that this depends only on $\ell$ and not on the response received). Then, $\text{CANONIZE}(T_{\text{par}}, A_{\text{safe}})$ simply queries $\ell$.
    
\end{itemize}
Now the simulator tries to simulate $\text{CANONIZE}(T_{\text{par}},A_{\text{safe}})$ instead of $T_{\text{par}}$. The advantage of working with $\text{CANONIZE}(T_{\text{par}},A_{\text{safe}})$ is that it has a fixed closure assignment at any given point of execution. (It is easy to see that at any point of execution, the subspace of $\text{CANONIZE}(T_{\text{par}},A_{\text{safe}})$ is simply the subspace of $T_{\text{par}}$ with all the closure variables fixed.) Conditioned on this closure assignment, the subspace projected on the non-closure blocks is a safe affine subspace - and to analyze that, we have the machinery developed in Section \ref{sec:conditional_fooling} in our hands. \newline

Looking ahead, it turns out that the simulator does not need to query anything in Case 2. It simply samples a uniform random bit $b \in \F$ and feeds it to $T_{\text{par}}$. In Case 1, the simulator queries $z_i$ for all $i \in \text{CL}_{new}$ (except those that were previously queried). Now it has to feed a value $\beta \in \F^{\Cl_{new} \times [b]}$ to $T_{\text{par}}$ to continue the simulation. It has to generate this value in such a manner that the distribution of $\beta$ matches the distribution of the actual response if $x$ were sampled from $G^{-1}(z) \cap A(t-1)$. We shall show that this distribution is actually the same (upto small $\ell_1$ distance) for all $w$ such that $w_i=z_i \text{ }\forall i \in \text{CL}_{new}$. Thus, knowledge of the coordinates of $z$ in $\text{CL}_{new}$ suffices to generate $\beta$ accurately so that the simulation can proceed - the simulator simply picks an arbitrary $w$ which agrees with $z$ on $\text{CL}_{new}$ and samples from $G^{-1}(w) \cap [A(t-1)]$ (this is how the simulator saves on the total number of queries). Proving this requires tools developed in Section \ref{sec:conditional_fooling}. \newline

We describe the conversion of a PDT to its canonical form in more detail in Section \ref{subsec:canonical_PDT}. We prove our main lifting theorem in Section \ref{subsection:lifting}. Section \ref{subsec:aux} contains an auxiliary lemma required later in the proof; the reader is advised to skip this section until it is finally used (in Section \ref{subsection:lifting}). 

\subsection{Auxiliary Lemmata} \label{subsec:aux}

\begin{lemma} \label{lem:ratio_close_to_one}
Let $A \subseteq \F^{nb}$ be an affine space and let $S \subseteq [n]$ be a set of blocks such that $A[\setminus S]$ is safe. Let $g: \F^b \rightarrow \F$ be a gadget with $||\hat{g}||_{\infty} \leq n^{-2-\eta}$ for some constant $\eta>0$. Let $y \in \F^{S \times [b]}$ be an $A$-extendable assignment. Let $z,w \in \F^n$ such that $z_i=w_i= g(y(i)) \text{ } \forall i \in S$.

Then,
\begin{align} \label{eq:pre-image-size-ratio}
 \dfrac{|G^{-1}(z) \cap A \cap C_y|}{|G^{-1}(w) \cap A\cap C_y |} \in \left[ 1 - n^{-1-\eta/6}, 1 + n^{-1-\eta/6} \right]
\end{align}
\end{lemma}

\begin{proof}
    Let $z= (G(y), \tilde{z})$ and $w = (G(y), \tilde{w}). $ Then,
    $ |G^{-1}(z) \cap A \cap C_y| = |G^{-1}(\tilde{z}) \cap A_{y}| \in \left[ 1 \pm n^{-1-\eta/3} \right] \dfrac{|A_y|}{2^{n-|S|}}$, where the second inequality follows from the Equidistribution Lemma as $A_y$ is a safe affine space.
    We have similar bounds for $|G^{-1}(w) \cap A \cap C_y|$. Combining them, we get that the LHS of \eqref{eq:pre-image-size-ratio} is in $\left[ \dfrac{1-n^{-1-\eta/3}}{1+n^{-1-\eta/3}}, \dfrac{1+n^{-1-\eta/3}}{1-n^{-1-\eta/3}} \right] \subseteq [1-n^{-1-\eta/6}, 1+n^{-1-\eta/6}]$, as desired.

\end{proof}

\subsection{Canonical Parity Decision Trees} \label{subsec:canonical_PDT}

 \begin{definition} Let $A \subseteq \F^{nb}$ be an affine space. A set of blocks $S \subseteq [n]$ is $A$-\textit{compatible} if the following holds: let $y \in \F^{S \times [b]}$ be any $A$-extendable assignment. Then, $\text{Cl}(A \cap C_y)= S$. (This holds for one $A$-extendable assignment if and only this holds for all $A$-extendable assignments.) \end{definition}
    Informally, $S$ is $A$-compatible if and only if querying all the bits in $S$ changes $\Cl(A)$ to $S$. Observe that any $A$-compatible set must contain $\Cl(A)$.

 \begin{definition} Let $A \subseteq \F^{nb}$ be an affine space. A linear form $\ell$ is $A$-stationary if $\Cl(A)= \Cl(A \cap \{x | \langle \ell, x \rangle = 0 \})$
    \end{definition}
    Informally, $\ell$ is $A$-stationary if querying $\ell$ does not change the closure of $A$. \newline

    \vspace{2mm}

Let $A_{\text{safe}}\subseteq \F^{nb}$ be a \textit{safe} affine subspace.
\begin{definition}

An $A_{\text{safe}}$-canonical PDT $T$ is a PDT with the following properties: 

\begin{itemize}
    \item The execution of $T$ can be divided into \textit{phases}. At each phase, $T$ either executes a \textit{Type 1 query} or a \textit{Type 2 query}.
    \item Define $A(t) = A_{\text{safe}}\cap \{ \text{affine queries made by } T \text{ till Phase }t\}$
    \item \textbf{Type 1 query: }Select an $A(t-1)$-compatible subset of blocks $S \subseteq [n]$. Query all variables in the blocks in $S \setminus \Cl(A(t-1))$. 
    \item \textbf{Type 2 query: }Select an $A(t-1)$-stationary linear form $\ell$ and query $\ell$.
\end{itemize}

\end{definition}
Observe that in a canonical PDT, the affine space $A(t)$ fixes all bits in $\Cl(A(t))$. Thus, a canonical PDT maintains a \textit{closure assignment} at each phase. 

\begin{definition}
Let $A_{\text{safe}} \subseteq \F^{nb}$ be a safe affine space and let $T_{\text{par}}$ be a PDT. Define $\text{SPACE}(t) = A_{\text{safe}} \cap \{\text{first } t \text{ queries made by }T_{\text{par}} \}$. We define its canonized version, $\text{CANONIZE}(T_{\text{par}}, A_{\text{safe}})$ to be a canonical PDT which simulates $T_{\text{par}}$, whose description is given below: 
\newline

Throughout the execution, $\text{CANONIZE}(T_{\text{par}}, A_{\text{safe}})$ maintains a counter called $count$, starting at 0. It maintains the following invariant:

\begin{itemize}
    \item At every point of time, $\text{CANONIZE}(T_{\text{par}}, A_{\text{safe}})$ has queried the first $count$-many queries of $T_{\text{par}}$.
    \item Moreover, it holds that
    $$\text{Cl}\big(\text{current subspace of }\text{CANONIZE}(T_{\text{par}},A_{\text{safe}})\big) = \text{Cl}(\text{SPACE}(count))$$
    where current subspace of $\text{CANONIZE}(T_{\text{par}},A_{\text{safe}})$
$$         = A_{\text{safe}} \cap \{  \text{intersection of all queries made by CANONIZE}(T_{\text{par}},A_{\text{safe}}) \text{ so far}\} $$

\end{itemize}

We formally describe the execution of $\text{CANONIZE}(T_{\text{par}}, A_{\text{safe}})$ below. 

\begin{mdframed}
$\text{CANONIZE}(T_{\text{par}}, A_{\text{safe}})$
\hrule \vspace{1mm}
\textbf{Input}: Query access to $x \in \F^{nb}$ \vspace{1mm}
\hrule \vspace{1mm}
\textbf{Internal variables:} \vspace{1mm}
\begin{itemize}
    \item $count$= number of queries of $T_{\text{par}}$ executed so far. Initialized to 0.
    \item $v$: Node of $T_{\text{par}}$ the tree is currently executing
    
\end{itemize}
\hrule
\vspace{1mm}
\textbf{Execution}

$\text{While }v \text{ is not a leaf of }T_{\text{par}}:$
\hspace{3mm} 
\begin{itemize}
    \item \textbf{Let:}
    \begin{itemize}
        \item $A_{cur}= \text{current affine subspace of }\text{CANONIZE}(T_{\text{par}},A_{\text{safe}})$
        \item $S_{cur}= \text{Cl}(A_{cur})= \text{Cl}(\text{SPACE}(count))$
        \item $\beta = \text{assignment to }S \text{ fixed by  } A_{cur}$ 
        \item $\ell= \text{linear form queried at }v$ by $T_{\text{par}}$
        \end{itemize}
        \item \textbf{Case 1: }$\ell $ is determined by $A_{cur}$ 
        \begin{itemize}
            \item Increase $count$ by 1.
            \item Let $b= \text{fixed value of }\ell$. Update $v$ according to $b$.
        \end{itemize}
        \item \textbf{Case 2: }$\ell$ is a $A_{cur}$-stationary linear form 
        \begin{itemize}
            \item Start a new phase.
            \item Execute a \textit{type 2 query} by querying $\ell$. Let $b$ be the response received. 
            \item Increase $\text{counter}$ by 1.
            \item Update $v$ according to $b$.
        \end{itemize}
        \item \textbf{Case 3: } Querying $\ell$ changes $\text{Cl}(A_{cur})$
        \begin{itemize}
            \item Start a new phase.
            \item Let $S_{new}$ be the new closure. Execute a \textit{type 1 query} by querying all coordinates in the blocks of $S_{new} \setminus S_{cur}$. 
            \item Start a new phase. 
            \item Execute a \textit{type 2 query} by querying $\ell$. Let $b$ be the response received.
            \item Increase $\text{counter}$ by 1.
            \item Update $v$ according to $b$.
            
    \end{itemize}
\end{itemize}
\end{mdframed}

\end{definition}

Let $T$ be any canonical PDT. We define $\text{TRANSCRIPT}(x,T,t)$ to be the transcript of $T$ after $t$ phases on input $x$. This consists of the following data: \newline

\begin{mdframed}
\begin{itemize}
    \item $v(t)= \text{vertex of }T \text{ reached after }t \text{ phases}$
    \item $A(t)= A_{\text{safe}} \cap \{  \text{intersection of queries made during first }t \text{ phases}\}$
    \item $S_t= \Cl(A(t))$ 
    \item $\beta(t)=$ fixed assignment to $S_t$. (Recall that $A(t)$ fixes all bits in $S_t$.) 
    \item $ \text{LIN-QUERY}(t) = \{  (\ell_1, b_1), (\ell_2, b_2), \cdots , (\ell_r, b_r) \}$ is the set of type 2 queries made by $T$ during first $t$ phases and the responses received. 
   
\end{itemize}
\end{mdframed} \vspace{2mm}

Technically, this above data is redundant because $A(t), S_t, \beta(t), \text{LIN-QUERY}(t)$ are all determined by $v(t)$. But we still describe the transcript in this manner because it facilitates describing how the transcript gets updated at each phase. Let the transcript of $T$ on $x$ be the transcript generated at the unique leaf $T$ reaches on $x$.\newline

\subsection{Proof of Lifting Theorem} \label{subsection:lifting}

In the following, we are given a canonical PDT $T$ which arises as the \textit{canonization} of another PDT $T_{\text{par}}$. We construct an ordinary randomized DT $T_{\text{ord}}$, which, on input $z$, outputs a transcript of $T_{\text{par}}$. Our goal is the following: over the internal coin tosses of $T_{\text{ord}}$, the output distribution of $T_{\text{ord}}(z)$ should be close in statistical distance to the transcript distribution of $T_{\text{par}}(x)$ when $x$ is sampled from $G^{-1}(z)\cap A_{\text{safe}}$. This theorem is inspired from \cite{GPW17}. We formalize the statement below.  

\begin{theorem} \label{thm: simulation}
Let $A_{\text{safe}} \subseteq \F^{nb}$ be a safe affine space with $|\Clh(A_{\text{safe}})| \leq p$. Let $T_{\text{par}}$ be a PDT with depth $\leq q$ and $T= \text{CANONIZE}(T_{\text{par}}, A_{\text{safe}})$. Let $g: \F^b \rightarrow \F$ be a gadget with $||\hat{g}||_{\infty} \leq n^{-2-\eta} $ for some constant $\eta>0$. There exists an ordinary randomized decision tree $T_{\text{ord}}$ of depth $\leq p+q$ such that the following two distributions have statistical distance $\leq b \cdot n^{-\eta/20}$ for all $z$:
\begin{itemize}
    \item \textbf{Distribution 1: }Sample $x \leftarrow G^{-1}(z) \cap A_{\text{safe}}$. Output \fbox{$\text{Transcript of } T \text{ on }x$} 
    \item \textbf{Distribution 2: }Output $T_{\text{ord}}(z)$
\end{itemize}
\end{theorem}

\begin{proof}

Throughout this proof, we denote by $A(t)$ the subspace after first $t$ phases of $T$, i.e.
$$ A(t) = A_{\text{safe}} \cap \{  \text{intersection of queries made by }T \text{ in first }t \text{ phases}\}$$.

The decision tree $T_{\text{ord}}$ runs the simulation routine $\text{SIMULATE}(T,A_{\text{safe}},z)$ as described below.
\newline

\begin{mdframed}
$\text{SIMULATE}(T, A_{\text{safe}}, z)$

\hrule
\vspace{1.5mm}
\textbf{Input}: 
\begin{itemize}
    \item Safe affine space $A_{\text{safe}} \subseteq \F^{nb}$
    \item $A_{\text{safe}}$-canonical PDT $T$
    \item Query access to a point $z \in \F^n$
\end{itemize}

\hrule
\vspace{1.5mm}
\textbf{Algorithm}
\begin{itemize}
    \item Define $\text{TR}_0 = \text{initial transcript}, v(0):$
    \begin{itemize}
        \item $A(0)= A_{\text{safe}}$
        \item $S_0= \phi$
        \item $\beta(0)= \text{empty assignment} $
        \item $\text{LIN-QUERY}(0)= \phi$
        \item $v(0)= \text{root of }T$
    \end{itemize}
    \item For $t= 1, 2, \cdots , d= \text{depth}(T)$,
    $$ \text{TR}_t = \text{UPDATE-SIMULATION}(T, \text{TR}_{t-1}, A_{\text{safe}}, t, z)$$
    \item Return $\text{TR}_d$
\end{itemize}

\vspace{1.5mm}

\end{mdframed} \vspace{2mm}

Here $\text{UDPATE-SIMULATION}(T,\text{TR}_{t-1},A_{\text{safe}},t,z)$ is a randomized algorithm which takes as input a transcript at phase $t-1$ and outputs a transcript at phase $t$. This is implemented as follows: \newline

\begin{mdframed}
$\text{UPDATE-SIMULATION}(T, \text{TR}_{t-1}, A_{\text{safe}}, t, z)$
\hrule
\vspace{1.5mm}
\textbf{Input}:
\begin{itemize}
    \item $A_{\text{safe}} \subseteq \F^{nb}$ safe affine space
    \item $T$: $A_{\text{safe}}$-canonical PDT
    \item $\text{TR}_{t-1}$: transcript at phase $t-1$
    \item Query access to a point $z \in \F^n$
\end{itemize}
\hrule
\vspace{1.5mm}
\textbf{Output}:

\begin{itemize}
    \item $\text{TR}_t$: transcript at phase $t$
\end{itemize}
\hrule
\vspace{1.5mm}
\textbf{Algorithm:}
\begin{itemize}
    \item \textbf{Case 0: }$v(t-1)$ is a leaf. 
    \begin{itemize}
        \item Do nothing. Return input transcript.
    \end{itemize}

    \item \textbf{Case 1: }$T$ makes a type 1 query at $v(t-1)$.

    \begin{itemize}
        \item Update $S_{t-1}$ to $S_t$. Sample the new closure assignment $\beta(t)$ as follows:
        \begin{itemize}
        \item Query $z_j$ for $j \in S_{t} \setminus S_{t-1}$
        \item Pick an arbitrary $w$ from the set $\{w | w_j = z_j \text{ for all }j \in S_t\}$ \footnote{All bits in $S_t$ have been queried at this point.}
        \item Sample $x \leftarrow G^{-1}(w) \cap A(t-1)$ uniformly at random \footnote{Since $A(t-1)$ fixes all bits of $S_{t-1}$, this automatically ensures $x$ is consistent with $\beta(t-1)$}
         \footnote{Since $S_t$ is $A(t-1)$-compatible, the subspace $A(t-1)[\setminus S]$ is safe, so the set in question is non-empty by the equidistribution lemma.}
        \item Set $\beta(t)= \text{PROJ}(x, S_t)$
        \end{itemize}
        \item Update $v(t)$ according to $\beta(t)$
    \end{itemize}
    \item \textbf{Case 2: }$T$ makes a type 2 query $\ell$ at $v(t-1)$. 
    \begin{itemize}
        \item Sample a bit $b \in \F$ uniformly at random.
        \item Update $\text{LIN-QUERY}(t) = \text{LIN-QUERY}(t-1) \cup \{(\ell ,b)\} $
        \item Update $v(t)$ according to $b$
    \end{itemize}
\end{itemize}
\end{mdframed}
\vspace{2mm}Since $|\Clh(A_{\text{safe}})| \leq p$, the size of the amortized closure can increase by at most 1 at each query and the closure is a subset of the amortized closure, the final size of the closure of the affine space reached by $T_{\text{par}}$ (and therefore that of $T= \text{CANONIZE}(T_{\text{par}}, A_{\text{safe}})$) \footnote{To clarify, the affine space reached by $T_{\text{par}}$ is $A_{\text{safe}} \cap \{ \text{intersection of all queries made by }T_{\text{par}}\}$.} is at most $p+q$. Thus, $T_{\text{ord}}$ makes at most $p+q$ queries. \newline

It remains to show that the resulting distribution is close to $\text{TRANSCRIPT} (T, G^{-1}(z)) \cap A_{\text{safe}})$. To do so, we shall employ a hybrid argument. \newline

We define a procedure for generating $\text{TRANSCRIPT}(T, x)$ when $x$ is sampled uniformly at random from $G^{-1}(z) \cap A_{\text{safe}}$. We call this procedure $\text{GENERATE}(T, A ,z)$ (this procedure assumes full knowledge of $z$). We shall be comparing the output distributions of $\text{GENERATE}(T,A_{\text{safe}},z)$ and $\text{SIMULATE}(T,A_{\text{safe}},z)$.

\begin{mdframed}
$\text{GENERATE}(T, A_{\text{safe}}, z)$

\hrule
\vspace{1.5mm}
\textbf{Input}: 
\begin{itemize}
    \item Safe affine space $A_{\text{safe}} \subseteq \F^{nb}$
    \item $A_{\text{safe}}$-canonical PDT $T$
    \item A point $z \in \F^n$ (with full access)
\end{itemize}

\hrule
\vspace{1.5mm}
\textbf{Algorithm}
\begin{itemize}
    \item Define $\text{TR}_0 = \text{initial transcript}:$
    \begin{itemize}
        \item $A(0)= A_{\text{safe}}$
        \item $S_0= \phi$
        \item $\beta(0)= \text{empty assignment} $
        \item $\text{LIN-QUERY}(0)= \phi$
        \item $v(0)= \text{root of }T$
    \end{itemize}
    \item For $t= 1, 2, \cdots , d= \text{depth}(T)$,
    $$ \text{TR}_t = \text{UPDATE-ACTUAL}(T, \text{TR}_{t-1}, A_{\text{safe}}, t, z)$$
    \item Return $\text{TR}_d$
\end{itemize}

\hrule
\vspace{1.5mm}

\end{mdframed} \vspace{2mm}

\vspace{2mm}
Here $\text{UPDATE-ACTUAL}(T, \text{TR}_{t-1}, A_{\text{safe}}, t, z)$ is the function which updates the transcript according to the appropriate conditional distrbution. This is implemented as follows:

\begin{mdframed}
$\text{UPDATE-ACTUAL}(T, \text{TR}_{t-1}, A_{\text{safe}}, t, z)$
\hrule
\vspace{1.5mm}
\textbf{Input}:
\begin{itemize}
    \item $A_{\text{safe}} \subseteq \F^{nb}$ safe affine space
    \item $T$: $A_{\text{safe}}$-canonical PDT
    \item $\text{TR}_{t-1}$: a transcript at phase $t-1$
    \item A point $z \in \F^n$ (with full access)
\end{itemize}
\hrule
\vspace{1.5mm}
\textbf{Output}:

\begin{itemize}
    \item $\text{TR}_t$: a transcript at phase $t$
\end{itemize}
\hrule
\vspace{1.5mm}
\textbf{Algorithm:}
\begin{itemize}
    \item \textbf{Case 0: }$v(t-1)$ is a leaf. 
    \begin{itemize}
        \item Do nothing. Return input transcript.
    \end{itemize}
    \item \textbf{Case 1: }$T$ makes a type 1 query at $v(t-1)$. \newline
    In $\text{TR}_t$, $S_{t-1}$ gets replaced by $S_t$.
    \begin{itemize}
        \item Replace $S_{t-1}$ by $S_t$. Sample the new closure assignment $\beta(t)$ as follows:
        \begin{itemize}
        \item Sample $x$ uniformly at random from $G^{-1}(z) \cap A(t-1)$ 
        \item Set $\beta(t) = \text{PROJ}(x, S_t)$ for all $j \in S_t$ \footnote{Since $A(t-1)$ fixes all bits in $S_{t-1}$ to $\beta(t-1)$, this automatically ensures $\beta(t)$ is consistent with $\beta(t-1)$. Also, by equidistribution lemma, $A(t-1) \cap G^{-1}(z)$ is non-empty (although that is not required because if we are following a valid execution transcript, this set is guaranteed to be non-empty regardless of the gadget).}
        \end{itemize}
        \item Update $v(t)$ according to $\beta(t)$
    \end{itemize}
    \item \textbf{Case 2: }$T$ makes a type 2 query $\ell$ at $v(t-1)$.
    \begin{itemize}
        \item Let $\lambda = \text{Pr}_{x \in A(t-1) \cap G^{-1}(z)}[ \langle \ell, x \rangle = 0]$.
        \item Let $b$ be a random binary variable which is 0 with probability $\lambda$, 1 with probability $1-\lambda$.
        \item $\text{LIN-QUERY}(t) = \text{LIN-QUERY}(t-1) \cup \{ (\ell, b) \}$
        \item Update $v(t)$ according to $b$
    \end{itemize}
\end{itemize}
\end{mdframed}
\vspace{2mm}

Our goal is to show the output distributions of $\text{SIMULATE}(T,A_{\text{safe}},z)$ and $\text{GENERATE}(T,A_{\text{safe}},z)$ are close in statistical distance.
A transcript from $\text{GENERATE}(T,A_{\text{safe}},z)$ is generated by the following sampling procedure. \newline

\begin{mdframed}
\begin{itemize}
    \item Start with the initial transcript $\text{TR}_0$
    \item For $t= 1, 2, \cdots , d= \text{depth}(T)$,
    $$\text{TR}_t \leftarrow \text{UPDATE-ACTUAL}(T, \text{TR}_{t-1}, A_{\text{safe}}, t, z)$$ 
    \item Output $\text{TR}_d$.
\end{itemize}
\end{mdframed}
\vspace{2mm} 
A transcript from $\text{SIMULATE}(T,A_{\text{safe}},z)$ is generated by the following sampling procedure. \newline

\begin{mdframed}
\begin{itemize}
    \item Start with the initial transcript $\text{TR}_0$
    \item For $t= 1, 2, \cdots , d= \text{depth}(T)$, 
    $$\text{TR}_t \leftarrow \text{UPDATE-SIMULATION}(T, \text{TR}_{t-1}, A_{\text{safe}}, t, z)$$ 
    \item Output $\text{TR}_d$.
\end{itemize}
\end{mdframed}
\vspace{2mm}

We define a sequence of hybrids $\mathcal{H}_0, \mathcal{H}_1, \cdots , \mathcal{H}_{d}$ ($d= \text{depth}(T)$) such that $\mathcal{H}_d = \text{GENERATE}(T,A_{\text{safe}},z)$  and $\mathcal{H}_0= \text{SIMULATE}(T,A_{\text{safe}},z)$. \vspace{1mm}

\begin{mdframed}
    
\textbf{Hybrid $\mathcal{H}_i$}
\begin{itemize}
    \item Start with the initial transcript $\text{TR}_0$
    \item For $t= 1, 2, \cdots , d$, 
    $$\text{TR}_t \leftarrow \begin{cases}
        \text{UPDATE-ACTUAL}(T, \text{TR}_{t-1}, A_{\text{safe}}, t, z) & \text{ if }t < i \\
        \text{UPDATE-SIMULATION}(T, \text{TR}_{t-1}, A_{\text{safe}}, t, z) & \text{ if } t\geq i
    \end{cases}$$ 
    \item Output $\text{TR}_d$.
\end{itemize}
\end{mdframed}\vspace{2mm}
To show the statistical distance between the outputs of $\mathcal{H}_0, \mathcal{H}_d$ is small, we show that the statistical distance between the outputs of $\mathcal{H}_i$ and $\mathcal{H}_{i+1}$ is small for all $i$. More specifically, we shall show that $d_{TV}(\mathcal{H}_i, \mathcal{H}_{i+1}) \leq o(n^{-1-\eta/20}))$. Since $d \leq nb$, this implies that the total simulation error is at most $o(b \cdot n^{-\eta/20})$, as claimed.
\newline

Since $\mathcal{H}_i$ and $\mathcal{H}_{i+1}$ differ only in a single update step, it suffices to show that changing just one update step from $\text{UPDATE-ACTUAL}$ to $\text{UPDATE-SIMULATION}$ does not significantly affect the distribution (in statistical distance). Since at phases $i+1, i+2, \cdots ,$ $\mathcal{H}_i$ and $\mathcal{H}_{i+1}$ get identically updated, it suffices to show the distributions of the new data added (in case of a type 1 query, the partial assignment $\beta(t)$; in case of a type 2 query, the bit $b$) are almost identically distributed by $\text{UPDATE-ACTUAL}$ and $\text{UPDATE-SIMULATION}$. \newline

We look at both possible types of queries and analyze the resulting distributions. \newline

\begin{mdframed}
\textbf{When $T$ makes a type 1 query}
\hrule
\vspace{1.5mm}
\textit{Common data}
\begin{itemize}
    \item Sets $S_{t-1}$ (closure at phase $t-1$) and $S_t$ (new closure)
    \item $\beta(t-1) :$ assignment to variables in blocks of $S_{t-1}$
    \item $A(t-1)$: affine space seen by $T$ at phase $t-1$ 
\end{itemize}
\hrule 
\vspace{1.5mm}
\textit{Output}
\begin{itemize}
    \item $\beta(t)$: assignments to variables in blocks of $S_t$
\end{itemize}
\hrule
\vspace{1.5mm}
\textit{Sampling procedure of $\text{UPDATE-ACTUAL}$} 
\begin{itemize}
    \item Sample $x \in A(t-1) \cap G^{-1}(z)$ uniformly at random.
    \item Set $\beta(t)= \text{PROJ}(x, S_t)$
\end{itemize}

\hrule
\vspace{1.5mm}

\textit{Sampling procedure of $\text{UPDATE-SIMULATION}$}
\begin{itemize}
    \item Query $z_j$ for $j \in S_t$
    \item Pick an arbitrary $w \in \F^{n}$ from the set $\{w | w_i = z_i \forall i \in S_t\}$.
    \item Sample $x \in A(t-1) \cap G^{-1}(w)$ uniformly at random.
    \item Set $\beta(t)= \text{PROJ}(x, S_t)$.
\end{itemize}
\hrule
\vspace{1.5mm}

\textit{Showing these distributions are close} \newline

\vspace{1.5mm}
We shall show that for all $w \in \F^n$ such that $w_i=z_i \text{ } \forall i \in S_t$, the distributions $\nu_1= \text{PROJ}(G^{-1}(w) \cap A(t-1), S_t)$ and $\nu_2= \text{PROJ}(G^{-1}(z) \cap A(t-1), S_t)$ are $n^{-1-\eta/8}$ close to each other. This implies the result. \newline

We shall show that for all $y \in \{0,1\}^{S_t \times [b]}$ such that $y$ is consistent with $\beta(t-1)$ and $g(y(i))=z_i=w_i \text{ } \forall i \in S_t$, it holds that $\text{Pr}_{\nu_1}[y] \in [1 \pm o(n^{-1-\eta/8})] \text{Pr}_{\nu_2}[y]$. (It is easy to see all $y \in \text{supp}(\nu_1) \cup \text{supp}(\nu_2)$ satisfy these two conditions.)\newline

We have
$$ \text{Pr}_{\nu_1}[y]= \dfrac{| G^{-1}(w) \cap A(t-1) \cap C_y|}{| G^{-1}(w) \cap A(t-1) \cap C_{\beta(t-1)} |}$$

and a similar expression for $\text{Pr}_{\nu_2}[y]$. 
Thus,

$$ \dfrac{\text{Pr}_{\nu_1}[y]}{\text{Pr}_{\nu_2}[y]}= \dfrac{| G^{-1}(w) \cap A(t-1) \cap C_y|}{| G^{-1}(z) \cap A(t-1) \cap C_y|} \times \dfrac{| G^{-1}(z) \cap A(t-1) \cap C_{\beta(t-1)} |}{| G^{-1}(w) \cap A(t-1) \cap C_{\beta(t-1)} |} $$

Since $S_t$ is an $A(t-1)$-compatible set, we get that $A(t-1)[\setminus S_t]$ is a safe space. Applying Lemma \ref{lem:ratio_close_to_one} to $A(t-1)$ and $\beta(t)$, the first term lies in $[1-n^{-1-\eta/6}, 1+n^{-1-\eta/6}]$. Since $\beta(t-1)$ fixes precisely the lifted variables in $\text{Cl}(A(t-1))$, we can also apply Lemma \ref{lem:ratio_close_to_one} on $A(t-1)$ and $\beta(t-1)$, which gives us that the second term lies in $[1-n^{-1-\eta/6}, 1+n^{-1-\eta/6}]$. Thus,
$\text{Pr}_{\nu_1}[y]/\text{Pr}_{\nu_2}[y] \in [1 - n^{-1-\eta/8}, 1+n^{-1-\eta/8}]$. Finally, we have

\begin{align*}
    d_{TV}(\nu_1, \nu_2) & = \dfrac{1}{2} \displaystyle \sum_{y} |\text{Pr}_{\nu_1}(y) - \text{Pr}_{\nu_2}(y)| \\
    & \leq \dfrac{1}{2} \times n^{-1-\eta/8} \displaystyle \sum_{y} \text{Pr}_{\nu_1}(y) \\
    & < n^{-1-\eta/8}
\end{align*}
as desired.

\end{mdframed}

\begin{mdframed}
\textbf{When $T$ makes a type 2 query}
\hrule
\vspace{1.5mm}
\textit{Common data}
\begin{itemize}
    \item $S_{t-1}$ (closure at phase $t-1$)
    \item $\beta(t-1) :$ assignment to variables in blocks of $S_{t-1}$
    \item $A(t-1)$: affine space seen by $T$ at phase $t-1$ 
    \item $\ell$: the new linear form queried
\end{itemize}
\hrule 
\vspace{1.5mm}
\textit{Output}
\begin{itemize}
    \item A bit $b \in \{0,1\}$
\end{itemize}
\hrule
\vspace{1.5mm}
\textit{Sampling procedure of $\text{UPDATE-ACTUAL}$} 
\begin{itemize}
    \item Let $\lambda= \text{Pr}_{x \leftarrow G^{-1}(z) \cap A(t-1)} [\langle \ell , x \rangle = 0] $
    \item 
    $$ \text{Output }b= \begin{cases}
        0 & \text{with probability }\lambda \\
        1 & \text{with probability }1-\lambda  \end{cases}$$
\end{itemize}

\hrule
\vspace{1.5mm}

\textit{Sampling procedure of $\text{UPDATE-SIMULATION}$}
\begin{itemize}
    \item 
    $$ \text{ Output }b= \begin{cases}
        0 & \text{with probability }1/2 \\
        1 & \text{with probability }1/2
    \end{cases}$$
\end{itemize}
\hrule
\vspace{1.5mm}

\textit{Showing these distributions are close} \newline

\vspace{1.5mm}
To show that the distributions of $b$ are close in both cases, we need to establish that $\lambda= \text{Pr}_{x \leftarrow G^{-1}(z) \cap A(t-1)} [\langle \ell , x \rangle = 0] $ is close to $1/2$. Let $A' = A(t-1) \cap \{x | \langle \ell, x \rangle = 0 \}$. Since $\ell$ is a type 2 query, $\text{Cl}(A') = \text{Cl}(A(t-1)) = S_{t-1}$. Our goal is to show
$$ \text{Pr}_{x \leftarrow G^{-1}(z)} [x \in A' | x \in A(t-1)] \in  \left[ \dfrac{1}{2} \pm o(n^{-1-\eta/6}) \right]$$
Let $z = (z_1, z_2)$ where $z_1 \in \F^{S_t}, z_2 \in \F^{[n] \setminus S_t}$. Rewrite the above expression as
$$ \text{Pr}_{u \leftarrow G^{-1}(z_2)} [u \in (A')_{\beta(t-1)} | u \in A(t-1)|_{\beta (t-1)}]$$
Since $\text{Cl}(A') = \text{Cl}(A(t-1))= S_{t-1}$, both the affine spaces $(A')_{\beta(t-1)}$ and $(A(t-1))_{\beta(t-1)}$ are safe. Let $k= \text{codim}\big((A(t-1))_{\beta(t-1)}\big)$. Note that $\text{codim}((A')_{\beta (t-1)})= k+1$ \footnote{We have $\text{codim}( (A')_{\beta (t-1)}) \in \{k, k+1\}$ . If $\text{codim}((A')_{\beta (t-1)})$ were $k$ instead, then $\ell$ would have been fixed by $A(t-1)$ (because $A(t-1)$ fixes all bits of $\Cl(A(t-1))$). Therefore, in the conversion from the original PDT to its canonical version, this query would not have been made.} \newline

By Lemma \ref{uniform_coset}, we have $\text{Pr}_{u \in G^{-1}(z_2)} [u \in A(t-1)_{\beta(t-1)}] \in [1 \pm n^{-1-\eta/4}] 2^{-k}$ and $\text{Pr}_{u \in G^{-1}(z_2)} [u \in (A')_{\beta(t-1)}] \in [1 \pm n^{-1-\eta/4}] 2^{-k-1}$.
Thus, we have

\begin{align} \text{Pr}_{x \leftarrow G^{-1}(z)} [x \in A' | x \in A(t-1)] & = \dfrac{\text{Pr}_{u \in G^{-1}(z_2)} [u \in (A')_{\beta(t-1)}]}{\text{Pr}_{u \in G^{-1}(z_2)} [u \in A(t-1)_{\beta(t-1)}]} \\ 
& \in \left[ \dfrac{1}{2} - n^{-1-\eta/6}, \dfrac{1}{2} + n^{-1-\eta/6} \right], \end{align}
so the statistical distance between the two distributions is at most $O(n^{-1-\eta/6})$. \newline

\end{mdframed}
\end{proof}

Now we prove the main result of this section: if $\Phi$ is $(p,p+q)$-DT-hard then $(\Phi, g)$ is $(p,q)$-PDT-hard whenever $g: \F^b \rightarrow \F$ is a gadget with $||\hat{g}||_{\infty} \leq n^{-2-\eta}$ (for some constant $\eta > 0$) and $b= O(\log n)$.

\begin{proof} (of Theorem \ref{thm: lifting})
    Let $P \subseteq \{0,1,*\}^n$ be a set of partial assignments and let $\mu_{\rho}$ be a DT-hard distribution for each $\rho \in P$ with $|\rho| \leq p$. This same family of distributions will establish PDT-hardness. \newline

    Assume this family is not PDT-hard. Then, the following objects exist: \newline

   \begin{mdframed}
       
    \begin{itemize}
        \item An affine space $A \subseteq \F^{nb}$ with $\Clh(A) \leq p$
        \item A closure assignment $y \in \F^{\Cl(A) \times [b]}$ such that $\rho= G(y) \in P$. Let $\mu_{\rho}$ denote the hard distribution for $\rho$ guaranteed to exist by DT-hardness.
        \item A PDT $T$ with $\text{depth}(T) \leq q$ such that, as $x \leftarrow G^{-1}(\mu_{\rho}) \cap C_y \cap A$, with probability $\geq 2/3$, the following holds:
        \begin{itemize}
            \item Let $A(q)$ be the affine space seen by $T$ after $q$ queries, i.e., \newline
            $$A(q)=C_y \cap A \cap \{ \text{intersection of all queries made}\}$$ \newline
            Then, $G(x)|_{\Cl(A(q)} \not \in P$.
        \end{itemize}
    \end{itemize}

       \end{mdframed}
\vspace{2mm}

    WLOG assume $T$ only queries linear forms supported on blocks in $[n] \setminus \Cl(A)$ (since the closure assignment $y$ is fixed). Note that $A_y$ is a safe affine subspace. Our goal is to contradict the original assumption of $(p,p+q)$-DT-hardness - so we shall construct an ordinary decision tree $T_{\text{ord}}$ such that starting from the cube $C_\rho$, when $z$ is sampled from $\mu_{\rho}$, within $p+q$ queries $T_{\text{ord}}$ sees a partial assignment not in $P$. \newline
    
   Let $\mathcal{T}$ be the randomized decision tree which simulates $\text{CANONIZE}(T, A_y)$ by making at most $p+q$ queries, guaranteed to exist by Theorem \ref{thm: simulation}. Let $z= (z_1, z_2)$ where $z_1 \in \F^{\Cl(A)}, z_2 \in \F^{[n] \setminus \Cl(A)}$. By Theorem \ref{thm: simulation}, the output distribution of $\mathcal{T}(z_2)$ is $o(1)$-close to the transcript of $\text{CANONIZE}(T, A_y)$ when run on a uniformly random preimage of $z_2$ conditioned with $A_y$. Let $\beta$ be the closure assignment at the last step of the transcript. When $z$ is sampled according to $\mu$, with probability $\geq 2/3 - o(1)$ (over the randomness of $z$ and $\mathcal{T}$), the partial assignment $G(\beta, y)$ does not lie in $P$. Consider the randomized ordinary decision tree $T_{\text{rand}}$ which simply runs $\mathcal{T}$ and outputs $G(\beta, y)$. With probability $\geq 1/2$, $T_{\text{rand}}$ sees a partial assignment not in $P$ after $p+q$ queries. By fixing the coins of $T_{\text{rand}},$ we get a deterministic ordinary decision tree $T_{\text{det}}$ which sees a partial assignment not in $P$ with probability $\geq 1/2$. This contradicts the assumption of $(p,p+q)$ hardness.

\end{proof}

\subsection{DT-hardness implies lower bounds for depth-restricted $\text{Res}(\oplus)$}
Combining Theorem \ref{random_walk_with_restart_official} and Theorem \ref{thm: lifting}, and replacing $\eta$ by $\eta/2$ (for later convenience), we obtain the following:

\begin{theorem} \label{ref: DT_hardness_implies_lower_bounds}
Let $\Phi$ be a $(p, p+q)$-DT-hard CNF. Let $g: \F^b \rightarrow \F$ be a gadget with $b= O(\log n)$ and $||\hat{g}||_{\infty} \leq n^{-2-\eta/2}$ for some constant $\eta>0$. Then, any $\text{Res}(\oplus)$ refutation of $\Phi \circ g$ of size $s$ must have depth $\Omega \left(  \dfrac{pq}{\log (s)}\right)$.
    
\end{theorem}

\section{Proving DT Hardness} \label{sec:proving_DT_hardness}

The main theorem of this subsection is that Tseitin contradiction over an expander is $(\Omega(n), \Omega(n))$-DT hard, where $n$ is the number of variables. \newline

For convenience of the reader, we recall the definition of DT-hardness here. \newline

\textbf{Definition \ref{def: unlifted_hardness}}: 
     Let $\Phi$ be a CNF on $n$ variables, call a set of partial assignments $P \subseteq \{0,1,*\}^n$ to be $(p,q)$-DT-hard wrt $\Phi$ if the following hold:
    \begin{itemize}
    \item \textbf{Non-emptiness: }$P \neq \phi$
    \item \textbf{No falsification:} No partial assignment $\rho \in P$ falsifies any clause of $\Phi$.
    \item \textbf{Downward closure: } For any $\rho \in P$ and any $j \in [n]$, if $\tilde{\rho}$ is obtained by setting $\rho(j) \leftarrow *,$ then $\tilde{\rho} \in P$
    \item \textbf{Hard for decision trees: }For any $\rho \in P$ which fixes at most $p$ variables, there exists a distribution $\mu_{\rho}$ on assignments to unfixed variables  such that the following holds:
    \begin{itemize}
        \item Let $T$ be a decision tree of depth $q$ querying the unfixed variables. If we sample an assignment to the unfixed variables from $\mu_{\rho}$ and run $T$ for $q$ steps, the partial assignment seen by the tree also lies in $P$ with probability $\geq 1/2$.
    \end{itemize}
    \end{itemize}

    The CNF $\Phi$ is $(p,q)$-DT hard if it admits a set of $(p,q)$-DT-hard partial assignments.

\begin{theorem} \label{thm: DT_hard}
    Let $\Phi$ be the Tseitin contradiction over an $(m, d, \lambda < 0.95) $ expander (with $m$ odd). Then, $\Phi$ is $(m/2000, m/2000)$-DT-hard -- i.e., there exists a non-empty $(m/2000, m/2000)$-DT-hard set of partial assignments for $\Phi$.
\end{theorem}

\begin{remark}
We assume throughout that $d \ge 3$ as otherwise no expander can be constructed.
\end{remark}
\subsection{Choosing the set of partial assignments} \label{subsubsec: hard_assignments}

We define a partial assignment to the edges of our graph to be \emph{valid} when it satisfies conditions given below. The set $P \subseteq \{0,1,*\}^n$ will be the set of valid partial assignments. Recall, $\text{fix}(\rho)\triangleq \{e | e \text{ has been fixed by }\rho \}$. For each $v \in V$, define $f_{\rho}(v) \triangleq 1+ \displaystyle \sum_{(v,w) \in \text{fix}(\rho)} \rho(v,w)$ (i.e., $f_{\rho}(v)$ denotes the parity of the unfixed edges incident to $v$ in order to satisfy the original degree constraint for $v$).
\newline

\begin{definition}
\label{valid_assignments}

Let $\rho \in \{ 0,1,*\}^{E}$ be a partial assignment. We say $\rho$ is \emph{valid} if it satisfies the following:   
\begin{enumerate}
    \item There exists exactly one connected component $C$ in $(V, E \setminus \text{fix}(\rho))$ such that $\displaystyle \sum_{v \in C} f_{\rho}(v) \equiv 1 \pmod{2}$. We call this component the \emph{odd} component and every other component is called \emph{even}. 
    \item The size of the odd connected component, $|C|$, is more than $m/2$.
    
\end{enumerate}
\begin{mdframed}
\begin{center}\begin{tikzpicture}[scale=.5,auto=left,every node/.style={circle,fill=blue!20}]
\node [fill=red] (n1) at (5, 0) {};
\node [fill=red] (n2) at (4.75528, 1.54508) {};
\node [fill=red] (n3) at (4.04508, 2.93893) {};
\node [fill=red] (n4) at (2.93893, 4.04508) {};
\node [fill=red] (n5) at (1.54508, 4.75528) {};
\node [fill=red] (n6) at (3.06162e-16, 5) {};
\node [fill=red] (n7) at (-1.54508, 4.75528) {};
\node [fill=red] (n8) at (-2.93893, 4.04508) {};
\node [fill=red] (n9) at (-4.04508, 2.93893) {};
\node [fill=red] (n10) at (-4.75528, 1.54508) {};
\node [fill=red] (n11) at (-5, 6.12323e-16) {};
\node [fill=red] (n12) at (-4.75528, -1.54508) {};
\node [fill=red] (n13) at (-4.04508, -2.93893) {};
\node [fill=red] (n14) at (-2.93893, -4.04508) {};
\node [fill=red] (n15) at (-1.54508, -4.75528) {};
\node [fill=red] (n16) at (-9.18485e-16, -5) {};
\node [fill=red] (n17) at (1.54508, -4.75528) {};
\node [fill=red] (n18) at (2.93893, -4.04508) {};
\node [fill=red] (n19) at (4.04508, -2.93893) {};
\node [fill=red] (n20) at (4.75528, -1.54508) {};
\node (n21) at (9.5, 0) {};
\node (n22) at (8.46353, 1.42658) {};
\node (n23) at (6.78647, 0.881678) {};
\node (n24) at (6.78647, -0.881678) {};
\node (n25) at (8.46353, -1.42658) {};
\node (n26) at (9.5, -5) {};
\node (n27) at (8.46353, -3.57342) {};
\node (n28) at (6.78647, -4.11832) {};
\node (n29) at (6.78647, -5.88168) {};
\node (n30) at (8.46353, -6.42658) {};
\node (n31) at (-6.5, 0) {};
\node (n32) at (-7.53647, 1.42658) {};
\node (n33) at (-9.21353, 0.881678) {};
\node (n34) at (-9.21353, -0.881678) {};
\node (n35) at (-7.53647, -1.42658) {};
\node (n36) at (-6.5, -5) {};
\node (n37) at (-7.53647, -3.57342) {};
\node (n38) at (-9.21353, -4.11832) {};
\node (n39) at (-9.21353, -5.88168) {};
\node (n40) at (-7.53647, -6.42658) {};
\foreach \from/\to in {n1/n2,n1/n20,n2/n3,n3/n2,n3/n4,n4/n1,n4/n2,n4/n3,n4/n5,n5/n1,n5/n2,n5/n6,n6/n1,n6/n2,n6/n3,n6/n4,n6/n5,n6/n7,n7/n1,n7/n5,n7/n8,n8/n1,n8/n6,n8/n7,n8/n9,n9/n1,n9/n3,n9/n5,n9/n6,n9/n8,n9/n10,n10/n3,n10/n4,n10/n5,n10/n6,n10/n7,n10/n11,n11/n1,n11/n3,n11/n4,n11/n6,n11/n10,n11/n12,n12/n1,n12/n2,n12/n5,n12/n6,n12/n9,n12/n11,n12/n13,n13/n1,n13/n2,n13/n5,n13/n9,n13/n10,n13/n12,n13/n14,n14/n1,n14/n4,n14/n6,n14/n9,n14/n11,n14/n15,n15/n1,n15/n2,n15/n3,n15/n4,n15/n5,n15/n6,n15/n8,n15/n10,n15/n14,n15/n16,n16/n2,n16/n5,n16/n7,n16/n9,n16/n10,n16/n11,n16/n14,n16/n15,n16/n17,n17/n1,n17/n3,n17/n5,n17/n8,n17/n9,n17/n11,n17/n12,n17/n13,n17/n16,n17/n18,n18/n1,n18/n3,n18/n4,n18/n5,n18/n7,n18/n12,n18/n13,n18/n15,n18/n16,n18/n19,n19/n4,n19/n5,n19/n7,n19/n8,n19/n10,n19/n11,n19/n12,n19/n20,n20/n2,n20/n4,n20/n6,n20/n9,n20/n11,n20/n13,n20/n15,n20/n19,n21/n22,n21/n25,n22/n23,n23/n22,n23/n24,n24/n23,n24/n25,n25/n21,n25/n22,n26/n27,n26/n30,n27/n28,n28/n26,n28/n27,n28/n29,n29/n27,n29/n28,n29/n30,n30/n26,n30/n27,n30/n28,n30/n29,n31/n32,n31/n35,n32/n33,n33/n32,n33/n34,n34/n31,n34/n35,n35/n31,n35/n32,n35/n33,n36/n37,n36/n40,n37/n36,n37/n38,n38/n37,n38/n39,n39/n36,n39/n37,n39/n38,n39/n40,n40/n37,n40/n38,n40/n39}
 \draw (\from) -- (\to );
\end{tikzpicture}\end{center} \vspace{1mm}
Example of a $\rho \in P$: in $(V, E \setminus \text{fix}(\rho))$, exactly one component has $\sum f$ odd -- and the size of that component is large
\end{mdframed}
\end{definition}
We now show that the set $P$ of all valid partial assignments is indeed $(p,q)$-DT-hard.
We begin by showing below that the first two properties for being $(p,q)$-DT-hard are satisfied.

\begin{lemma} \label{lem: first_two_cond_satisfied}
The set of partial assignments $P$ satisfies the conditions \textbf{Downward Closure} and \textbf{No falsification} for $\Phi$ (as defined in Definition \ref{def: unlifted_hardness}).    
\end{lemma}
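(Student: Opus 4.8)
The statement to prove is Lemma \ref{lem: first_two_cond_satisfied}: the set $P$ of valid partial assignments satisfies \textbf{Downward Closure} and \textbf{No falsification} for the Tseitin contradiction $\Phi$ over an expander.

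\textbf{Plan.} For \textbf{No falsification}, I would argue directly from the definition of the Tseitin clauses. A clause of $\Phi$ is falsified by a partial assignment $\rho$ only if $\rho$ fixes \emph{all} edges incident to some vertex $v$ and the fixed values violate the parity constraint at $v$, i.e. $\sum_{(v,w)\in E}\rho(v,w)\neq 1$. If $\rho$ is valid, then in the graph $(V, E\setminus S)$ every vertex $v$ whose incident edges are all fixed is an isolated vertex, hence forms its own connected component $\{v\}$, and for that component $\sum_{u\in\{v\}} f_\rho(u) = f_\rho(v) = 1 + \sum_{(v,w)\in S}\rho(v,w)$. Condition (1) of validity forces every component other than the odd one to have $f$-sum equal to $0$, and condition (2) forces the odd component to have size $>n/2>1$, so a singleton component can never be the odd one (here $n=|E|$ is large; more precisely the odd component has more than half the \emph{vertices}, so a singleton is even as long as $|V|>2$). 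Wait — I should double check: the definition says $|C|>n/2$ where $n$ is the number of edges. Regardless, for $|V|\ge 3$ the singleton $\{v\}$ cannot be the unique odd component of the required size, so $f_\rho(v)=0$, i.e. $\sum_{(v,w)\in S}\rho(v,w) = 1$, which means the parity constraint at $v$ is satisfied and the corresponding clause is not falsified. Since this holds for every fully-assigned vertex, no clause of $\Phi$ is falsified.

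For \textbf{Downward closure}, suppose $\rho\in P$ and $\tilde\rho$ is obtained by unfixing one edge $e=(a,b)$ (it suffices to handle a single unfixing and iterate). Let $S$ be the fixed set of $\rho$ and $S' = S\setminus\{e\}$ that of $\tilde\rho$. I need to show $\tilde\rho$ is still valid. The key observations: (i) unfixing $e$ either merges the two components of $(V,E\setminus S)$ containing $a$ and $b$ (if they were distinct) or leaves the component structure unchanged (if $a,b$ were already in the same component); (ii) the "defect" function transforms as $f_{\tilde\rho}(a) = f_\rho(a) + \rho(a,b)$, $f_{\tilde\rho}(b) = f_\rho(b)+\rho(a,b)$, and $f_{\tilde\rho}(v)=f_\rho(v)$ otherwise — so for any set $U$ containing both $a,b$ or neither, $\sum_{v\in U} f_{\tilde\rho}(v) = \sum_{v\in U} f_\rho(v) \pmod 2$, since the two $\rho(a,b)$ contributions cancel. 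Now case on whether $a,b$ lie in the same component of $(V,E\setminus S)$. If they do, the component structure is unchanged, every component's $f$-sum is unchanged mod $2$ (each component contains both or neither of $a,b$), so the odd component is the same set $C$ with $|C|>n/2$ — done. If $a,b$ lie in different components $C_a, C_b$ of $(V,E\setminus S)$, then in $(V,E\setminus S')$ these merge into $C_a\cup C_b$ and all other components are untouched; the $f$-sum of $C_a\cup C_b$ is the mod-2 sum of the old $f$-sums of $C_a$ and $C_b$, and every other component keeps its $f$-sum. Exactly one old component was odd; if it was some $C\notin\{C_a,C_b\}$, then $C$ remains the odd component and $|C|>n/2$ still holds. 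If the old odd component was $C_a$ (say), then $C_a\cup C_b$ has $f$-sum $1+0=1$, so it is the new odd component, and $|C_a\cup C_b|\ge |C_a| > n/2$. Either way condition (1) (uniqueness: all other components have even $f$-sum) and condition (2) are preserved. Hence $\tilde\rho\in P$, and iterating gives downward closure for unfixing any subset.

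\textbf{Expected main obstacle.} This lemma is essentially bookkeeping, so I do not anticipate a serious obstacle; the one point requiring care is the parity/component tracking in downward closure — making sure the $\rho(a,b)$ contributions cancel correctly and that "exactly one odd component" is genuinely preserved (not just "at least one"), which follows because the total $f$-sum over all of $V$ is invariant and equals $\sum_v 1 + \sum_{e\in S}(\text{counted twice}) \equiv |V| \equiv 1 \pmod 2$ since $|V|$ is odd, so there is always an odd number of odd components and the merging argument shows it stays exactly one when it started at one. I would also note explicitly at the outset that it suffices to prove downward closure for unfixing a single coordinate, since the general case follows by induction on the number of coordinates unfixed.
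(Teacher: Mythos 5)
Your proof is correct and follows essentially the same route as the paper: No falsification is derived from the fact that a fully-fixed vertex would form an odd singleton component, contradicting conditions (1)--(2) of validity, and Downward closure is a case analysis on whether unfixing an edge merges components, tracking $f$-sums via the cancellation of the two $\rho(a,b)$ contributions. Your exposition is somewhat more careful (you make the single-unfix reduction explicit, and the paper's casework is the same but phrased in terms of "the largest component" rather than the more precise "odd component"), but there is no substantive difference of approach.
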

\begin{proof} Both properties are straightforward to verify.

\begin{itemize}
    \item \textbf{No falsification: }In order to falsify any clause, $\rho$ has to fix all edges of some vertex. In that case, that vertex is an isolated connected component in $(V, E \setminus \text{fix}(\rho))$ and the total $f_{\rho}$ in that component is $1 \pmod{2}$. However, the first condition stipulates that there is exactly one connected component whose total $f_{\rho}$ is odd, and that component has size  more than $m/2$.
    \item \textbf{Downward closure: } Let $\rho \in P$, and let $\tilde{\rho}$ be obtained from $\rho$ by setting $\rho(e) = {*}$ for some $e \in E$ that was fixed by $\rho$. There are three cases.
    \begin{enumerate}
        \item $e=(a,b)$ bridges the largest component $C$ with some other component $W$. Wlog, $a \in C$ and $b \in W$. We have $\text{fix}(\tilde{\rho}) = \text{fix}(\rho) \setminus \{e\}$. 
        Let the new expanded connected component be $C'= C \cup W$. Note that $W$ forms a connected component in $(V, E \setminus \text{fix}(\rho))$ and therefore $\displaystyle \sum_{v \in W} f_{\rho}(v) = 0 \pmod{2}$. 
        For each $v \neq a,b$, $f_{\tilde{\rho}}(v)=f_{\rho}(v)$, and $f_{\tilde{\rho}}(a)=\rho(e)+f_{\rho}(a) \pmod{2}, f_{\tilde{\rho}}(b)= \rho(e)+f_{\rho}(b) \pmod{2}$. We have $|C'| \geq |C| \geq m/2$, so all we need to verify is that $\displaystyle \sum_{v \in C'} f_{\tilde{\rho}}(v) \equiv 1 \pmod{2}$. 
        
        $$ \displaystyle \sum_{v \in C \cup W} f_{\tilde{\rho}}(v) = \displaystyle \sum_{v \in C } f_{\rho}(v) + \rho(e) + \displaystyle \sum_{v \in W} f_{\rho}(v) + \rho(e) = 1 + 0 = 1 \pmod{2}$$
        \item $e = (a,b)$ bridges two components $U$ and $W$, none of which is $C$. Very similar argument as above shows that  
        $$ \displaystyle \sum_{v \in U \cup W} f_{\tilde{\rho}}(v) = \displaystyle \sum_{v \in U } f_{\rho}(v) + \rho(e) + \displaystyle \sum_{v \in W} f_{\rho}(v) + \rho(e) = 0 + 0 = 0 \pmod{2}$$
        Thus, $C$ remains the unique odd connected component.
        \item $e=(a,b)$ does not bridge two different components. It is simple to verify in this case that the parity of all components remain unchanged.
    \end{enumerate}
    
\end{itemize}
\end{proof}

Now we come to the final property: hardness for decision trees. \newline Before proceeding, we provide an informal proof overview. Readers interested in the actual proof can skip ahead to the end of this box.

\begin{mdframed}
    \textbf{Informal overview of proof of Theorem \ref{unlifted_tree_hard}} 
    \vspace{1mm}
    \hrule
    \vspace{1mm}
    We started with the CNF $\Phi$ being the Tseitin contradiction on an expander, with a parity constraint $\displaystyle \sum_{w \in N(v)} z(v,w) \equiv 1 \pmod{2}$ for each $v \in V$. Given a partial assignment $\rho$, some of the variables get fixed. Now for each $v \in V$ we have a parity constraint over its unfixed incident edges:
   $$  \displaystyle \sum_{\substack{w \in N(v) \\ (w,v) \not \in \text{fix}(\rho)}} z(v,w) \equiv f_{\rho}(v) \pmod{2}$$
   This is another Tseitin contradiction on the graph where the edges in $\text{fix}(\rho)$ are deleted (we call this graph $G_{\rho})$. For notational convenience, for a set of vertices $S \subseteq V$ call $f_{\rho}(S)= \displaystyle \sum_{v \in S} f_{\rho}(v)$. Let $C_1, C_2, \cdots , C_k$ be the connected components of $G_{\rho}$. The conditions for $\rho \in P$ are following: exactly one of $C_1, C_2, \cdots , C_k$ has $f_{\rho}(C_j) \equiv 1 \pmod{2}$, and moreover, the size of that component is at least $m/2$. \newline

   Suppose $C_1$ is the unique odd component. There is no assignment which simultaneously satisfies all parity constraints of $C_1$ - so $C_1$ must contain a falsified clause. We have to design a hard distribution $\mu_{\rho}$, Informally speaking, we must make it hard for a prover to locate a falsified clause. \newline
   
   It can be shown (Corollary \ref{all_but_one}) that there exists an assignment $z_{\text{sat}}$ which simultaneously satisfies the parity constraint of every vertex that is not in $C_1$. Since our goal is to make locating a falsified clause harder, it makes sense to not have any falsified clause in $C_2, C_3, \cdots , C_k$ in the first place: we assign those edges according to $z_{\text{sat}}$. Now comes the interesting part: how do we define a distribution on the values of the edges in $C_1$? \newline

   Our procedure for this is simple: we pick a uniformly random vertex $v$. We shall call this vertex the \textit{root}. One can show that there exists an assignment which satisfies all parity constraints other than $v$ (Corollary \ref{all_but_one}). After picking $v$, we uniformly at random pick one such assignment. As we shall see soon, a decision tree that tries to home in on a partial assignment not in $P$ is effectively trying to locate the root.
   \newline

   For simplicity let us assume the decision tree does not query any edges in $C_2, C_3, \cdots, C_k$ (there are no falsified clauses there, so the tree would just be wasting its budget). When it queries an edge in $C_1$, we delete that edge from the graph. Now, when the tree queries an edge $e$, the component $C_1$ might get split into two components: $C_{new}^{(0)}$ and $C_{new}^{(1)}$. Note that after getting the response to the query, $f$ also gets updated (Recall that $f(v)$ is the RHS of the parity constraint at $v$. When a variable gets queried and determined, it goes to the RHS.) Out of $C_{new}^{(0)}$ and $C_{new}^{(1)}$, the component that contains the root will have total $f$ odd, and the other one will have total $f$ even. Reason: for the component that does not contain the root, there is an assignment which satisfies all its parity constraints, so its total $f$ is even. The total $f$ of the other component cannot be even, as otherwise there would be an assignment that satisfies the parity constraint at each vertex. \newline

   So at any point of execution of the decision tree, the graph (with all the queried edges deleted) will have exactly one component with $f$ odd - and that will be the component containing the root. If the tree wants to see a partial assignment that does not lie in $P$, it needs to ensure the unique odd component has size $\leq m/2$. In other words, it needs to narrow down the possibility of the root to at most $m/2$ vertices. \newline

   Given this, our decision to pick the root uniformly at random looks wise - we want there to be as much uncertainty about its location as possible. But what about the situation when the tree has queried some edges, and therefore has some idea about the location of the root? We need to analyze exactly what information the tree has about the location of the root. For one, the tree knows that the root must lie in the current odd component. We shall show that this is all that the tree knows about the location of the root: when $z$ is sampled from $\mu$ conditioned on $[\mathrm{current \ information \ obtained \ by \ the \ tree}]$, the conditional distribution of the root is uniform on the odd component (Lemma \ref{root_is_hidden_unlifted}). \newline
   
   Now essentially the game looks as follows: at any point of time, the tree has seen some partial assignment - and it knows that the root lies in the unique odd component. As the game proceeds and more edges are queried (and then deleted), the unique odd component keeps shrinking. The goal of the decision tree is to ensure the size of the odd component goes below $m/2$. \newline

    To complete the proof, we need one last ingredient about expander graphs. Using the isoperimetric profile of the graph, one can show that if $\leq m/1000$ edges are deleted from $G$, one of its components must have large size: $\geq m \left( 1 - \dfrac{1}{2d} \right)$. So if the decision tree is to put the root in a small component, it must actually put the root in a \textit{very small} component. \newline

    Now consider a point of time when the large component shrinks. Before the bridge is queried, the conditional distribution of the root is uniform on this entire component. \newline
    \begin{tikzpicture}[scale=.65,auto=left,every node/.style={circle,fill=blue!20}]
\node (n1) at (5, 0) {};
\node (n2) at (4.79746, 1.40866) {};
\node (n3) at (4.20627, 2.7032) {};
\node (n4) at (3.2743, 3.77875) {};
\node (n5) at (2.07708, 4.54816) {};
\node (n6) at (0.711574, 4.94911) {};
\node (n7) at (-0.711574, 4.94911) {};
\node (n8) at (-2.07708, 4.54816) {};
\node (n9) at (-3.2743, 3.77875) {};
\node (n10) at (-4.20627, 2.7032) {};
\node (n11) at (-4.79746, 1.40866) {};
\node (n12) at (-5, 2.83277e-15) {};
\node (n13) at (-4.79746, -1.40866) {};
\node (n14) at (-4.20627, -2.7032) {};
\node (n15) at (-3.2743, -3.77875) {};
\node (n16) at (-2.07708, -4.54816) {};
\node (n17) at (-0.711574, -4.94911) {};
\node (n18) at (0.711574, -4.94911) {};
\node (n19) at (2.07708, -4.54816) {};
\node (n20) at (3.2743, -3.77875) {};
\node (n21) at (4.20627, -2.7032) {};
\node (n22) at (4.79746, -1.40866) {};
\node (n23) at (13, 0) {};
\node (n24) at (12.4271, 1.76336) {};
\node (n25) at (10.9271, 2.85317) {};
\node (n26) at (9.07295, 2.85317) {};
\node (n27) at (7.57295, 1.76336) {};
\node (n28) at (7, 3.67394e-16) {};
\node (n29) at (7.57295, -1.76336) {};
\node (n30) at (9.07295, -2.85317) {};
\node (n31) at (10.9271, -2.85317) {};
\node (n32) at (12.4271, -1.76336) {};
\foreach \from/\to in {n1/n2,n1/n22,n2/n3,n3/n2,n3/n4,n4/n1,n4/n2,n4/n3,n4/n5,n5/n1,n5/n2,n5/n6,n6/n1,n6/n2,n6/n3,n6/n4,n6/n5,n6/n7,n7/n1,n7/n5,n7/n8,n8/n1,n8/n6,n8/n7,n8/n9,n9/n1,n9/n3,n9/n5,n9/n6,n9/n8,n9/n10,n10/n3,n10/n4,n10/n5,n10/n6,n10/n7,n10/n11,n11/n1,n11/n3,n11/n4,n11/n6,n11/n10,n11/n12,n12/n1,n12/n2,n12/n5,n12/n6,n12/n9,n12/n11,n12/n13,n13/n1,n13/n2,n13/n5,n13/n9,n13/n10,n13/n12,n13/n14,n14/n1,n14/n4,n14/n6,n14/n9,n14/n11,n14/n15,n15/n1,n15/n2,n15/n3,n15/n4,n15/n5,n15/n6,n15/n8,n15/n10,n15/n14,n15/n16,n16/n2,n16/n5,n16/n7,n16/n9,n16/n10,n16/n11,n16/n14,n16/n15,n16/n17,n17/n1,n17/n3,n17/n5,n17/n8,n17/n9,n17/n11,n17/n12,n17/n13,n17/n16,n17/n18,n18/n1,n18/n3,n18/n4,n18/n5,n18/n7,n18/n12,n18/n13,n18/n15,n18/n16,n18/n19,n19/n4,n19/n5,n19/n7,n19/n8,n19/n10,n19/n11,n19/n12,n19/n20,n20/n2,n20/n4,n20/n6,n20/n9,n20/n11,n20/n13,n20/n15,n20/n19,n20/n21,n21/n3,n21/n6,n21/n7,n21/n8,n21/n12,n21/n13,n21/n15,n21/n16,n21/n17,n21/n18,n21/n19,n21/n20,n21/n22,n22/n3,n22/n4,n22/n7,n22/n8,n22/n9,n22/n11,n22/n13,n22/n14,n22/n15,n22/n16,n22/n18,n22/n19,n22/n20,n22/n21,n23/n24,n23/n32,n24/n23,n24/n25,n25/n24,n25/n26,n26/n27,n27/n23,n27/n26,n27/n28,n28/n23,n28/n24,n28/n25,n28/n26,n28/n29,n29/n27,n29/n30,n30/n23,n30/n25,n30/n26,n30/n29,n30/n31,n31/n25,n31/n27,n31/n32,n32/n23,n32/n31}
 \draw (\from) -- (\to );

 \draw [dashed] (n1) -- (n28);
\end{tikzpicture} \newline
    
    Therefore, after the bridge is queried and it is revealed in which component the root lies, with overwhelmingly high probability the root is going to end up in the larger component. With high probability, this is going to happen every time. To formalize this intuition, we do an amortized induction where we count the amount by which the largest component has shrunk instead of total number of queries made.

\end{mdframed}

First, we prove an easy but crucial lemma. This lemma is standard and has appeared in the literature before (for example in \cite{U87}). \newline

\begin{lemma}
    \label{spanning_tree}
    Let $G= (V,E)$ be a connected undirected graph and let $f: V \rightarrow \F$ a parity constraint for each vertex. Let $T \subseteq E$ be a spanning tree, and let $\tilde{z} \in \F^{E \setminus T}$ be an assignment to the edges not in $T$. Let $v \in V$ be a vertex. There exists a unique assignment $z\in \F^{E}$ such that $z$ extends $\tilde{z}$ and $\displaystyle \sum_{w \in N(u)} z(u,w) = f(u)$ for all $u \neq v$. 
\end{lemma}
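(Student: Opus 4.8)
The plan is to prove this by a standard ``peeling from the leaves'' argument on the spanning tree $T$, rooted at the distinguished vertex $v$. First I would root $T$ at $v$, and for each vertex $u \neq v$ let $e_u \in T$ denote the unique tree edge joining $u$ to its parent $p(u)$. The assignment $u \mapsto e_u$ is a bijection between $V \setminus \{v\}$ and the edge set $T$, since $|T| = |V| - 1$. This bijection is the key structural observation: the constraint indexed by $u$ will be used to pin down the value of $\tilde h$ on $e_u$.

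Next I would process the vertices of $V \setminus \{v\}$ in order of non-increasing depth in the rooted tree. When we reach a vertex $u$, consider the equation $\sum_{w \in N(u)} \tilde h(u,w) = g(u)$. Every edge incident to $u$ other than $e_u$ falls into one of two categories: either it lies in $E \setminus T$, in which case $\tilde h$ on it is prescribed by $h$; or it is a tree edge, in which case it must be $e_w$ for some child $w$ of $u$, and $w$ has strictly larger depth than $u$, so $\tilde h(e_w)$ has already been fixed in an earlier step. Hence the equation at $u$ determines $\tilde h(e_u)$ uniquely. Carrying this out for all $u \neq v$ both \emph{constructs} an assignment $\tilde h$ extending $h$ and satisfying all the required equations (existence) and shows that \emph{any} valid $\tilde h$ is forced to agree with the one we built, since the same induction on depth shows each $\tilde h(e_u)$ is determined by the equations (uniqueness).

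If a more algebraic phrasing is preferred, an equivalent route is to consider the $\F_2$-linear map $\Phi \colon \F_2^{E} \to \F_2^{V \setminus \{v\}}$ sending $\tilde h$ to the tuple of vertex sums $\big(\sum_{w \in N(u)} \tilde h(u,w)\big)_{u \neq v}$, and to show that its restriction $\Phi_T$ to the coordinate subspace $\F_2^{T}$ (non-tree edges set to $0$) is a bijection; then for arbitrary $h$ one solves $\Phi_T(h_T) = g - \Phi(h')$ for the tree-edge part, where $h'$ is $h$ extended by zeros. Since $\dim \F_2^T = |V| - 1 = \dim \F_2^{V \setminus \{v\}}$, bijectivity of $\Phi_T$ reduces to injectivity: if $\tilde h \in \F_2^T$ has $\Phi_T(\tilde h) = 0$, its support is a subforest of $T$ in which every vertex except possibly $v$ has even degree, but a nonempty forest has at least two leaves (degree-one vertices), at most one of which can be $v$ — contradiction — so $\tilde h = 0$.

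I do not expect a genuine obstacle here; this is a routine fact. The one point to state carefully is the bookkeeping in the inductive step: one must verify that when vertex $u$ is processed, the only not-yet-assigned edge at $u$ is $e_u$, which is exactly why the leaves-first (decreasing-depth) order is used and why $e_u$ ranges bijectively over $T$.
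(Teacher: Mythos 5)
Your primary argument — root $T$ at $v$, process vertices leaves-first, and use the constraint at each $u \neq v$ to determine the parent edge $e_u$, together with the bijection $u \mapsto e_u$ — is precisely the paper's proof, and the existence/uniqueness argument is handled the same way. The algebraic reformulation via injectivity of $\Phi_T$ is a nice optional extra but not a different route.
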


\begin{proof}
    We construct $z$ as follows: convert $T$ to a rooted tree by making $v$ the root and then process the vertices bottom up, starting at the leaves of $T$. When vertex $u$ is being processed, all edges in the subtree of $u$ have been assigned. Then, exactly one edge incident to $u$ is kept unfixed (the edge $(u,\text{parent}[u])$ - assign it so that $\displaystyle \sum_{(u,w) \in E} z(u,w)= f(u)$ is satisfied. \newline

    It is also clear that this is the unique assignment to the edges in $T$ which satisfies all these constraints and is consistent with the assignment to the edges of $E\setminus T$. This is because once the edges in the subtree of $u$ has been fixed, there is a unique choice of $z(u, \text{parent}[u])$ that satisfies the parity constraint of $u$.\newline
\end{proof}

We call the procedure in the proof of Lemma \ref{spanning_tree} $\text{FIX}(G, T, v, \tilde{z}, f)$.

\begin{mdframed}
\vspace{1mm}
$\text{FIX}(G, T, v, \tilde{z}, f)$ \vspace{1mm}
\hrule \vspace{1mm}
\textbf{Input:}
\begin{itemize}
    \item $G= (V,E)$: a connected graph
    \item $T$: a spanning tree of $G$
    \item $v$: a vertex in $G$
    \item $\tilde{z} \in \F^{E \setminus T}$: an assignment to the edges not in $T$
    \item $f: V \rightarrow \F$: a parity constraint for each vertex
    
\end{itemize}
\hrule
\vspace{1mm}
\textbf{Output}:
\begin{itemize}
    \item The unique assignment $z \in \F^{E}$ such that $z$ extends $\tilde{z}$ and $\displaystyle \sum_{w \in N(u)} z(u,w) = f(u)$ for all $u \neq v$
\end{itemize}
\hrule
\vspace{1mm}
\textbf{Algorithm: }
\begin{itemize}
    \item Set $z(e)= \tilde{z}(e) \forall e \not \in T$
    \item Root $T$ at $v$. Let $\mathrm{children}(u)$ be the set of children of $v$.
    \item Execute $\text{RECURSIVE-FIX}(v)$
    \item Return $z$
\end{itemize}

Here $\text{RECURSIVE-FIX}(u)$ is implemented as follows (with $T, G, z, f$ being global variables and the vertex $u$ being passed as a parameter): \newline

$\text{RECURSIVE-FIX}(u)$:
\begin{itemize}
    \item For $w \in \mathrm{children}(u)$:
    \begin{itemize}
        \item Execute $\text{RECURSIVE-FIX}(w)$.
        \item Set $z(u,w)= f(w) + \displaystyle \sum_{u' \in N(w) \setminus \{u \}} z(u', w)$
    \end{itemize}
\end{itemize}

\end{mdframed}

We make the following observation now.

\begin{observation} 
\label{obs:fix_works_good} Let $G$ be a connected graph, $T$ a spanning tree of $G$, $v \in V$ any vertex, $\tilde{z} \in \F^{E(G) \setminus E(T)}$ an assignment to the non-tree edges, and $f:\F^V \rightarrow \F$ a parity constraint for each vertex.
\begin{itemize}
    \item If $\displaystyle \sum_{u \in V} f(u) = 0 \pmod{2}$ and $z= \text{FIX}(G, T, v, \tilde{z}, f)$, then $\displaystyle \sum_{w \in N(v)} z(v,w) = f(v) \pmod{2}$
    \item If $\displaystyle \sum_{u \in V} f(u) = 1 \pmod{2}$ and $z= \text{FIX}(G, T, v, \tilde{z}, f),$ then $\displaystyle \sum_{w \in N(v)} z(v,w)\neq f(v) \pmod{2}$
\end{itemize}

\end{observation}

\begin{corollary}
\label{all_but_one}
    Let $G$ be a connected undirected graph and let $f: V \rightarrow \F$ be any map.
    \begin{enumerate}
        \item If $\displaystyle \sum_{v \in V} f(v) = 1 \pmod{2}$, then for any $v \in V$ there exists an assignment $z \in \F^E$ such that $\displaystyle \sum_{w \in N(u)} z(u,w) = f(u) \pmod{2}$ for each $u \neq v$, and $\displaystyle \sum_{w \in N(v)} z(v,w) \neq f(v) \pmod{2}$.
        \item If $\displaystyle \sum_{v \in V} f(v)= 0 \pmod{2}$, there exists an assignment $z \in \F^E$ such that $\displaystyle \sum_{w \in N(u)} z(u,w) = f(u) \pmod{2}$ is satisfied for all $u \in V$.
    \end{enumerate}
    
\end{corollary}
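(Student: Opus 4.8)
The plan is to derive this corollary directly from Lemma~\ref{spanning_tree} together with the standard handshake/parity observation that every edge is incident to exactly two vertices. I would proceed as follows.

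First, fix an arbitrary spanning tree $T \subseteq E$ of the connected graph $G$, and fix an arbitrary assignment $h \in \F^{E \setminus T}$ to the non-tree edges (for concreteness, take $h \equiv 0$). Given the target vector $g \in \F^V$ and the distinguished vertex $v$, apply Lemma~\ref{spanning_tree} to obtain the unique $\tilde{h} \in \F^E$ extending $h$ with $\displaystyle \sum_{w \in N(u)} \tilde{h}(u,w) = g(u) \pmod 2$ for every $u \neq v$. This already secures all the constraints at vertices other than $v$ in both parts of the corollary, so the only remaining task is to compute the parity of $\displaystyle \sum_{w \in N(v)} \tilde{h}(v,w)$ that this procedure forces at $v$.

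The key step is the global parity count: summing the vertex-sums over all of $V$, each edge $e=(a,b)$ contributes $\tilde{h}(e)$ to both $\sum_{w \in N(a)} \tilde{h}(a,w)$ and $\sum_{w \in N(b)} \tilde{h}(b,w)$, hence
\[
\sum_{u \in V} \sum_{w \in N(u)} \tilde{h}(u,w) \;=\; 2 \sum_{e \in E} \tilde{h}(e) \;\equiv\; 0 \pmod 2 .
\]
On the other hand, substituting $\sum_{w \in N(u)} \tilde{h}(u,w) = g(u)$ for all $u \neq v$ gives
\[
\sum_{w \in N(v)} \tilde{h}(v,w) \;\equiv\; \sum_{u \neq v} g(u) \;\equiv\; \sum_{u \in V} g(u) \;+\; g(v) \pmod 2 .
\]
Now split into the two cases: if $\displaystyle \sum_{u \in V} g(u) = 1 \pmod 2$, then $\sum_{w \in N(v)} \tilde{h}(v,w) \equiv g(v) + 1 \neq g(v)$, which is exactly the conclusion of part~(1); if $\displaystyle \sum_{u \in V} g(u) = 0 \pmod 2$, then $\sum_{w \in N(v)} \tilde{h}(v,w) \equiv g(v)$, so the constraint holds at $v$ as well, giving part~(2).

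There is no real obstacle here — this is a short deduction from the lemma already proved. The only point requiring a sentence of care is making the handshake argument precise: stating that the double-counting identity holds in $\F$ regardless of which assignment $\tilde h$ was produced, and noting that in part~(2) the distinguished vertex $v$ can be chosen arbitrarily since the conclusion no longer singles it out. I would also remark that part~(2) is equivalent to the well-known fact that, over $\F$, the incidence map $\F^E \to \F^V$ has image equal to the even-weight subspace, with part~(1) being the complementary statement about odd target vectors; the spanning-tree construction simply makes the preimage explicit.
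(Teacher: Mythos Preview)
Your proposal is correct and follows essentially the same approach as the paper: apply Lemma~\ref{spanning_tree} to satisfy all constraints at $u \neq v$, then use the handshake identity $\sum_{u \in V}\sum_{w \in N(u)} \tilde h(u,w) \equiv 0 \pmod 2$ to force the parity at $v$. The paper states this observation in the paragraph immediately preceding the corollary rather than in a separate proof, but your explicit double-counting computation is exactly the content of that remark.
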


Now we define the following hard distribution for each $\rho \in P$, when $|\rho| \leq \dfrac{m}{2000}$.\newline

\subsubsection{The hard distrbution}
Our goal in this subsection is to define for each $\rho \in P$ a distribution $\mu = \mu_{\rho}$ on the unfixed variables so that the requirement in Definition \ref{def: unlifted_hardness} for DT-Hardness is satisfied.
\begin{definition}
\label{hard_distribution_unlifted}  
Let $\rho \in P$ be a valid partial assignment which fixes at most $\dfrac{m}{1000}$ edges. Define the following:

\begin{enumerate}
    \item Define $f_{\rho}$ as before (i.e. for each $v\in V$, $f_{\rho}(v) = 1 + \displaystyle \sum_{(u,v) \in \text{fix}(\rho)} \rho(u,v)$. After the edges in $\text{fix}(\rho)$ are fixed according to $\rho$, $f_{\rho}(v)$ is the RHS of the modified parity constraint of $v$. 
    \item Let $C_{\rho}$ be the unique connected component in $G_{\rho}=(V, E \setminus \text{fix}(\rho))$ whose total $f_{\rho}$ is odd.

\end{enumerate}
\end{definition}

\textbf{Notation:} For a subset $A \subseteq V$ we denote by $E(A)$ the set of edges in $G$ both of whose endpoints lie in $A$. \newline

Now we describe the procedure of sampling from $\mu$. Note that the values of edges in $\text{fix}(\rho)$ are fixed; we have to define a distribution on variables in $\text{free}(\rho)$. We do this as follows. 
\newline


\fbox{
\begin{minipage}{40em}
\hspace{1mm}

\textbf{DTFooling}   \hspace{1mm}

\textbf{Input: }\begin{itemize}
    \item Graph $G=(V,E)$
    \item A valid partial assignment $\rho \in \{0,1,*\}^{E},$ $\rho \in P$
\end{itemize}

 \hspace{1mm}

\textbf{Output: }A sample $z \in \{0,1\}^E$ from $\mu_{\rho}$ \newline 
 \hspace{1mm}

\textbf{Sampling procedure}
\begin{itemize}
    \item \textbf{Assigning the edges of $C_{\rho}$}: Fix an arbitrary spanning tree $T$ of $C_{\rho}$ formed by edges from $\text{free}(\rho)$. Uniformly at random pick a vertex $v \in C_{\rho}$. Let $\tilde{w} \in \F^{E(C_{\rho}) \setminus (\text{fix}(\rho) \cup E(T))}$ be a uniformly random assignment to the unfixed edges in $C_{\rho}$ not in $T$. Let $\tilde{z} \in \F^{E(C_{\rho}) \setminus E(T)}$ be the following assignment to the non-tree edges in $C_{\rho}$:
    \begin{align*}
        \tilde{z}(e)= \begin{cases}
            \rho (e) & \text{ if }e \in E(C_{\rho}) \cap \text{fix}(\rho) \\
            \tilde{w}(e) & \text{ if }e \in (E(C_{\rho}) \cap \text{free}(\rho)) \setminus E(T)
        \end{cases}
    \end{align*}
    
    Assign the edges in $C_{\rho}$ according to $\text{FIX}(C_{\rho}, T, v, \tilde{z}, f_{\rho}).$ 
    
    \item \textbf{Assigning the edges for every other component $C'$:} Pick an arbitrary spanning tree $T'$ of $C'$ formed by edges from $\text{free}(\rho)$. Let$\tilde{w}' \in \F^{E(C') \setminus \{\text{fix}(\rho) \cup E(T') \}}  $ be a uniformly random assignment to the free non-tree edges. Let $\tilde{z}' \in \F^{E(C') \setminus E(T')}$ be the following assignment to the non-tree edges in $C'$:
    \begin{align*}
        \tilde{z}'(e) = \begin{cases}
            \rho(e) & \text{ if }e \in E(C') \cap \text{fix}(\rho) \\
            \tilde{w}'(e) & \text{ if }e \in (E(C') \cap \text{free}(\rho)) \setminus E(T')
        \end{cases}
    \end{align*}
    
    Pick an arbitrary vertex $v'$. Assign the edges of $C'$ according to $\text{FIX}(C', T', v', \tilde{z}', f_{\rho})$. 
\end{itemize}
\end{minipage} 
}

Let $A_{\rho,v}$ denote the set of all $z \in \mathbb{F}_2^{E}$ that are consistent with $\rho$ and satisfy the following:
\begin{enumerate}
        \item For all $u \neq v, \displaystyle \sum_{w \in N(u)} z(u,w)= f_{\rho}(u)$.
        \item $\displaystyle \sum_{w \in N(v)} z(v,w) = 1 + f_{\rho}(v)$.
\end{enumerate}
We make the following remark now.
\begin{remark}
    \label{uniform_sampling}
    Let $\rho$ be any valid (partial) assignment to edges of $G$. Then,
    \begin{enumerate}
        \item  $A_{\rho,v}$ is an affine space in $\mathbb{F}_2^E$, for each $v \in C_{\rho}$.
        \item \textit{DTFooling} picks a random $v \in C_{\rho}$ and then samples a random point in $A_{\rho,v}$.
    \end{enumerate}
    
\end{remark}

For any $z \in \text{supp}(\mu)$, the parity constraint is violated for exactly one vertex (the vertex which was chosen as the root of the spanning tree of the odd connected component). Call this vertex $\text{\emph{root}}(z)$.

Before proving the hardness, we note down some properties of the distribution. \newline

\subsubsection{Conditional Distribution of the Root is Uniform}
We prove a useful property of the distribution sampled by \DTFooler, given a valid partial assignment $\rho$. The idea is when a decision tree queries bits from an assignment $z$ to the edges sampled according to $\mu_\rho$, the graph $G_\rho$ starts splitting into further smaller components. The decision tree knows at every instant in which component $\text{root}(z)$ lies,as there is always a unique odd component.  The lemma below ensures that conditioned on what the decision tree has observed so far, the distribution of  $\text{root}(z)$ remains uniform over all vertices in the odd component. \newline

In the following let $\text{free}(\rho) \subseteq E$ be the set of edges free in $\rho$. For convenience of the reader, we re-state the definition of  $f_{\rho}$ in Definition \ref{valid_assignments} here.
\begin{itemize}
    \item $f_{\rho} \in \F^V, f_{\rho}(v)= 1 + \displaystyle \sum_{(v,w) \in\text{fix}(\rho)} \rho(v,w)$
\end{itemize}

\begin{lemma}
\label{root_is_hidden_unlifted}    
Let $\rho$ be a valid partial assignment and $\mu = \mu_{\rho}$ be the distribution in Definition \ref{hard_distribution_unlifted}. Let $S \subseteq \text{free}(\rho)$ be a subset of free edges and let $\alpha \in \F^{S}$ be an assignment to $S$. Define $f_{\alpha}(v)= f_{\rho}(v) + \displaystyle \sum_{(v,w) \in S} \alpha(v,w)$. Suppose the connected components of $(V, \text{free}(\rho) \setminus S)$ are $C_1, C_2, \cdots , C_k$, where $\displaystyle \sum_{v \in C_1} f_{\alpha}(v)=1$, and for all $j \neq 1,$ $\displaystyle \sum_{v \in C_j} f_{\alpha}(v)=0$. Then, the following is true:
\begin{itemize}
\item The distribution of $\text{root}(z)$ as $z$ is sampled from $\mu= \mu_{\rho}$ conditioned on $z|_{S}= \alpha$, is uniform on $C_1$. \footnote{In order for the statement to be valid, we need that some assignment in $\text{supp}(\mu)$ is consistent with $\alpha$ (otherwise we are conditioning on a probability 0 event). This follows from Corollary \ref{all_but_one} applied to each component of $(G, \text{free}(\rho) \setminus S)$ separately.}
\end{itemize}
\end{lemma}

This lemma is essentially saying the following: suppose the input is sampled from $\mu$ and currently, the tree has seen a partial assignment. After fixing the edges seen by the tree, the graph splits into multiple components and there is a unique odd component $C_1$. Then, when $z$ is sampled from $\mu \text{ conditioned on being consistent with current information obtained}$, the distribution of $\text{root}(z)$ is uniform on $C_1$.

\begin{proof}
    Conditioned on $z|_S = \alpha$, the root cannot lie in any of $C_2, C_3, \cdots , C_k$. (Reason: after we choose the root, only the parity constraint at the root is violated; other parity constraints are satisfied. But after fixing $S$ to $\alpha$, it is not possible to satisfy all parity constraints of $C_1$ simultaneously as the sum of the modified parity constraints of $C_1$ is odd.) \newline

    So we need to show that for all $u \in C_1,$ $\text{Pr}_{\mu} [\text{root}(z)=\ u | z_S = \alpha]$ is a non-zero quantity independent of $u$. Note that $C_1 \subseteq C_{\rho}$ (recall that $C_{\rho}$ is the unique odd component of $G_{\rho}$). Since for all $u \in C_1$, $\text{Pr}_{\mu} [\text{root}(z)=u] =\dfrac{1}{|C_{\rho}|}$ is a non-zero quantity independent of $u$, by Bayes' rule it suffices to show that for all $u \in C_1$, $\text{Pr}_{\mu} [z_{S} = \alpha | \text{root}(z)=u]$ is a non-zero quantity independent of $u$. \newline

    Let $M \in \F^{V \times E}$ be the edge-vertex incidence matrix of $G$. Let $\gamma_v \in \F^{V}$ be the following vector:
    $$ \gamma_v(u)= \begin{cases}
    f_{\rho}(u) & \text{ if }{ u \neq v } \\
    1+f_{\rho}(u) & \text{ otherwise}
        \end{cases}
     $$
     Once $v$ is chosen as the root, the sampling procedure samples a uniformly random element of the affine space $\{z | Mz= \gamma_v\}$. \newline

     Let $S= \{r_1, r_2, \cdots , r_{|S|}\}$. Let $N \in F^{S \times E}$ be the matrix whose $j$-th row is the standard basis vector at coordinate $r_j$. Once $u$ is chosen as the root, $z|_S=\alpha$ if and only if $z$ satisfies the following equation:
   $$ \begin{bmatrix}
    M \\
    \hline
    N 
\end{bmatrix} z = \begin{bmatrix} \gamma_u \\ \hline \alpha\end{bmatrix}$$
Let
$$J = \begin{bmatrix} M \\ \hline N\end{bmatrix}
$$
Conditioned on satisfying $Mz=\gamma_u$, the probability of satisfying $z|_S=\alpha$ is either $2^{\text{rank}(M) - \text{rank}(J)}$ (if there is a solution) or 0 (if there is some inconsistency in the right hand sides of the system of equations). (Indeed, for $v \not \in C_1$ there is an inconsistency in the right hand sides - as noted above.) \newline

Now for all $v \in C_1$, we shall show there is a $z$ such that $\text{root}(z)= v$ and $z|_{S}= \alpha$ -- this will show that for $v \in C_1$, $\text{Pr} [z|_S= \alpha | \text{root}(z)=v]$ is a non-zero quantity independent of $v$. To show this, we construct an assignment as follows:
\begin{itemize}
    \item For each $C_j$, choose a spanning tree $Q_j$ disjoint from $S \cup \text{free}(\rho)$. 
    \item Let $\tilde{z}$ be any assignment to the edges not in $Q_1, Q_2, \cdots, Q_k$ such that $\tilde{z}$ agrees with $\alpha$ on $S$.
     \item Assign the values of the edges in $C_1$ according to $\text{FIX}(C_1, Q_1, v, \tilde{z}, f_{\alpha})$
    \item For $j>1$, pick an arbitrary vertex $v_j \in C_j$ and assign the edges of $Q_j$ according to $\text{FIX}(C_j, T_j, v_j, \tilde{z}, f_{\alpha})$
\end{itemize}
This assignment satisfies $z|_{S}= \alpha$ by construction; moreover, it also satisfies $ Mz= \gamma_v$ because of Lemma \ref{spanning_tree} and Observation \ref{obs:fix_works_good}. This completes the proof.
\end{proof}

\subsubsection{Proving Hardness}

Now we prove $(p,q)-$hardness for ordinary decision trees. \newline

\textbf{Convention: }Henceforth, given a partial assignment $\rho$ and a decision tree $T$ making queries to the variables in $\text{free}(\rho)$, we define \textit{the partial assignment seen by the tree} to be the partial assignment formed by fixing the edges queried by the tree and $\rho$. \newline

\begin{theorem}
\label{unlifted_tree_hard}
Let $G= (V,E)$ be an $(m, d, \lambda < 0.95)$-spectral expander with $|V|=m$ being odd. Let $P$ be the set of partial assignments defined as in Definition \ref{valid_assignments}. Let $\rho \in P$ be a valid partial assignment with $|\rho|=p \leq \dfrac{m}{2000}$. Let $\mu$ be the distribution defined as in Definition \ref{hard_distribution_unlifted} \footnote{$\mu$ is a distribution on $\F^n$ such that every $z \in \text{supp}(\mu)$ is consistent with $\rho$}. Let $T$ be any decision tree making at most $q \leq \dfrac{m}{2000}$ queries. Sample $z \leftarrow \mu$. Then, with probability $\geq 1/2$, the partial assignment seen by the tree after $q = m/2000$ queries also lies in $P$.
\end{theorem}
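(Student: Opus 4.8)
The plan is to show that with probability at least $1/3$ over $z\leftarrow\mu_\rho$, the partial assignment $\sigma$ seen by $T$ after $q$ queries (which fixes $S_\rho\cup Q$, where $Q$ is the set of at most $q$ edges queried along the path of $z$) is valid in the sense of Definition~\ref{valid_assignments}. Since $\sigma$ fixes at most $p+q\le n/1000$ edges and every $z\in\mathrm{supp}(\mu_\rho)$ is consistent with $\rho$, the \emph{No falsification} analysis from Lemma~\ref{lem: first_two_cond_satisfied} will not be the issue; the real content is verifying the two numbered conditions of validity for $\sigma$, namely that $(V,E\setminus(S_\rho\cup Q))$ has exactly one odd component (w.r.t.\ the modified charges $f_\sigma$) and that this odd component has size $>n/2$.

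First I would observe that the odd component is \emph{always} unique and moreover always contains $\mathrm{root}(z)$: once $z$ is sampled, the only violated parity constraint is at $\mathrm{root}(z)$, so in the graph $(V,E\setminus(S_\rho\cup Q))$ the component containing $\mathrm{root}(z)$ has odd total charge and every other component has even total charge (each edge of $Q$ is fixed to its value in $z$, so charges behave exactly as the induced subassignment of $z$). Hence condition~(1) of validity is automatic; the whole theorem reduces to the quantitative statement that, with probability $\ge 1/3$, the component of $(V,E\setminus(S_\rho\cup Q))$ containing $\mathrm{root}(z)$ has more than $n/2$ vertices. To control this I would use expansion: removing $S_\rho\cup Q$, a set of at most $n/1000$ edges, from the $(|V|,d,\lambda)$-expander cannot shatter the graph too badly --- by Lemma~\ref{cheeger}, if a component has size $s\le n/2$ then it has at least $\tfrac{d}{5}s$ boundary edges, all of which must lie in the removed set; summing over all ``small'' components (those of size $\le n/2$) and using that the removed set has size $\le n/1000 \ll \tfrac{d}{5}\cdot n$, one gets that the total number of vertices lying in small components is at most $O(n/1000)/(d/5)\cdot$ (something) --- more carefully, the number of boundary edges is at least $\tfrac{d}{5}$ times the number of vertices in small components (each such vertex's component contributes its proportional share), so the number of vertices in small components is at most $\tfrac{5}{d}\cdot|S_\rho\cup Q|\cdot(\text{const}) \le n/4$, say, for $\lambda$ small and $d$ a suitable constant. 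Thus \emph{at least} $3n/4$ vertices lie in ``large'' components; but there can be at most one large component (two disjoint sets of size $>n/2$ cannot coexist), so there is a \emph{unique} component of size $>n/2$, and it contains at least $3n/4$ vertices.

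The remaining point is: does $\mathrm{root}(z)$ land in that large component with probability $\ge 1/3$? Here is where Lemma~\ref{root_is_hidden_unlifted} enters. I would run the argument conditionally on the transcript of $T$: view $T$'s queries as revealing an assignment $\alpha$ to a set $S$ of free edges of $\rho$ in sequence, and observe that at every node of $T$ the decision tree ``knows'' only which current component is odd, while by Lemma~\ref{root_is_hidden_unlifted}(2) the conditional distribution of $\mathrm{root}(z)$, given the transcript $\alpha$ observed so far, is uniform over the current odd component $C_1(\alpha)$. After all $q$ queries, $\mathrm{root}(z)$ is uniform over the final odd component of $(V,E_\rho\setminus S)$; intersecting with the (at most $n/1000$ further) edges of $S_\rho$ that were already fixed does not change this reasoning since those were incorporated into $f_\rho$ from the start. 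By the expansion count above applied to the removed set $S_\rho\cup S$ (size $\le n/1000$), the final odd component has size $\ge 3n/4$, so once we condition on the transcript, $\mathrm{root}(z)$ is uniform on a set of size $\ge 3n/4$; the large component (size $>n/2$) contains all but at most $n/4$ of these vertices, so $\mathrm{root}(z)$ lies in the large component with conditional probability $\ge (3n/4 - n/4)/(3n/4) = 2/3 \ge 1/3$. Averaging over transcripts gives the theorem.

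\textbf{Main obstacle.} The delicate step is the expansion bookkeeping: I must show that deleting an $o(n)$-sized edge set from the expander leaves all but a small constant fraction of vertices inside one giant component, \emph{and} that this giant component has size $>n/2$ (so that it is recognized as ``the odd component'' by the validity definition and is unique). The constants $1/2000$, $1/1000$, $1/3000$ and the choice $\lambda<1/3000$ versus $\lambda<1/1000$ have to be threaded carefully through Lemma~\ref{cheeger} so that the fraction of vertices in small components stays safely below the slack $3n/4 - n/2 = n/4$; getting a clean $2/3$ (hence $\ge 1/3$) out at the end, rather than something that degrades, is the part that needs care. A secondary subtlety is making sure Lemma~\ref{root_is_hidden_unlifted} is applied to exactly the right conditioning --- the transcript of $T$ is adaptive, so one should argue by induction on the depth of $T$ that uniformity of the root over the current odd component is preserved by each additional query, which is precisely what that lemma provides.
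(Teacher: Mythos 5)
Your proposal correctly identifies the two key tools — the isoperimetric estimate (Lemma \ref{cheeger}/\ref{isoperimetric}) and the conditional uniformity of the root (Lemma \ref{root_is_hidden_unlifted}) — but the way you combine them has a genuine gap, and the final probability calculation does not typecheck.

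The central problem is the assertion ``the final odd component has size $\ge 3n/4$.'' This is false. The final odd component is, by definition, the connected component of $(V, E_\rho \setminus S)$ that contains $\text{root}(z)$, and whether that component is large or small is exactly the random event the theorem asks you to control: the bad event is precisely that the root has been trapped in a piece of size $\le n/2$. What the isoperimetric count actually gives you is that the partition produced by the tree always contains a \emph{unique} large component $L$ of size $\ge n(1-\tfrac{1}{50d})$; it says nothing about whether the odd component equals $L$. So ``root is uniform on the final odd component'' (true, by Lemma \ref{root_is_hidden_unlifted}) does not become ``root is uniform on a set of size $\ge 3n/4$.'' Relatedly, your final computation $\ge (3n/4 - n/4)/(3n/4) = 2/3$ is not a meaningful conditional probability: the odd component $C_1$ and the large component $L$ are both cells of the same partition, so they are either identical or disjoint. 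Conditioned on the transcript, $\Pr[\text{root}(z) \in L]$ is therefore always $0$ or $1$, and averaging over transcripts just reproduces $\Pr[C_1 = L]$, which is the quantity you set out to bound — the argument is circular.

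What is missing is a mechanism to handle the \emph{adaptivity} of the tree: at each query, the tree learns which sub-piece the root fell into, the conditional distribution of the root contracts to that piece, and the tree may choose its next query based on this. A one-shot averaging over transcripts cannot separate the root from $L$ because $L$ is itself a function of the transcript, which is correlated with the root. The paper's proof handles this with the coin/budget martingale: each time the odd component shrinks by $s$, the tree is charged $s$ coins, isoperimetry (Lemma \ref{isoperimetric}) caps the total charge at $n/(50d)$, and conditional uniformity (Lemma \ref{root_is_hidden_unlifted}) bounds the per-step probability that the root is caught in the cut-off piece by (roughly) its relative size; an induction on the remaining budget (Lemma \ref{limited_budget}) then gives the bound. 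That step-by-step accounting — rather than a global statement about the final partition — is the essential idea your proposal is missing.
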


\textit{Notice the slightly unconventional notation here: $m$ denotes the number of vertices in the graph, and $n$ denotes the number of edges (i.e., the number of variables in the Tseitin contradiction), ($n=md/2= \Theta(n)$)}. \newline

\begin{proof}
    We fix some notation that will be used in the rest of the proof.

    \begin{enumerate}
        \item At time-step $j$, the partial assignment seen by the tree is $\rho_j$ (this includes the edges fixed by $\rho$ and the edges queried by $T$).
        \item $E_j \triangleq \text{fix}(\rho_j)$
        \item $G_j \triangleq (V, E \setminus E_j)$
        \item Define $f_j(v) = 1 + \displaystyle \sum_{(v,w) \in E_j} \rho_j(v,w)$. 
        \item Let $C_j$ be the unique connected component of $G_j$ such that $\displaystyle \sum_{v \in C_j} f_j(v) \equiv 1 \pmod{2}$
        
    \end{enumerate}
    Before proceeding, we make some remarks.
    \begin{remark}
    \label{odd_comp_is_unique}
     After some vertex $v$ is chosen to be the root, it is guaranteed that the parity constraint of all vertices other than $v$ is satisfied. It is also known that all parity constraints are not satisfiable simultaneously (since sum of the right hand sides is odd). So, after we know the value of some edges (say given by the partial assignment $\sigma$), after removing those edges, exactly one connected component has odd $\sum f_{\sigma}$ - and the root lies in this component. This means that after $z$ is sampled according to $\mu$, condition (i) defining membership in $P$ is always satisfied. Only condition (ii) (which stipulates that the odd component must have large size) can possibly be violated \newline
     \end{remark}
     \begin{remark}
      Suppose after querying an edge, in $G_{j+1}$ the component $C_j$ splits into $C_j=A \cup B$. Initially $\displaystyle \sum_{v \in C_j} f_j(v)$ is odd. After querying the edge, exactly one of $\displaystyle \sum_{v \in A} f_{j+1}(v)$ and $\displaystyle \sum_{v \in B} f_{j+1}(v)$ is odd - and the root must lie in the component where the sum is odd.  \newline

      Once the value of the edge is revealed, the decision tree knows which one of $A,B$ contains the root. Thus, the decision tree has made some progress in determining the location of the root. \newline
        
       We want to say that the decision tree can never make too much progress  - our tool here is Lemma \ref{root_is_hidden_unlifted}, which says that the decision tree does not know anything about the root other than the fact that its conditional distribution is uniform on the current odd component.
    \end{remark}
3
We start with a crucial lemma.

\begin{lemma}
\label{isoperimetric}
    At any time-step $j$, the largest connected component of $G_j$ must have size $\geq m \left( 1 - \dfrac{1}{2d} \right)$
\end{lemma}

\begin{proof}
    Suppose not; let time-step $j$ be a time step where all the connected components of $G_j$ have size $< m \left( 1 - \dfrac{1}{2d} \right)$. We then greedily pick a subset of the connected components whose union $T$ has size in the interval $ \left[ \dfrac{m}{4d} , m - \dfrac{m}{4d}\right]$. Cheeger's inequality (Lemma \ref{cheeger}) then implies the cut $E(T, V \setminus T)$ has at least $\dfrac{m}{200}$ edges. \newline
    
    This means the current partial assignment fixes at least $m/200$ edges. However, the current partial assignment can only fix $p+q \leq m/1000$ edges.
\end{proof}

Now, every time the decision tree queries an edge, we make it pay us some coins as follows. Suppose the current partial assignment lies in $P$; the current graph is $G_j$ and the current odd component is $C_j$, and the tree queries the edge $e$.
\begin{itemize}
    \item If removing $e$ keeps $C_j$ connected, the tree does not have to pay anything.
    \item Suppose removing $e$ splits $C_j$ into two components: $C_j = A\cup B$. The value of $e$ is revealed - and it determines in which component of $A,B$ the root belongs to. Suppose the root lies in $A$. If $|A| \leq m/2$, the decision tree does not pay anything and wins the game. Otherwise, the decision tree has to pay $|B|$ coins.
\end{itemize}

(In other words: if, at any point of time, the largest component in $G_j$ isn't the odd component, the decision tree wins the game. Otherwise, if the decision tree shrinks the size of the largest component by $s$, it must pay $s$ coins.)

By Lemma \ref{isoperimetric}, the decision tree only pays $\leq \dfrac{m}{2d}$ coins. So we start by awarding the decision tree a budget of $b= \dfrac{m}{2d}$ coins, and argue (by induction on number of coins remaining) that the decision tree loses the game with high probability. (The decision tree loses the game when it has to pay some coins but it is broke.) \newline

At this point, we allow the decision tree to make as many as queries as it wants -- as long as it maintains that the largest component has size $\geq m \left(1-\dfrac{1}{2d} \right)$ (and therefore it does not use more than $b$ coins). We prove the following statement by inducting on number of coins remaining.

\begin{lemma}
\label{limited_budget}
    Suppose the decision tree has $c$ coins remaining and has not won the game yet. Then, the probability it wins the game is $\leq \dfrac{6c}{5m} + \dfrac{1}{5}$.
\end{lemma}

\begin{proof}
    We induct on $c$. Consider the base case $c=0$: the decision tree has no coins remaining. Let the current odd component be $C$. The first time the tree splits $C$, the root must lie in the smaller component for the decision tree to win. Suppose the tree queries an edge $e$ and $C$ splits into $C= A \cup B$ where $|A| \geq m \left( 1-\dfrac{1}{2d} \right)$. Before querying $e$, the conditional distribution of the root was uniform on $C$ by Lemma \ref{root_is_hidden_unlifted}. Conditioned on the partial assignment revealed before querying $e$, the probability the root lies in $B$ is $|B|/|C|$. The probability the tree wins the game is thus 
    $$ \dfrac{|B|}{|C|} \leq \dfrac{\dfrac{m}{2d}}{m \left( 1 - \dfrac{1}{2d} \right)} \leq \dfrac{1}{5},$$
    so the base case is true. \newline
    Now we handle the inductive step. Suppose the tree has $c$ coins. Suppose the current odd component is $C_j$, with $|C_j| \geq m \left( 1 - \dfrac{1}{2d}\right)$. Suppose the decision tree queries $e$ and removing $e$ splits $C_j$ into $C_j= A \cup B$, where $A$ is the larger component $\left(|A| \geq m \left( 1-\dfrac{1}{2d} \right)\right)$. The tree wins the game at this stage if the root lies in $B$, otherwise it pays $|B|$ coins and proceeds to the next stage. Before querying $e$ the distribution of the root is uniform on $C$, so the probability it lies in $B$ is $\dfrac{|B|}{|A|+|B|} \leq \dfrac{6|B|}{5m}$. If the root does not lie in $B$, the decision tree has $c-|B|$ coins remaining, so then it can win the came with probability at most $\dfrac{6(c-|B|)}{5m} + \dfrac{1}{5}$ by the inductive hypothesis. By union bound, the probability the tree wins the game is at most 
    $$ \dfrac{6|B|}{5m} + \dfrac{6(c-|B|)}{5m} + \dfrac{1}{5}= \dfrac{6c}{5m} + \dfrac{1}{5}$$

\end{proof}

Since the decision tree starts off with $\dfrac{m}{2d}$ coins, it can win with probability at most $\dfrac{6}{10d}+\dfrac{1}{5}$. Since $d>2$, with probability $\geq \dfrac{3}{5}$, the partial assignment seen by the decision tree after $q$ queries lies in $P$.
\end{proof}

Now we have all the ingredients require to prove $(m/2000, m/2000)$-DT-hardness of $\Phi$.
\begin{proof} \textit{(of Theorem \ref{thm: DT_hard})}
Choose the set of partial assignments $P$ as defined in Section \ref{subsubsec: hard_assignments}. We have already established that this set $P$ satisfies all requirements in Definition \ref{def: unlifted_hardness} defining DT-hardness.
\begin{itemize}
    \item \textbf{No falsification} and \textbf{Downward Closure}: Established in Lemma \ref{lem: first_two_cond_satisfied}.
    \item \textbf{Hardness against decision trees: }Established in Theorem \ref{unlifted_tree_hard}.
    
\end{itemize}
    This completes the proof.
\end{proof}

\section{Putting everything together}\label{sec:final_result}

With Theorem \ref{ref: DT_hardness_implies_lower_bounds} and Theorem \ref{thm: DT_hard} in hand, we are now in a position to prove our main result, Theorem \ref{tseitin_lower_bound}.
 
\begin{proof} \textit{(of Theorem \ref{tseitin_lower_bound})}
Let $\Phi$ be the Tseitin contradiction on $G$. Recall that $m$ is the number of vertices in $G$ and $n$ is the number of variables in the unlifted Tseitin contradiction (i.e., $n= |E|$). Theorem \ref{thm: DT_hard} shows that that Tseitin contradiction over such an expander is $(m/2000, m/2000)$-DT-hard. Applying Theorem \ref{ref: DT_hardness_implies_lower_bounds} with $p=q=m/4000$ and denoting by $\text{IP}$ the Inner Product gadget on $(4 + \eta) \log(n)$ bits, we get the following result:
\begin{itemize}
    \item Any size $s$ refutation of $\Phi \circ \text{IP} $ must require depth $\Omega \left( \dfrac{n^2}{\log(s)}\right)$
\end{itemize}

Note that the number of variables in $\Phi \circ \text{IP}$ is $N= O(n \log(n))$. We can also interpret the result as follows:
\begin{itemize}
    \item Any depth $N^{2-\epsilon}$ $\reslin$ refutation of $\Phi \circ \text{IP}$ requires size $\exp(\tilde{\Omega}(N^{\epsilon}))$
\end{itemize}
This is what we wanted to show.
\end{proof}

\printbibliography

\begin{appendices}
\section{Facts about Amortized Closure} \label{app:amortized_closure}

   \begin{proof} \textbf{(of Lemma \ref{closure remains same})}
$\Cl(V)=\Cl(W)$ follows from the definition of closure. \newline

We shall show $\Clh(V)= \Clh(W)$. Since amortized closure of a set of vectors depends only on its span, we can choose a different basis for $V$. Define $V_{in}= \{v \in V| \text{supp}(v) \subseteq \Cl(V)\}$. Let $B_{in}= \{a_1, a_2, \cdots , a_k\}$ be a basis for $V_{in}$. Complete this to a basis for $V$: $B= B_{in} \cup B_{out}$. Let $C= B \cup \{e_{j,k}| j \in \Cl(V), k \in [b] \}$. We have $\text{span}(B)= \text{span}(V)$ and $\text{span}(C)= \text{span}(W)$. So, it suffices to show that $\Clh(B)= \Clh(C)$. \newline

Let $M_{in} \in \F^{k \times nb} $ be the matrix where the rows of $B_{in}$ are stacked. \newline

We shall use the following claim (proof is deferred after this proof). \newline

\begin{claim} \label{claim: closure_is_full}
    There exists a index $\text{col-closure}(i) \in [b]$ for each $i \in \Cl(B)$ such that the columns $(i, \text{col-closure}(i))$ of $M_{in}$ are linearly independent.
\end{claim}

We shall to show that any set of blocks which is acceptable for $C$ is also acceptable for $B$ - this implies $\Clh(B)=\Clh(C)$. Note that the matrix of $B$ looks like the following.
$$
\begin{bmatrix} 
0 & M_{in} & 0 \\
* & * & * \\
* & * & * \\
* & * & * \\
\end{bmatrix}$$
where the entries marked $*$ are arbitrary. The matrix of $C$ looks like the following:

$$\begin{bmatrix} 
0 & M_{in} & 0 \\
* & * & * \\
* & * & * \\
* & * & * \\
0 & I & 0
\end{bmatrix}
$$
Let $S \subseteq [n]$ be a $C$-acceptable set of blocks. For each $i \in S$, there is an index $\text{col}(i)$ such that the corresponding columns in $C$ are linearly independent. We need to find another set of indices $\text{col-new}(i)$ for each $i \in S$ such that the corresponding columns in $B$ are linearly independent. We choose these indices as follows:
\begin{align*}
\text{col-new}(i)= \begin{cases} \text{col}(i) & \text{ if }i \not \in \Cl(B) \\
                                 \text{col-closure}(i) & \text{ if } i \in \Cl(B) 
\end{cases}
\end{align*}
It is easy to see this choice works. Suppose some linear combination of these columns were 0. This linear combination cannot include any $i \in \Cl(B)$ - since the corresponding columns in $M_{in}$ are linearly independent. Thus, this linear combination only includes columns not in $\Cl(B)$ - however, for any $i \in \Cl(B)$, the columns in $B$ and $C$ are identical (upto a fixed number of trailing 0's).
\end{proof}

It remains to prove Claim \ref{claim: closure_is_full}.

\begin{proof} (of Claim \ref{claim: closure_is_full}).

We inductively prove the following statement.

\begin{mdframed}
    Let $V$ be a set of vectors with $\Cl(V)= S$. Let $V_{in}= \text{span} \{ v \in V, \text{supp}(v) \in V\}$. There exists a choice of index $\text{col-clos}(i)$ for each $i \in \Cl(V)$ such that the following statement is true.
    
    Let $T= \{v_1, v_2, \cdots, v_k \}\in V_{in}$ be a set of vectors that span $V_{in}$. Let $M_T$ be the matrix where the rows of $T$ are stacked on top of one other:
    $$
    M_T= \begin{bmatrix}
        \hline v_1 \\ \hline \\ v_2 \\ \hline \cdots \\ \hline \cdots \\ 
        \hline \\
        v_k \\ \hline
    \end{bmatrix}
    $$
    Then, the columns $(i, \text{col}(i))$ in $M_T$ are linearly independent.
\end{mdframed}

Note that this statement is true for one set of vectors that span $V_{in}$ if and only if it is true for all sets of vectors that span $V_{in}$ (the statement is saying $\Clh(V_{in})= \Cl(V)$, and amortized closure and closure depend only on linear span, Fact \ref{fact: depends only on span}). \newline

We prove this statement by induction. The base case when $\Cl(V)=\phi$ is trivial. Consider the case when we a new vector $u$ to $V$ and the closure changes from $S_{old}$ to $S_{new}$.  Let $V_{extra} = \text{span} \{ v \in V \setminus V_{in}, \text{supp}(v) \subseteq S_{new}\}$.Let $w_1, w_2, \cdots , w_l$ be a set of vectors in $V \cup \{u\}$ which span $V_{extra}$. We have $\text{span} \{ v \in V \cup \{u\} | \text{supp}(v) \subseteq S_{new}\} = \text{span}(v_1, v_2, \cdots , v_k, w_1, w_2, \cdots , w_l)$. 

Let

$$M_{old}= \begin{bmatrix}
    \hline v_1 \\
    v_2 \\
    \cdots \\
    \cdots \\
    v_k \\
    \hline
    
\end{bmatrix}$$

$$
M_{extra}= \begin{bmatrix}
    \hline w_1 \\
    w_2 \\
    \cdots \\
    \cdots \\
    w_l \\
    \hline
\end{bmatrix}$$

$$
M_{new}= \begin{bmatrix}
\hline \\
    M_{old}  & 0 & 0 \\
    \\
    \hline
    \\
    & M_{extra} \\
    \\\hline
    
\end{bmatrix}$$

By inductive hypothesis, we have a choice $\text{choice}(i) \in [b]$ for each $i \in S_{old}$ such that the columns $(i, \text{choice}(i))$ of $M_{old}$ are linearly independent.  \newline

We need to find a set of linearly independent columns of $M_{new}$ from distinct blocks. \newline

 Since projecting out the closure keeps the rest of the subspace safe, the set $V_{extra}[\setminus S]$ is a safe set. By definition of extra, for every $i \in S_{new} \setminus S_{old}$, there is a choice of index $\text{ind}(i) \in [b]$ such that the columns $(i, \text{ind}(i))$ of $M_{extra}$ are linearly independent. To complete the induction step, we need to exhibit a choice $\text{new-choice}(i) \in [b]$ for each $i \in S_{new}$ so that the columns $(i, \text{new-choice}(i)) $ in $M_{new}$ are linearly independent. We do this as follows:

 \begin{align*}
     \text{new-choice}(i)= \begin{cases}
         \text{choice}(i) & \text{ if }i\in S_{old} \\
         \text{ind}(i) & \text{ if }i \in S_{new} \setminus S_{old}
         \end{cases}
 \end{align*}

It is easy to see the corresponding columns are linearly independent. If any linear combination is zero, that cannot include any columns from $S_{old}$ - because the corresponding columns of $M_{old}$ are linearly independent, and the newly added columns are 0 on the first $k$ entries. So the linear combination can only include columns from $S_{new} \setminus S_{old} $- and this set of columns is linearly independent by construction of $\text{ind}$.
    
\end{proof}

    \begin{proof} \textbf{(of Corollary \ref{both_are_nice})}
    By Lemma \ref{lem: amortized_closure_continuity} we have that $\Cl(A)= \Cl(B)$, so $A_y, B_y$ are nice affine spaces. It remains to show that $\text{codim}(B_y)= \text{codim}(A_y)+1$.

    Let $A= \{x | Mx=b\}$ and let the set of rows of $M$ be $v_1, v_2, \cdots , v_k$. Let $B= \{x | \tilde{M}x=\tilde{b}\}$ where $\tilde{M}$ has $k+1$ rows, the first $k$ of which are $v_1, v_2, \cdots , v_k$. Let the last row be $w$. \newline

    Let $S= \Cl(A)=\Cl(B)$. The set of defining linear forms of $A_y$ is $v_1[\setminus S], v_2[\setminus S], \cdots ,  v_k[\setminus S] $ and the set of defining linear forms of $B_y$ is $v_1[\setminus S], \cdots , v_k[\setminus S], w[\setminus S]$. We wish to show $w[\setminus S]$ is linearly independent from $v_1[\setminus S], v_2[\setminus S], \cdots , v_k[\setminus S]$. This is equivalent to showing that $w$ does not lie in $\text{span}(\{v_1, v_2, \cdots , v_k\} \cup \{e_{i,j} | i \in S, j \in [b] \})$. FTSOC assume $w$ lies in $\text{span}(\{v_1, v_2, \cdots , v_k\} \cup \{e_{i,j} | i \in S, j \in [b]\})$. Thus, $w= r+s$ for some $r \in \text{span}(v_1, v_2, \cdots , v_k)$ and $s \in \F^n$ such that $\text{supp}(s) \subseteq S$. Since $\Cl$ and $\Clh$ of a set depend only on its linear span, we can WLOG replace $w$ by $s$. Hence, assume $\text{supp}(w) \subseteq S$. \newline

    By Lemma \ref{closure remains same}, we have that
    $$ \Clh (\{v_1, v_2, \cdots , v_k, w \}) \subseteq \Clh (\{v_1, v_2, \cdots , v_k\} \cup \{e_{i,j} | i \in S, j \in [b]\}) = \Clh(\{v_1, v_2, \cdots , v_k\})$$
    This is a contradiction, since we assumed that
    $$ |\Clh(\{v_1, v_2, \cdots, v_k, w)|= |\Clh(v_1, v_2, \cdots , v_k)|+1$$
        
    \end{proof}

\end{appendices}
\end{document}